\newcommand{\abs}[1]{\left\lvert #1 \right\rvert}
\newcommand {\be} {\begin {equation}}
\newcommand {\ee} {\end {equation}}
\newcommand {\bes} {\begin {equation*}}
\newcommand {\ees} {\end {equation*}}
\newcommand{\es}[2] {\begin{equation} \label{#1} \begin{split} #2 \end{split} \end{equation}}
\newcommand{\CP}{\mathbb{CP}}
\newcommand{\Z}{\mathbb{Z}}
\newcommand{\cA}{{\mathcal A}}
\newcommand{\cB}{{\mathcal B}}
\newcommand{\cE}{{\mathcal E}}
\newcommand{\cF}{{\mathcal F}}
\newcommand{\cN}{{\mathcal N}}
\newcommand{\cO}{{\mathcal O}}
\newcommand{\cS}{{\mathcal S}}
\newcommand{\cT}{{\mathcal T}}
\newcommand{\cM}{{\mathcal M}}
\newcommand{\beq}{\begin{equation}}
\newcommand{\eeq}{\end{equation}}
\newcommand\ep{\epsilon}
\def\ie{\begin{equation}\begin{aligned}}
\def\fe{\end{aligned}\end{equation}}
\newcommand{\m}{\mu}
\newcommand{\A}{{\alpha}}
\newcommand{\B}{{\beta}}
\newcommand{\D}{{\delta}}
\newcommand{\mZ}{{\mathbb Z}}
\newcommand{\mR}{{\mathbb R}}
\newcommand{\Mands}{{\bf s}}
\newcommand{\Mandt}{{\bf t}}
\newcommand{\Mandu}{{\bf u}}
\numberwithin{equation}{section}
\def\<{\langle}
\def\>{\rangle}
\def\half{{\scriptstyle \frac 12}}
\def\threeh{{\scriptstyle \frac 32}}
\def\fiveh{{\scriptstyle \frac 52}}
\newtheorem{theorem}{Theorem}[section]
\newtheorem{corollary}{Corollary}[theorem]
\newtheorem{lemma}[theorem]{Lemma}
\begin{document}

\preprint{PUPT-2607\\ QMUL-PH-19-37}

\institution{Weizmann}{Department of Particle Physics and Astrophysics, Weizmann Institute of Science, Rehovot, Israel}
\institution{QueenMary}{School of Physics and Astronomy, Queen Mary University of London, London, E1 4NS, UK}
\institution{DAMTP}{DAMTP, Wilberforce Road, Cambridge CB3 0WA, UK}
\institution{PU}{Joseph Henry Laboratories, Princeton University, Princeton, NJ 08544, USA}
\institution{HarvardCMSA}{Center of Mathematical Sciences and Applications and Jefferson Physical Laboratory, \cr Harvard University, Cambridge, MA 02138, USA}

\title{
Modular Invariance in Superstring Theory From $\cN = 4$ Super-Yang-Mills
}

\authors{Shai M.~Chester,\worksat{\Weizmann} Michael B.~Green,\worksat{\QueenMary, \DAMTP} Silviu S.~Pufu,\worksat{\PU} Yifan Wang,\worksat{\PU,\HarvardCMSA}\\[10 pt] and Congkao Wen\worksat{\QueenMary}}

\abstract{
We study the four-point  function of the lowest-lying half-BPS operators in the ${\cal N} =4$ $SU(N)$ super-Yang-Mills  theory and its relation to the flat-space four-graviton amplitude in type IIB superstring theory.  We work in a  large-$N$ expansion in which the complexified Yang-Mills coupling $\tau$ is fixed.  In this expansion, non-perturbative instanton contributions are present, and the $SL(2, \mathbb{Z})$ duality invariance of correlation functions is manifest.   Our results are based on  a detailed analysis of the sphere partition function of the mass-deformed SYM theory, which was previously computed using supersymmetric localization.  This partition function determines a certain integrated correlator in the undeformed ${\cal N} = 4$ SYM theory, which in turn constrains the four-point correlator at separated points. In a normalization where the two-point functions are proportional to $N^2-1$ and are independent of $\tau$ and $\bar \tau$, we find that the terms of order $\sqrt{N}$ and $1/\sqrt{N}$ in the large $N$ expansion of the four-point correlator are proportional to the non-holomorphic Eisenstein series $E({\scriptstyle \frac{3}{2}},\tau,\bar\tau)$ and $E({\scriptstyle \frac{5}{2}},\tau,\bar\tau)$, respectively. In the flat space limit, these terms match the corresponding terms in the type IIB S-matrix arising from $R^4$ and $D^4 R^4$ contact interactions, which, for the $R^4$ case, represents a check of AdS/CFT at finite string coupling.  Furthermore, we present striking evidence that these results generalize so that, at order $N^{\frac{1}{2}-m}$ with integer $m \ge 0$, the expansion of the integrated correlator we study is a linear sum of non-holomorphic Eisenstein series with half-integer index, which are manifestly  $SL(2,\mathbb{Z})$ invariant.  }

\date{}

\maketitle

\tableofcontents

\pagebreak

\section{Introduction}
\label{introduction}

\subsection{Overview}
\label{overview}

It has long been appreciated that string theory can be given a non-perturbative definition through the anti-de Sitter / conformal field theory (AdS/CFT) correspondence \cite{Maldacena:1997re,Witten:1998qj,Gubser:1998bc}.  In principle, such a definition allows one to calculate properties of closed string theory or M-theory, and hence of quantum gravity, in terms of properties of certain CFTs such as the $SU(N)$ ${\cal N} = 4$ supersymmetric Yang-Mills (SYM) theory, without any explicit reference to standard stringy methods.  
For instance, as originally proposed in \cite{Polchinski:1999ry,Susskind:1998vk,Giddings:1999jq} and further explored in \cite{Penedones:2010ue,Fitzpatrick:2011hu,Fitzpatrick:2011jn}, one can obtain flat space scattering amplitudes of massless particles by taking certain kinematic limits of CFT correlation functions.  While in general it is very difficult to determine the CFT correlation functions in the regime of interest, recent progress combining various analytic bootstrap techniques \cite{Alday:2013opa, Rastelli:2016nze, Aharony:2016dwx, Rastelli:2017ymc,Rastelli:2017udc, Caron-Huot:2018kta,Alday:2014qfa,Alday:2018pdi,Alday:2017xua, Aprile:2017bgs, Aprile:2017qoy, Alday:2017vkk, Aprile:2018efk, Aprile:2019rep, Alday:2019nin, Drummond:2019hel, Giusto:2018ovt, Rastelli:2019gtj, Giusto:2019pxc} and supersymmetry has allowed the determination of CFT correlators, at strong coupling, in a variety of examples arising from top-down AdS/CFT constructions \cite{Chester:2018aca,Chester:2018dga,Binder:2018yvd,Binder:2019mpb,Alday:2018pdi,Binder:2019jwn,Chester:2019pvm}.  From the CFT correlators in these examples, one can then extract the first few terms in the low-energy expansion of scattering amplitudes in superstring theory or M-theory.   In the cases studied so far, these scattering amplitudes had been known from other string theory arguments, and therefore these calculations represent remarkable precision tests of AdS/CFT\@.  

The main focus of this work is on the ${\cal N} = 4$ SYM theory, and, in particular, on how one can reproduce the $SL(2, \Z)$ modular properties of superstring amplitudes from CFT correlators.  Before discussing these modular properties, however, let us briefly review the recent work on the four-point correlators of ${\cal N} = 4$ SYM in the 't Hooft limit, and then discuss how these computations should be modified in order to access $SL(2, \Z)$ invariants.  As we will explain, in order to uncover the modular properties, we should consider a strong coupling limit that is different from the usual 't Hooft limit.

\subsubsection*{CFT correlation functions in 't Hooft limit and type IIB graviton amplitudes}

Refs.~\cite{Alday:2018pdi,Binder:2019jwn,Chester:2019pvm} studied the four-point function of superconformal primary operators in the same multiplet as the stress-energy tensor of the ${\cal N} = 4$ SYM theory, considered in the large-$N$ 't Hooft limit, where the 't Hooft coupling $\lambda \equiv g_\text{YM}^2 N$ is fixed and in the regime $\lambda \gg 1$.  In a normalization in which the disconnected term in the four-point function scales as $N^4 \lambda^0$ at large $N$ and $\lambda$, the leading connected contribution is of order $N^{2} \,\lambda^0$ and reproduces the tree amplitude of type IIB supergravity.   The next correction, of order $N^2\, \lambda^{-\threeh}$, reproduces the ``stringy'' correction corresponding  to the eight-derivative contact interaction of the form $R^4$  (a well-known contraction of four Riemann tensors) in the effective IIB superstring action \cite{Gross:1986iv,Grisaru:1986dk}.  Several other terms of higher order in $1/\lambda$ and in $1/N$ have been matched with terms that arise in the low energy expansion of string perturbation theory \cite{Binder:2019jwn,Chester:2019pvm}.
    
The procedure that allows the above comparisons between string perturbation theory and ${\cal N} = 4$ correlation functions is as follows.  For the four-point function mentioned above, the analytic bootstrap consistency conditions \cite{Rastelli:2016nze,Binder:2019jwn,Chester:2019pvm} determine the position dependence of the tree-level Witten diagrams corresponding to higher derivative terms in the string effective action, up to a number of undetermined coefficients.  At low orders in the derivative expansion, these coefficients can be fully determined, in principle, from constraints derived using supersymmetric localization \cite{Chester:2018aca,Chester:2018dga,Binder:2018yvd,Binder:2019jwn,Chester:2019pvm,Binder:2019mpb}.  Indeed, as shown in \cite{Binder:2019jwn} in the case of the ${\cal N} = 4$ SYM theory, one can obtain integrated constraints on the 4-point functions of the operators in the stress-tensor multiplet by taking four derivatives of the partition function $Z$ of the mass-deformed ${\cal N} = 4$ SYM theory placed on a round $S^4$.  This mass deformation of the ${\cal N} = 4$ SYM theory is  referred to as the ${\cal N} = 2^*$ theory, and its $S^4$ partition function was computed by Pestun using supersymmetric localization \cite{Pestun:2007rz}.  As shown in \cite{Pestun:2007rz}, this partition function takes the form of a finite-dimensional integral over constant values of one of the vector multiplet scalars.  The integrand is a product of a classical contribution, a one-loop contribution, and contributions from instantons located at the north and south poles of $S^4$ \cite{Moore:1997dj,Losev:1997tp,Nekrasov:2002qd,Nekrasov:2003rj}.  

The procedure  implemented in  \cite{Binder:2019jwn}  for analyzing the $\cN=4$ integrated correlation function made use of the constraint coming from the $m=0$ limit of the  mixed derivative (where $m$ is the mass deformation parameter)  
 \es{MixedDer}{
   \tau_2^2 \partial_\tau \partial_{\bar \tau}  \partial_m^2 \log Z \Big|_{m = 0}\,,
 }
where $\tau = \tau_1 + i \tau_2 \equiv \frac{\theta}{2 \pi} + \frac{4 \pi i}{g_\text{YM}^2}$ is the complexified gauge coupling.   As mentioned above, this leads to an explicit calculation of  various terms in the double expansion in $1/N$ and $1/\lambda$ of the CFT correlation function \cite{Binder:2019jwn, Chester:2019pvm}, and these terms reproduced the analogous terms in the  low energy expansion of the type IIB superstring  tree amplitude that are perturbative in $g_s$.

\subsubsection*{$SL(2,\Z)$ duality of CFT correlation functions and type IIB graviton amplitudes}

These connections between $\cN=4$ SYM and type IIB superstring perturbation theory are part of a much richer non-perturbative story that  incorporates the constraints of S-duality. In particular, understanding how the $SL(2,\Z)$ S-duality of the type IIB superstring theory \cite{Hull:1994ys}  arises as the image of   Montonen-Olive $SL(2,\Z)$ duality of  ${\cal N} = 4$ super-Yang-Mills theory \cite{Montonen:1977sn, Witten:1978mh,  Osborn:1979tq}  involves understanding the holographic connection between Yang-Mills instantons and D-instantons  \cite{Banks:1998nr,Bianchi:1998nk,Dorey:1999pd,Dorey:2002ik}, which will be a crucial feature of this paper.
 
Before we go into more details about the $SL(2,\Z)$ invariance properties of the IIB amplitudes and corresponding SYM correlators, we note that neither $\cN=4$ SYM nor the IIB theory on $AdS_5\times S^5$ are entirely $SL(2,\Z)$ invariant. As explained in \cite{Aharony:2013hda}, specifying the SYM theory requires knowing the global form of the gauge group, as well as a discrete theta angle, and general $SL(2, \Z)$ transformations change both the global structure of the gauge group and the discrete theta angle.\footnote{For the $SU(N)$ cases, these theories are labeled as $(SU(N)/\Z_k)_n$ where $k$ is a positive divisor of $N$ and $n$ a $\mZ_k$-valued theta angle \cite{Aharony:2013hda}. 
Under the $S$ generator of $SL(2, \Z)$, the $SU(N)$ theory (no discrete theta angle for this case) is mapped to $(SU(N)/\Z_n)_0$ and vice versa. On the bulk side, the type IIB string theory contains a nontrivial topological sector on AdS$_5$ described by a Chern-Simons-like theory involving the NS and RR 2-form fields  \cite{Witten:1998wy}. The discrete data involved in specifying the boundary SYM theory translates into a choice of boundary conditions for the bulk topological theory, and such boundary conditions transform nontrivially under $SL(2,\Z)$  \cite{Witten:1998wy,Gaiotto:2013nva}.  These topological subtleties are important for understanding the $SL(2,\Z)$ properties of extended objects (such as line operators) or when considering topologically nontrivial backgrounds (such as nontrivial $H^2(M_4,\mZ_N)$ on the 4d boundary manifold $M_4$).}   However,  local correlators in ${\cal N} = 4$ SYM theory  (and correspondingly the amplitudes in IIB string theory) are insensitive to such subtleties, and it is in this sense that we are exploring their $SL(2,\Z)$ invariance properties in this paper.

 \bigskip
 \noindent {\it $SL(2,\Z)$ and the IIB superstring low energy expansion}
 \medskip
 
   The exact coefficients of  higher-derivative interactions in the  low energy expansion of the four-graviton amplitude in IIB superstring theory are $SL(2,\Z)$-invariant functions of the complex string coupling $\tau_s = \chi_s + i / g_s$ that  have been explicitly determined  up to order $D^6R^4$.  These interactions preserve a fraction of the 32 supersymmetries (they are so-called F-terms),  and their form is  severely constrained  by supersymmetry combined with S-duality.  
     For example, the coefficient of the 1/2-BPS $R^4$ interaction is proportional to a non-holomorphic Eisenstein series $E( \frac 32,\tau, \bar \tau)$ \cite{Green:1997tv,Green:1997as,Green:1998by}, which will be defined in Appendix~\ref{app:Einstein}.   When expanded at small string coupling, this Eisenstein series has two terms that are power behaved in $g_s$, corresponding to genus-$0$ and genus-$1$  contributions in string perturbation theory.  In addition, there is an infinite sequence of exponentially  suppressed   non-perturbative terms due to D-instanton effects, where the contribution of a charge-$k$ D-instanton is proportional to $e^{-2\pi k/g_s}$.  Similar comments apply to the coefficient of  the $1/4$-BPS $D^4R^4$ interaction, which has a coefficient proportional to $E(\frac 52, \tau,\bar\tau)$ \cite{Green:1999pu}. Whereas the Eisenstein series satisfy Laplace eigenvalue equations, the coefficient of the $1/8$-BPS interaction, $D^6R^4$, is a novel modular function that satisfies an inhomogeneous Laplace eigenvalue equation  \cite{Green:2005ba} that is also reviewed in Appendix~\ref{app:Einstein}.\footnote{These results were rederived in \cite{Wang:2015jna} using on-shell superamplitude methods \cite{Elvang:2010jv}. Furthermore, more general F-terms involving higher-point  interactions have also been determined \cite{Green:2019rhz}.}

 \bigskip
 \noindent {\it $SL(2,\Z)$ and correlation functions in $\cN=4$ SYM}
 \medskip

In the usual 't Hooft limit, the 't Hooft coupling  $\lambda$ is kept fixed as $N\to \infty$, which requires $g_{\text{YM}}$ to be small.  However, this is incompatible with $SL(2,\Z)$ duality, which has an action on the complex coupling $\tau \equiv \frac{\theta}{2 \pi} + \frac{4 \pi i}{g_\text{YM}^2}$ given by 
\es{sl2act}{\tau \to \tau'= \frac{a\tau+b}{c\tau+d}\,,}
where $a,b,c,d \in  \Z$ and $ad-bc=1$.  In particular, this transformation mixes weak coupling and strong coupling effects.
Therefore, in order to consider the action of $SL(2,\Z)$ on correlation functions in the large-$N$ limit, it is necessary to consider the limit in which $g_{\text{YM}}$ is fixed as $N \to \infty$.  Such a limit had been considered in \cite{Binder:2019jwn}, where it was referred to as the ``very strong coupling limit," and it was also considered in an analogous context in  \cite{Basu:2004dm}.\footnote{Non-'t Hooft limits have been considered before in other gauge theories.  For instance, in the 3d $U(N)_k\times U(N)_{-k}$ ABJM theory \cite{Aharony:2008ug}, the limit in which $k$ is fixed while $N \to \infty$ corresponds to M-theory on $AdS_4 \times S^7/\Z_k$.  In the same theory, a different non-'t Hooft limit, namely that in which $N \to \infty$ with finite $\mu\equiv N/k^5$ considered in \cite{Binder:2019mpb} is somewhat similar to the very strong coupling limit considered here.}   In particular, instanton effects of order  $e^{-8 \pi^2 k /g^2_{\text{YM}}}=e^{-8 \pi^2 k N/\lambda}$, where $k$ (the instanton number) is a positive integer, survive this limit, whereas they  are exponentially suppressed in the usual 't Hooft limit.  In the very strong coupling limit, it is the terms of order $N^{1/2}$, $N^{-1/2}$, and $N^{-1}$ that correspond to the $R^4$, $D^4 R^4$, and $D^6 R^4$ mentioned above.

Using a similar strategy to that outlined above in the 't Hooft limit, we find that the analytic bootstrap constraints combined with the supersymmetric localization constraints coming from \eqref{MixedDer} yield, in the very strong coupling limit, the same Eisenstein series that appear in the low energy expansion of the type IIB four-graviton amplitude.  Indeed, the constraint from \eqref{MixedDer} is sufficient to determine the coefficient of $N^{1/2}$ in the large-$N$ expansion of the CFT four-point function, and, as we will show, this coefficient ends up being proportional to $E(\frac 32, \tau,\bar\tau)$, with precisely the right proportionality factor to match the corresponding term in the superstring amplitude.\footnote{It is worth noting that the fact that the four-point functions of operators in the stress tensor multiplet of ${\cal N} = 4$ SYM, and in particular the quantity \eqref{MixedDer}, are $SL(2, \Z)$ modular invariants is no surprise.  Indeed,  the operators belonging to the stress tensor multiplet transform with well-defined holomorphic and anti-holomorphic  modular weights  $(w,-w)$  under the action of $SL(2,\Z)$ \cite{Intriligator:1998ig,Intriligator:1999ff,Basu:2004dm}. The bottom component of the multiplet we consider has the weight $w=0$,  therefore the corresponding correlators are  $SL(2, \Z)$ invariant. }  This is a full non-perturbative precision test of AdS/CFT!

Note that those contributions that are perturbative in $1/\lambda$ in the  't Hooft limit are also power-behaved in  $g_{\text{YM}}$ in the limit in which $N\to \infty$ and $g_{\text{YM}}$ is fixed.  Thus, the perturbative terms evaluated in earlier work reproduce the  two terms that are power-behaved in $g_{\text{YM}}$ in $E(\frac 32, \tau,\bar\tau)$. Our main challenge is to show that the exponentially suppressed terms in the Eisenstein series can also be reproduced from the ${\cal N} = 4$ SYM theory. Using \eqref{MixedDer}, we find that these exponentially suppressed terms come from considering  the $m=0$ limit of the  instanton contributions to the ${\cal N} = 2^*$ partition function.\footnote{While the strict $m=0$ limit of the ${\cal N} = 2^*$ instanton partition function is trivial, the subleading $m^2$ order is not and contributes to \eqref{MixedDer}.}

At order $N^{-1/2}$ in the large-$N$ expansion, the constraint \eqref{MixedDer} is no longer enough to fully determine the CFT correlation function, but we do find that the integrated correlator \eqref{MixedDer} is proportional to the $E(\frac 52, \tau,\bar\tau)$ modular invariant that appears in the flat space limit  of the IIB amplitude.  Combining the integrated constraint with the flat space string theory answer, we  obtain the complete CFT correlator  up to order $N^{-1/2}$.  While the $E(\frac 32, \tau,\bar\tau)$ modular invariant appearing at order $N^{1/2}$ multiplies a single term of schematic form $R^4$ in the effective field theory in AdS, the $E(\frac 52, \tau,\bar\tau)$ invariant multiplies a linear combination of the $D^4 R^4$ interaction and $R^4 / L^4$, where $L$ is the radius of AdS.

At higher orders in the $1/N$ expansion we do not have sufficient information to determine the four-point correlator in ${\cal N} = 4$ SYM\@.  We will nevertheless argue, more conjecturally, that any given order in the large-$N$ expansion of the integrated correlation function \eqref{MixedDer} is a finite linear sum of non-holomorphic Eisenstein series with rational coefficients.

\subsection{Outline}
\label{outline}

The rest of this paper is organized as follows.  In Section~\ref{SETUP} we give a brief review of the 4-point function of the stress tensor superconformal primary operator in ${\cal N} = 4$ SYM, its relation to the string theory scattering amplitude, and the supersymmetric localization constraint coming from the mixed derivative \eqref{MixedDer}.  Section~\ref{EISENSTEINSPHERE} describes the main technical achievement of this paper, which is the evaluation of the instanton contributions to the integrated correlation function.  These contributions are associated with factors of the Nekrasov partition function \cite{Nekrasov:2002qd,Nekrasov:2003rj}  that enter into the localization result for the mass-deformed $S^4$ partition function and contribute to the $\cN=4$ integrated correlator  described by  \eqref{MixedDer}.  
 We will determine the $k$-instanton contributions to this quantity at orders $N^\half$ and $N^{-\half}$, and show that they match, respectively, the $k$th Fourier modes with respect to  $\theta$ of $E(\frac 32, \tau,\bar\tau)$ and $E(\frac 52, \tau,\bar\tau)$.   The analysis of terms that are higher order in $1/N$ in \eqref{MixedDer} is developed further in Section~\ref{MANYEISENSTEINSERIES}.   We end with a discussion of our results in Section~\ref{DISCUSSION}.  Several technical details are relegated to the Appendices.

\section{Four-point function in ${\cal N} = 4$ SYM}
\label{SETUP}

Let us start with a brief review of the setup of the four-point function of the stress tensor superconformal primary operator in SYM, its relation to the 10d IIB flat space graviton S-matrix and constraints from supersymmetric localization.  For more details, see Ref.~\cite{Binder:2019jwn}.  This operator transforms in the ${\bf 20}'$ of the $SO(6)_R$ R-symmetry, and it can be represented as a traceless symmetric tensor $S_{IJ}(\vec{x})$ with $I, J = 1, \ldots, 6$ as $SO(6)_R$ fundamental indices.  In order to avoid a proliferation of indices, it is customary to contract them with null polarization vectors $Y^I$ satisfying $Y \cdot Y \equiv \sum_{I=1}^6 Y^I Y^I = 0$.  Superconformal symmetry implies that the four-point function of the operator $S(\vec{x}, Y) \equiv S_{IJ}(\vec{x}) Y^I Y^J$ takes the form \cite{Eden:2000bk, Nirschl:2004pa}
 \es{FourPoint}{
  \langle S(\vec{x}_1, Y_1) \cdots S(\vec{x}_4, Y_4) \rangle 
   = \frac{1}{\vec{x}_{12}^4 \vec{x}_{34}^4}
    \left[ 
     \vec{\cS}_\text{free} + {\cal T}(U, V) \vec{\Theta} 
    \right] \cdot \vec{\cB} \,,
 }
where $\vec{x}_{ij}  \equiv \vec{x}_i - \vec{x}_j$, and 
 \es{SThetaB}{
  \vec{\cS}_\text{free} &\equiv \begin{pmatrix}
   1 & U^2 & \frac{U^2}{V^2} & \frac{1}{c} \frac{U^2}V & \frac 1c \frac UV  & \frac 1c U
  \end{pmatrix} \,, \\
   \vec{\Theta} &\equiv \begin{pmatrix}
    V & UV & U & U(U- V - 1) & 1 - U - V & V (V - U - 1) 
   \end{pmatrix} \,, \\
   {\cal B} & \equiv \begin{pmatrix}
    Y_{12}^2 Y_{34}^2 & Y_{13}^2 Y_{24}^2 & Y_{14}^2 Y_{23}^2
     & Y_{13} Y_{14} Y_{23} Y_{24} & Y_{12} Y_{14} Y_{23} Y_{34}
      & Y_{12} Y_{13} Y_{24} Y_{34}
   \end{pmatrix} \,.
 }
Here, $c$ is the conformal anomaly coefficient, which for an $SU(N)$ gauge group equals $c = (N^2 - 1)/4$;  the quantities $U \equiv \frac{ \vec{x}_{12}^2 \vec{x}_{34}^2}{ \vec{x}_{13}^2 \vec{x}_{24}^2}$ and $V\equiv \frac{ \vec{x}_{14}^2 \vec{x}_{23}^2}{ \vec{x}_{13}^2 \vec{x}_{24}^2}$ are the usual conformal invariant cross-ratios;  and $Y_{ij} \equiv Y_i \cdot Y_j$ are $SO(6)_R$ invariants.  Importantly, the only non-trivial information in the correlator \eqref{FourPoint} is encoded in a single function of the conformal cross-ratios, $\cT(U,V)$.

More generally, to describe the holographic correlators, it is simpler to work in Mellin space \cite{Mack:2009gy,Mack:2009mi}.  Let us thus define the  Mellin transform $\cM$ of $\cT$ via
 \es{MellinDef}{
  \cT(U, V)
   = \int_{-i \infty}^{i \infty} \frac{ds\, dt}{(4 \pi i)^2} U^{\frac s2} V^{\frac u2 - 2}
    \Gamma \left[2 - \frac s2 \right]^2 \Gamma \left[2 - \frac t2 \right]^2 \Gamma \left[2 - \frac u2 \right]^2
    \cM(s, t) \,,
 } 
where $u \equiv 4 - s - t$.  Crossing symmetry $\cM(s, t) = \cM(t, s) = \cM(s, u)$, as well as the analytic properties of the Mellin amplitude (for a detailed description see \cite{Binder:2019jwn}), restrict $\cM(s, t)$ to have the following $1/c$ expansion at fixed Yang-Mills coupling:
 \es{MExpansion}{
  \cM(s, t) = \frac{\alpha}{(s - 2) (t - 2) (u - 2)} \frac 1c 
  + \frac{\beta}{c^{7/4}} + \frac { \cM_{\text{1-loop}}(s, t)}{c^2}
   + \frac{\gamma_1 (s^2 + t^2 + u^2) + \gamma_2}{c^\frac{9}{4}} + \cdots \,,
 } 
where the coefficients $\alpha$, $\beta$, $\gamma_i$, etc.~are potentially non-trivial functions of $(\tau, \bar \tau)$.  Here, $\cM_\text{1-loop}$ is the regularized supergravity one-loop amplitude that can be found in \cite{Chester:2019pvm} and will not be discussed here;  in this work, we will instead focus mostly on the $1/c^{7/4}$ and $1/c^{9/4}$ terms, corresponding, respectively, to the $R^4$ and $D^4 R^4$ interaction vertices.  As explained in \cite{Binder:2019jwn}, the constant $\alpha$ can be found as follows.  The free theory contribution in \eqref{FourPoint}, when expanded in conformal blocks, contains twist two operators of all spins.  In the interacting $SU(N)$ gauge theory, however, one expects no operators of twist precisely two, except for those operators belonging to the stress tensor multiplet.  Thus, the conformal block decomposition of the second term in \eqref{FourPoint} must cancel, in part, that of the first term.  A careful analysis shows that this requirement implies \cite{Aprile:2017xsp,Binder:2019jwn}
 \es{Gotalpha}{
  \alpha = 8 \,.
 }
Note that this argument relies crucially on the gauge group being $SU(N)$, and not $U(N)$. A $U(N)$ gauge theory contains a free $U(1)$ sector, and the $S \times S$ OPE contains many operators of twist two beyond those in the stress tensor multiplet.

At each order in $1/c$, one can impose constraints on the coefficients $\beta$, $\gamma_i$, etc.~by either comparing with the (super)graviton four-point scattering amplitude in type IIB string theory in the flat space limit or using the quantity \eqref{MixedDer} (or other similar quantities) derived from supersymmetric localization.  Let us first discuss the constraints from the flat space scattering amplitude, and then those from supersymmetric localization.

\newpage

\subsection{Constraints from the flat space limit}

The IIB four-point scattering amplitude of 10d gravitons and superpartners are restricted by supersymmetry to be proportional to a single function $f(\Mands, \Mandt)$
 \es{ScattAmp}{
  {\cal A}(\Mands, \Mandt) = \cA_\text{SG tree} (\Mands, \Mandt) f(\Mands, \Mandt) \,,
 }
where $\cA_\text{SG tree}$ is the tree-level four-point supergravity amplitude,\footnote{This is given by $\D^{16}(Q)\over {\bf stu}$ in the superamplitude notation where $Q$ denotes the 16-component super-momentum variable.  See, for instance, \cite{Boels:2012ie, Wang:2015jna}. In particular, the component corresponding to the four-graviton scattering  is given by ${R^4 \over \Mands \Mandt \Mandu}$, where $R$ here denotes the linearized Riemann curvature tensor.} $\Mands$ and $\Mandt$ are the Mandelstam invariants.  We will also define $\Mandu \equiv - \Mands - \Mandt$.  
In turn, this function has an expansion at small momentum (more correctly, the expansion is for small values of the dimensionless product between momentum and the string length $\ell_s$) of the form
 \es{fDef}{
   f(\Mands, \Mandt) \equiv
    1 + f_{R^4}(\Mands, \Mandt) \ell_s^6 + f_\text{1-loop}(\Mands, \Mandt) \ell_s^8 + f_{D^4 R^4}(\Mands, \Mandt) \ell_s^{10} + \cdots \,.
 } 
Here, the coefficient function that appears at each order in the expansion may be a non-trivial function of the complexified string coupling $\tau_s = \chi_s + i / g_s$.   In fact, the functions $f_{R^4}$ and $f_{D^4 R^4}$ can be written in terms of non-holomorphic Eisenstein series as \cite{Green:2005ba,Green:1997as,Green:1998by,Green:1999pu}
 \es{fEisenstein}{
  f_{R^4} &= \frac{ \Mands \Mandt \Mandu}{64} g_s^{\frac 32} E(\threeh, \tau_s, \bar \tau_s) \,, \\
  f_{D^4 R^4} &= \frac{\Mands \Mandt \Mandu (\Mands^2 + \Mandt^2 + \Mandu^2)}{2^{11}} g_s^{\frac 52} E(\fiveh, \tau_s, \bar \tau_s) \, .
 }
The Eisenstein series has the following expansion at small $g_s$
(see Appendix~\ref{app:Einstein} for details) 
 \es{EisensteinExpansion}{
  E(r, \tau_s, \bar \tau_s)
   &= \frac{2 \zeta(2 r)}{g_s^r} + 2 \sqrt{\pi} g_s^{r-1} \frac{\Gamma(r - \frac 12)}{\Gamma(r)} \zeta(2r-1) \\
    &{}+ \frac{2 \pi^r}{\Gamma(r) \sqrt{g_s}} \sum_{k\ne 0} \abs{k}^{r-\half}
    \sigma_{1-2r}(|k|) \, 
      K_{r - \frac 12} (2 \pi g_s^{-1} \abs{k}) \, e^{2 \pi i k \chi_s} \, ,
 }
 where the divisor sum $\sigma_p(k)$ is defined as $\sigma_p(k)=\sum_{d>0,{d|k}}  d^p$, and $K_{r - \frac 12}$ is the Bessel function of second kind of index $r-1/2$. 

The relation between the function $f(\Mands, \Mandt)$ in \eqref{ScattAmp} and the Mellin amplitude \eqref{MExpansion} is given by the flat space limit formula \cite{Binder:2019jwn}
 \es{FlatLimit}{
  f(\Mands, \Mandt) = \frac{\Mands \Mandt \Mandu}{2^{11} \pi^2 g_s^2 \ell_s^8}
   \lim_{L / \ell_s \to \infty} L^{14}
    \int_{\kappa - i \infty}^{\kappa + i \infty}
     \frac{d \alpha}{2 \pi i }
      e^\alpha \alpha^{-6} \cM\left( \frac{L^2}{2 \alpha} \Mands, \frac{L^2}{2 \alpha} \Mandt \right)  \,,
 }
where $\kappa > 0$.\footnote{When evaluating this integral, it is useful to note that $\int_{\kappa-i \infty}^{\kappa + i \infty} \frac{d \alpha}{2 \pi i} e^{\alpha} \alpha^{-n} = \frac{1}{\Gamma(n)}$.}  This relation, as well as the AdS/CFT dictionary 
  \es{AdSCFTDict}{
    \tau_s = \tau \,, \qquad
     \frac{L^4}{\ell_s^4} = \lambda=g_\text{YM}^2\sqrt{4c+1}
  }
imply that
  \es{betagammaFlat}{
   \beta(\tau, \bar \tau) = \frac{15}{4 \sqrt{2 \pi^3}}  E(\threeh, \tau, \bar \tau) \,, \qquad
   \gamma_1(\tau, \bar \tau) = \frac{315}{128 \sqrt{2 \pi^5}} E(\fiveh, \tau, \bar \tau) \,, \qquad
   \text{etc.}
  }

\subsection{Constraints from supersymmetric localization}

As explained in \cite{Binder:2019jwn}, supersymmetric localization also imposes constraints on the coefficients of the expansion in \eqref{MExpansion}.  While there are several possible supersymmetric localization constraints, the one studied in \cite{Binder:2019jwn} came from the mixed derivative $\frac{\partial^4 \log Z}{ \partial  \tau  \partial \bar \tau  \partial m^2} \Big|_{m=0}$ of the ${\cal N} = 2^*$ theory on a round $S^4$.

In a large $c$ expansion, a careful analysis of the integrated constraints that follow from $\frac{\partial^4 \log Z}{\partial \tau \partial \bar  \tau   \partial m^2 } \Big|_{m=0}$ as well as the ansatz \eqref{MExpansion} gives \cite{Binder:2019jwn}
 \es{IntConstraint}{
  \frac{   \partial_\tau \partial_{\bar \tau} \partial_m^2 \log Z}
   {\partial_\tau \partial_{\bar \tau}  \log Z} \Bigg|_{m=0}
    = 2 -\frac{ \beta(\tau, \bar \tau) }{5 c^{3/4}} + \frac{C_\text{1-loop}}{c} 
     -  \frac{16\gamma_1(\tau, \bar \tau) + 7 \gamma_2(\tau, \bar \tau)}{35 c^{5/4}} + \cdots \,,
 }
where $C_\text{1-loop}$ is a constant that depends on the precise form of the $\cM_\text{1-loop}$ amplitude that we will not study here.  
We have normalized the integrated four-point correlators by
 \es{dlogZ}{
  \partial_\tau  \partial_{\bar \tau}  \log Z \Big|_{m=0} =  \frac{c}{2 (\Im \tau)^2} \,.
 }

In the following section, using the matrix model for the $\cN = 2^*$ partition function derived by Pestun \cite{Pestun:2007rz}, we will show that, up to an additive ambiguity that is a sum of holomorphic and anti-holomorphic terms in the complexified coupling,
 \es{dlogZdm2}{
   \partial_m^2 \log Z\big\vert_{m=0} = - (4 c + 1) \log \Im \tau  - \frac{\sqrt{2}}{\pi^{3/2}} E(\threeh, \tau, \bar \tau) c^{1/4} 
    + \frac{3}{16 \sqrt{2 \pi^5}} E(\fiveh, \tau, \bar \tau) \frac{1}{c^{1/4}} + \cdots \,.
 }
(See also Eq.~\eqref{dlogZdm2More} in Section~\ref{MANYEISENSTEINSERIES}.)  From comparing this expression to \eqref{IntConstraint} and \eqref{dlogZ}, we then derive constraints on the coefficients appearing in the Mellin amplitude \eqref{MExpansion}.  In particular, at order $c^{-3/4}$, Eqs.~\eqref{IntConstraint}--\eqref{dlogZdm2} imply that $\beta$ should take precisely the same value as derived in \eqref{betagammaFlat} using the flat space limit constraint!  This can be viewed as a derivation of the expression for $f_{R^4}$ given in \eqref{fEisenstein} or as a precision test of AdS/CFT at finite $g_s$.  At order $c^{-5/4}$, we can combine \eqref{IntConstraint}--\eqref{dlogZdm2} with the constraint \eqref{betagammaFlat} obtained from the flat space limit to deduce that 
 \es{gammaRelation}{
  \gamma_2 = -3 \gamma_1 \,.
 }
What remains to be done is to derive Eq.~\eqref{dlogZdm2}, which we carry out in the following section.

\section{Eisenstein series from the mass-deformed $S^4$ partition function}
\label{EISENSTEINSPHERE}

\subsection{Setup}

As shown in \cite{Pestun:2007rz}, up to an overall normalization constant, the mass-deformed partition function of the ${\cal N} = 4$ $SU(N)$ SYM theory (preserving an $\cN=2$ subalgebra) is
 \es{ZFull}{
  Z(m, \tau, \bar \tau) = \int d^{N-1} a \,  \frac{ \prod_{i < j}a_{ij}^2 H^2(a_{ij})}{ H(m)^{N-1} \prod_{i \neq j} H(a_{ij}+ m)}e^{- \frac{8 \pi^2}{g_\text{YM}^2} \sum_i a_i^2} \abs{Z_\text{inst}(m, \tau, a_{ij})}^2 \,,
 }
where $a_{ij}\equiv a_i-a_j$, the integration is over $N$ real variables $a_i$, $i = 1, \ldots, N$, subject to the constraint $\sum_i a_i = 0$, $H(z)$ is the product of two Barnes G-functions, and $Z_\text{inst}$ is the contribution from instantons localized at the poles of $S^4$ that we will come to shortly.  The constrained integral over the $a_i$ can be implemented, for instance, by an integral over $N$ unconstrained $a_i$'s with a $\delta(\sum_i a_i)$ insertion.  Note that the normalization constant that was dropped from \eqref{ZFull} depends on the radius of the sphere, as required by the existence of a conformal anomaly, but it is independent of the coupling $(\tau, \bar \tau)$ and mass $m$.  Consequently, it will drop out of the ratio \eqref{IntConstraint} that is used to related the sphere partition function to the four-point correlator of the ${\bf 20}'$ operator at separated points.

Evaluating $Z(m, \tau, \bar \tau)$ for all $m$ seems to be a complicated task, and we will not pursue it here in full generality.  Instead, what we need is to evaluate the second mass derivative at zero mass, $\partial_m^2 Z(m, \tau, \bar \tau) \big|_{m=0}$.  Let us write the instanton partition function as
 \es{ZInstSum}{
  Z_\text{inst}(m, \tau,  a_{ij}) = \sum_{k=0}^\infty e^{2 \pi i k \tau} Z_\text{inst}^{(k)} (m, a_{ij}) \,,
 } 
with $Z_\text{inst}^{(k)} (m, a_{ij})$ representing the contribution of the $k$-instanton sector and normalized such that  $Z_\text{inst}^{(0)} (m,  a_{ij}) = 1$.  Notably, $Z_\text{inst}(0, \tau, a_{ij}) = 1$ \cite{Pestun:2007rz} so the instantons do not contribute to the sphere partition function at the conformal point.  Then one can argue that
 \es{derMass}{
& \partial_m^2 \log Z\big\vert_{m=0} =\partial_m^2 \log Z\big\vert_{m=0}^\text{pert} +\partial_m^2 \log Z\big\vert_{m=0} ^\text{inst}\,,\\
   &\partial_m^2 \log Z\big\vert_{m=0}^\text{pert}\equiv  \left \langle 
    \partial_m^2 \prod_{i< j} \frac{ H^2(a_{ij})}{H(a_{ij} - m) H(a_{ij} + m)}
   \right\rangle \bigg|_{m=0}\,,\\
   & \partial_m^2 \log Z\big\vert_{m=0} ^\text{inst}\equiv \sum_{k = 1}^\infty  (e^{i k \theta} + e^{- i k \theta})  e^{- \frac{8 \pi^2 k}{g_\text{YM}^2} } \left \langle   \partial_m^2 Z_\text{inst}^{(k)} (m, a_{ij}) \right \rangle \bigg|_{m=0} \,,
 }
where the expectation value is defined to be in the Hermitian matrix model at $m=0$.\footnote{We could consider the expectation values in \eqref{derMass} in either the $SU(N)$ or the $U(N)$ theories, whose partition functions are
\es{Zfree}{
	 Z^{SU(N)} \big \vert_{m=0}
	  =  \int d^{N} a\, \delta\left(\sum_i a_i \right) e^{-\frac{8 \pi^2  }{g_\text{YM}^2} \sum_i a_i^2}  \prod_{i < j} a_{ij}^2
	 \,, \qquad
   Z^{U(N)}\big\vert_{m=0}
	= \int d^{N} a\, e^{-\frac{8 \pi^2  }{g_\text{YM}^2} \sum_i a_i^2}  \prod_{i < j} a_{ij}^2 \,.
	}
Indeed, for any function $F$ that depends only on the differences $a_{ij}$ one can show that 
 $$
  \int d^N a \, \delta\left(\sum_i a_i \right) e^{- \frac{8 \pi^2}{g_\text{YM}^2} \sum_i a_i^2} F(a_{ij})
   = \sqrt{\frac{8\pi}{g_\text{YM}^2 N}} \int d^N a \, e^{- \frac{8 \pi^2}{g_\text{YM}^2} \sum_i a_i^2} F(a_{ij}) \,.
 $$
Thus, even in the presence of an insertion depending only on $a_{ij}$ (as is the case in \eqref{derMass}), the partition functions for the $SU(N)$ and $U(N)$ theories differ by a multiplicative constant that is independent of the operator being inserted.   It follows that normalized expectation values are equal in the $SU(N)$ and $U(N)$ theories. \label{UNFootnote}
} 
The perturbative terms $\partial_m^2 \log Z\big\vert_{m=0}^\text{pert}$ were shown in \cite{Chester:2019pvm} to take the form\footnote{These expressions were given in Eq. (3.1) and (3.20) of \cite{Chester:2019pvm} in the strong coupling expansion, which we can simply convert to the very strong coupling expansion by replacing $\lambda\to g_\text{YM}^2N$.}
\es{pert1}{
 \partial_m^2 \log Z\big\vert_{m=0}^\text{pert}=2N^2 \log g_\text{YM} +\sqrt{N}\left[\frac{16\zeta(3)}{g_\text{YM}^3}+\frac{g_\text{YM}}{3}\right]-\frac{1}{\sqrt{N}}\left[\frac{12\zeta(5)}{g_\text{YM}^5}+\frac{g_\text{YM}^3}{1440}\right]+\dots \,,
}
where further perturbative terms will be discussed in Section~\ref{MANYEISENSTEINSERIES}. These perturbative terms match those of the expected Eisenstein series in \eqref{dlogZdm2} as defined in \eqref{EisensteinExpansion}. To similarly match the instanton terms, we need to show that at large $N$ we have
  \es{Expectation}{
  e^{- \frac{8 \pi^2 k}{g_\text{YM}^2} } \left \langle   \partial_m^2 Z_\text{inst}^{(k)} (m, a_{ij}) \right \rangle \bigg|_{m=0} 
   =&      -   \frac{ 16  \sqrt{N}}{ g_\text{YM}}  k  \,   \sigma_{-2}(k)\, K_1 ( 8 \pi^2 k / g_\text{YM}^2)  \\
         &{}+ \frac{2  }{g_\text{YM} \sqrt{N} }  k^2 \sigma_{-4}(k)\,K_2  ( 8 \pi^2 k / g_\text{YM}^2) 
       + \cdots \, ,
 }
 As a warm-up, let us start with the one-instanton case $k=1$, and then continue with the case of multiple instantons.

\subsection{One-instanton sector}
\label{1-instanton}

The one-instanton contribution $Z_\text{inst}^{(1)}$ is \cite{Nekrasov:2002qd,Nekrasov:2003rj,Pestun:2007rz} (see also \cite{Russo:2013kea})
 \es{onsInstFull}{
  Z_\text{inst}^{(1)}(m, a_{ij}) = - m^2   \sum_{l = 1}^N \prod_{j \neq l} \frac{(a_l - a_j + i )^2 - m^2}{(a_l - a_j) (a_l - a_j + 2 i)} \,.
 }
Taking two derivatives with respect to the mass and evaluating the result at $m=0$ gives
 \es{oneInst}{
  \partial_m^2 Z_\text{inst}^{(1)}  \bigg|_{m=0}
   =  I_1 \,, \qquad
    I_1 \equiv 
     -2 \sum_{l = 1}^N \prod_{j \neq l} \frac{(a_l - a_j + i )^2}{(a_l - a_j) (a_l - a_j + 2 i)} \,.
 }
The quantity $I_1$ can be written in terms of a contour integral as 
 \es{I1Againz}{
  I_1 = 4 \int \frac{dz}{2 \pi} \left[  \prod_j \frac{(z - a_j)^2}{(z - a_j)^2 + 1} - 1 \right]
   = 4  \int \frac{dz}{2 \pi}  \left[ \exp \left( \sum_j \log \frac{(z - a_j)^2}{(z - a_j)^2 + 1} \right) - 1 \right] \,,
 }
where the integration contour is a counter-clockwise contour surrounding the poles at $z = a_j + i$.  Note that the subtraction of $1$ from the integrand does not contribute to the final result, but it does make the integrand decay as $1/z^2$ at $\abs{z} \to \infty$.  Thus, the integration contour in \eqref{I1Againz} can be taken to be the real line.   

We need to evaluate the expectation value of $I_1$ in the Hermitian matrix model \eqref{Zfree}.  In the saddle point approximation, where at leading order in $1/N$ correlation functions factorize, we can approximately write the expectation value of the second expression in \eqref{I1Againz} as
 \es{I1ExpSimple}{
  \langle I_1 \rangle \approx   4 \int_{-\infty}^\infty \frac{dz}{2 \pi}  \left[   \exp \left \langle \sum_j \log \frac{(z - a_j)^2}{(z - a_j)^2 + 1} \right \rangle -1 \right] \, .
 }
We can change variables from $z$ to $x$ where $z = x \sqrt{\lambda}/ (2 \pi) = x \frac{\sqrt{N} g_\text{YM}}{2 \pi}$, and rescale the $a_j$'s similarly:  $a_j = \sqrt{ \frac{\lambda}{2 \pi}} b_j$.   It is a standard result on Hermitian matrix models that, at leading order in large $N$, the $b_j$ become dense and their density is described by the Wigner semicircle law as
 \es{Densityb}{
   \rho(b) = \frac{2}{\pi} \sqrt{1 - b^2} \,, \qquad
    b \in [-1, 1] \,,
 }
normalized so that  $\int_{-1}^1 db\, \rho(b) = 1$.  Making the replacement $\sum_j (\cdots) \to N \int db\, (\cdots) $, we can then write \eqref{I1ExpSimple} approximately as 
 \es{I1ExpSimple2}{
  \langle I_1 \rangle \approx  2 \sqrt{N} \frac{g_\text{YM}}{ \pi}  \int_{-\infty}^\infty \frac{dx}{2 \pi}  \left[ \exp 
   \left( -N \int_{-1}^1 db \, \rho(b)  \log \left( 1 + \frac{4 \pi^2 }{N g_\text{YM}^2 (x - b)^2}\right)   \right) -1 \right] \,.
 }
 The leading contribution to the integral is given by taking $N \to \infty$ in the integrand.  In this limit,  $N \log  \left( 1 + \frac{4 \pi^2}{(x-b)^2 g_\text{YM}^2 N } \right) \to \frac{4 \pi^2}{(x-b)^2 g_\text{YM}^2}$, so 
  \es{I1ExpSimple3}{
   \langle I_1 \rangle \approx   2 \sqrt{N} \frac{g_\text{YM}}{ \pi}  \int_{-\infty}^\infty \frac{dx}{2 \pi}  \left[ \exp 
   \left( -\frac{8 \pi}{g_\text{YM}^2}  \int_{-1}^1 db \,    \frac{\sqrt{1 - b^2}}{  (x - b)^2} \right) -1 \right] \,.
  }
Performing the $b$ integral:
 \es{bIntegral}{
  \int_{-1}^1 db \,    \frac{\sqrt{1 - b^2}}{  (x - b)^2}
   = \begin{cases}
   \pi \left( -1 + \frac{1}{\sqrt{1 - x^{-2}}} \right) \,,  & \text{if $\abs{x} > 1$} \,, \\
   \infty \,, & \text{if $\abs{x} < 1$} \,,
   \end{cases} 
 }
we obtain
  \es{I1Again6}{
   \langle I_1 \rangle
   &\approx  2 \sqrt{N} \frac{g_\text{YM}}{ \pi} e^{\frac{8 \pi^2}{g_\text{YM}^2}} \int_{-\infty}^\infty  \frac{dx}{2 \pi}  
      \left[ e^{- \frac{8\pi^2}{g_\text{YM}^2}    \frac{1}{\sqrt{1 - x^{-2}}}   }  \theta(\abs{x}-1) -  e^{- \frac{8\pi^2}{g_\text{YM}^2}  } \right]  
      \,,
 }
where $\theta(x)$ is the Heaviside theta function.

The integrand is an even function of $x$, so we can just integrate from $x = 0$ to $x = \infty$ and multiply the answer by a factor of $2$.  On this interval, we can further change variables from $x$ to $t$, where 
 \es{tDef}{
  t = \frac{x}{\sqrt{x^2 - 1}} \qquad 
   \Longleftrightarrow \qquad
    x = \frac{t}{\sqrt{ t^2 - 1}} \,,
 }
and obtain
 \es{I1Again7}{
  \langle I_1 \rangle
   &\approx 2 \sqrt{N} \frac{g_\text{YM}}{\pi^2} e^{\frac{8 \pi^2}{g_\text{YM}^2}} 
     \left[-e^{- \frac{8\pi^2}{g_\text{YM}^2} } +
    \int_1^\infty \frac{dt}{( t^2 - 1)^{3/2}} \left( e^{- \frac{8 \pi^2}{g_\text{YM}^2} t} 
      - e^{- \frac{8 \pi^2}{g_\text{YM}^2}} 
     \right)  \right] \,.
 }
We can now use the integral representation of the Bessel $K_1$ function,
 \es{IntFormula}{
  \int_1^\infty dt\, \frac{e^{-a t} - e^{-a}}{(t^2 - 1)^{3/2}} = e^{-a} - a K_1(a)  \,,
 }
to finally write the $\sqrt{N}$ term in the large-$N$ expansion of $\langle I_1 \rangle$ as
 \es{I1Final}{
   \langle I_1 \rangle \big|_{\sqrt{N}} 
   &= -  \sqrt{N}  \frac{16}{g_\text{YM}} e^{\frac{8 \pi^2}{g_\text{YM}^2}} 
      K_1( 8 \pi^2 / g_\text{YM}^2) \,.
 }
Combining with \eqref{oneInst}, this expression implies
 \es{ZRatio}{
  e^{- \frac{8 \pi^2 }{g_\text{YM}^2} } \left \langle  \partial_m^2 Z_\text{inst}^{(1)} (m, a_{ij}) \right \rangle \bigg|_{m=0}  \approx 
   -   \sqrt{N}   \frac{16 K_1 ( 8 \pi^2 / g_\text{YM}^2)}{g_\text{YM}} \,,
 }
in agreement with the expansion of the Eisenstein series---See Eq.~\eqref{Expectation} in the case $k=1$.

Obtaining the term that scales as $1/\sqrt{N}$ in \eqref{Expectation} is not any harder, because this term is suppressed only by a factor of $1/N$ relative to the term we just computed, while the corrections to the approximations made in writing \eqref{I1ExpSimple} and \eqref{I1ExpSimple2} are suppressed by $1/N^2$.  Thus, the next term in \eqref{I1Final} can be obtained by simply expanding \eqref{I1ExpSimple2} to one more order in $1/N$ so that
 \es{NlogRepl}{
  N \log  \left( 1 + \frac{4 \pi^2}{(x-b)^2 g_\text{YM}^2 N } \right) = \frac{4 \pi^2}{(x-b)^2 g_\text{YM}^2}
   - \frac{8 \pi^4}{(x-b)^4 g_\text{YM}^4 N} + \cdots \,,
 }
and evaluating the effect of the second term in \eqref{NlogRepl} in the same way as above.    Plugging \eqref{NlogRepl} into \eqref{I1ExpSimple2}, we obtain
 \es{deltaI1Exp}{
   \langle I_1 \rangle \big|_{1/\sqrt{N}} =     \frac{1}{\sqrt{N}} \frac{32 \pi^2 }{ g_\text{YM}^3 }  \int_{-\infty}^\infty \frac{dx}{2 \pi}  
   \left( \int_{-1}^1 db \,    \frac{\sqrt{1 - b^2}}{  (x-b)^4} \right)    \exp 
   \left( -\frac{8 \pi}{g_\text{YM}^2}  \int_{-1}^1 db \,    \frac{\sqrt{1 - b^2}}{  (x - b)^2} \right)  
   \,.
  }
Performing the $b$ integrals and the $x \to t$ substitution \eqref{tDef}, we find
 \es{delta1Exp2}{
  \langle I_1 \rangle \big|_{1/\sqrt{N}} =
     \frac{1}{\sqrt{N}} \frac{16 \pi^2 }{ g_\text{YM}^3 }  \int_{1}^\infty dt \, 
       t \sqrt{t^2 - 1}
   e^{ -\frac{8 \pi^2}{g_\text{YM}^2} (t-1)}
    =   \frac{1}{\sqrt{N}} e^{\frac{8 \pi^2}{g_\text{YM}^2}} \frac{ 2 K_2 (8 \pi^2 / g_\text{YM}^2) }{ g_\text{YM} } \,.
 }
Adding this expression to \eqref{I1Final} and using the definition of $I_1$ in \eqref{oneInst}, we conclude that
 \es{ZRatioFinal}{
  e^{- \frac{8 \pi^2 }{g_\text{YM}^2} } \left \langle  \partial_m^2 Z_\text{inst}^{(1)} (m, a_{ij}) \right \rangle \bigg|_{m=0}  = 
   -   \sqrt{N}   \frac{16 K_1 ( 8 \pi^2 / g_\text{YM}^2)}{g_\text{YM}}
    + \frac{1}{\sqrt{N}}  \frac{2 K_2 (8 \pi^2 / g_\text{YM}^2) }{ g_\text{YM} }  + \cO(N^{-1})\,,
 }
which is again in agreement with the expectation \eqref{Expectation}. 

Note that in order to go beyond the first two orders in the $1/N$ expansion, one would have to take into account the $1/N^2$ corrections to the saddle point evaluation of expectation values, which we will do later in Section~\ref{MANYEISENSTEINSERIES}.

\subsection{The $k>1$ instanton sector}
\label{MULTIINSTANTON}

We now consider the instanton sector with general $k>1$.  Recall the instanton partition function of $SU(N)$ $\mathcal{N}=2^*$ SYM on $S^4$ can be written as in Eq.~\eqref{ZInstSum}.  The $k$-instanton partition partition $Z^{(k)}_{\rm inst}(m, a_{ij})$ may be further expressed\footnote{Note that the instanton partition function for $\cN=2^*$ SYM was originally obtained for $U(N)$ gauge group \cite{Nekrasov:2002qd,Nekrasov:2003rj}. Later in \cite{Alday:2009aq,Alday:2010vg}, the $SU(N)$ instanton partition function was obtained by factorizing out $U(1)$ contributions  $Z_{U(1)}$ motivated by the AGT correspondence \cite{Alday:2009aq}. Since $Z_{U(1)}$ is holomorphic in $\tau$ and independent of the $a_i$, this is absorbed into the  the holomorphic (anti-holomorphic) ambiguity of \eqref{dlogZdm2}, and does not affect the physical four-point-functions. For this reason, we will simply use the results for $U(N)$ instanton partition functions in the following. We thank Yuji Tachikawa for pointing out the reference \cite{Alday:2010vg}.}  as a sum of contour integrals around poles indicated by a vector $\vec{Y} = \left(Y_1, Y_2, \ldots, Y_N \right)$ of $N$ Young diagrams, 
 \es{ZkSum}{
Z^{(k)}_{\rm inst}(m, a_{ij}) = \sum_{|\vec Y|=k} Z_{k, \vec Y}(m, a_{ij}) \,,
 }
where the Young diagrams $Y_i$ in $\vec{Y}$ are such that the total number of boxes is $k$, namely $|\vec{Y}|\equiv \sum_i |Y_i|=k$.  A Young diagram $Y$ consists of columns of non-increasing height $\lambda_1\geq \lambda_2 \geq \dots$, and the transpose Young diagram $Y^T$ has columns $\lambda^T_1\geq \lambda^T_2\geq \dots$. We will often write them compactly as $Y=[\lambda_1,\lambda_2,\dots]$ and $Y^T=[\lambda_1^T,\lambda_2^T,\dots]$.

Explicitly, $Z_{k, \vec Y}(m, a_{ij})$ is given by \cite{Nekrasov:2002qd,Nekrasov:2003rj}\footnote{Our notation here is related, for instance, to equation (3) of \cite{Okuda:2010ke} by sending $\phi_I \rightarrow -i \phi_I, a_j \rightarrow -i a_j$ and $m_{N} \rightarrow im + \epsilon_+/2$.}
  \es{Ik-gen}{
Z_{k, \vec Y}(m, a_{ij})=& \, {1 \over k!} \left( \epsilon_+ (m^2 + \epsilon^2_-/4) \over 
\epsilon_1 \epsilon_2 (m^2 + \epsilon^2_+/4) \right)^k
\oint \prod_{I=1}^k {d\phi_I\over 2\pi}
\prod_{i=1}^N {(\phi_I -a_j)^2 - m^2  \over (\phi_I  -a_j)^2 +  \epsilon_+^2/4}
\\
&\times
\prod_{I<J}^k { \phi_{IJ}^2[\phi_{IJ}^2 + \epsilon_+^2 ] [\phi_{IJ}^2+(i m-\epsilon_-/2)^2][\phi_{IJ}^2+(i m+\epsilon_-/2)^2]\over [\phi_{IJ}^2 + \epsilon^2_1][\phi_{IJ}^2 + \epsilon^2_2]
[\phi_{IJ}^2+(i m+\epsilon_+/2)^2][\phi_{IJ}^2+(i m-\epsilon_+/2)^2]} \,,
}
where $\phi_{IJ} \equiv \phi_{I}-\phi_{J}$ and $\epsilon_\pm \equiv \epsilon_1\pm \epsilon_2$, with $\epsilon_{1,2}$ being the squashing parameters of $S^4$.  (We have $\epsilon_1 = \epsilon_2 = 1$ for a round sphere.) The multi-dimensional integration contour is determined by the Jeffrey-Kirwan (JK) prescription \cite{1993alg.geom..7001J}, which selects the poles to encircle  based on the choice of an auxiliary vector $\eta$ in $\mR^k$.\footnote{The physics derivation of this contour choice from localization was given in \cite{Hwang:2014uwa}.} As mentioned above, the set of relevant poles are in one-to-one correspondence with the vector of Young diagrams $\vec Y=(Y_1,Y_2,\dots,Y_N)$. Each box in the $j$th Young diagram $Y_j$ is labelled by its position $(\A,\B)$ for positive integers $\A$ and $\B$ denoting the column and the row, respectively, as measured from the bottom left corner (see Figure~\ref{fig:YTcoords}).\footnote{We use the French (as opposed to the English) convention for drawing Young diagrams, where the diagrams extend up and to the right.} The integral then consists of oriented contours surrounding the poles at
\ie \label{YTpoles}
\{\phi_I |1\leq I \leq k \}=\{  a_j + i \epsilon_+/2 + (\A-1)i  \epsilon_1  + (\B-1)i \epsilon_2 ~| (\A ,\B )\in Y_i ,~1\leq i \leq N\} \, .
\fe

For the case of the round $S^4$, which is relevant for our consideration, we set $\epsilon_1=\epsilon_2 =1$.  The $k$-instanton partition function then reduces to
\ie \label{kInstFull}
Z_{k, \vec Y}(m,a_{ij})=&\, {1\over k!} \left( 2m^2\over m^2+1\right)^k
\oint \prod_{I=1}^k {d\phi_I\over 2\pi}
\prod_{i=1}^N {(\phi_I-a_i)^2-m^2\over (\phi_I-a_i)^2+1}
\\
&\times
\prod_{I<J}^k { \phi_{IJ}^2(\phi_{IJ}^2+4)(\phi_{IJ}^2-m^2)^2\over (\phi_{IJ}^2+1)^2 [(\phi_{IJ}-m)^2+1][(\phi_{IJ}+m)^2+1 ]} \,,
\fe 
and the contours now are surrounding 
\ie \label{YTpoles2}
\{\phi_I |1\leq I \leq k \}=\{  a_j  + (\A+\B -1)i  ~| (\A,\B)\in Y_i ,~1\leq i \leq N\} \,. 
\fe
Note some of the simple poles in (\ref{YTpoles}) for certain $\vec{Y}$ are degenerate when $\epsilon_1=\epsilon_2$, giving rise to higher order poles in \eqref{kInstFull}. For certain purposes, (\ref{Ik-gen}) is more convenient to use because it contains only simple poles when $\epsilon_1 \neq \epsilon_2$, and one can set $\epsilon_1=\epsilon_2 =1$ after evaluating the residues. 

In the case of $k=1$, (\ref{kInstFull}) reduces to Eq.~(\ref{onsInstFull}) that we studied in the previous sections.  In this section, we are interested in the cases with $k>1$. For instance, for $k=2$, the relevant poles are enumerated by the Young diagrams $N$-tuples with a total of two boxes:
\ie \label{YT-k2}
\vec Y=\{ \dots,
\yng(2),\dots,
\}, \quad
\vec Y=\{ \dots,
\yng(1,1),\dots,
\}, \quad
\vec Y=\{ \dots,
\yng(1),\dots,\yng(1),\dots
\} \,,
\fe
and the $2$-instanton partition function is then obtained by the sum \eqref{kInstFull} over the contributions from each vector of Young diagrams shown in (\ref{YT-k2}). In general, such a $k$-instanton partition function is very complicated and difficult to analyze, especially when $k$ is large.  However, as we will show below, at the order $m^2$ in a small $m$ expansion (relevant for the four-point correlation function we consider), the $k$-instanton partition function simplifies greatly even for arbitrary $k$.

\subsubsection{Dominance of rectangular Young diagrams}

In this section, we will study the instanton partition function on $S^4$ in the small-mass expansion. It turns out to be convenient to keep $\epsilon_1$ and $\epsilon_2$ general at first and send $\epsilon_1 \rightarrow \epsilon_2$ at the end.\footnote{The same final result can be reached if we instead used (\ref{kInstFull}) where $\epsilon_1 = \epsilon_2$ from the beginning.}   We will argue that the $k$-instanton partition function at order $m^2$, i.e. 
\ie \label{msquare}
I_k \equiv   \partial_m^2 Z^{(k)}_{\rm inst}(m,a_{ij}) \big|_{m=0}\, ,
\fe
only receives  contributions from a vector of Young diagrams   $\vec{Y}$ where all the $N$ diagrams are empty except for one, say $Y_{\hat i}$, which is either rectangular or related by a sequence of {\it partial transpositions} to a rectangular diagram (we define what we mean by partial transpositions below).  In other words, $\vec{Y}$ is related by partial transpositions to the following
\ie \label{rectangular}
Y_i=&\varnothing \qquad {\rm if} \qquad i\neq \hat i \, ,
\\
Y_{\hat i}=&Y_{p\times q} \quad {\rm for} \quad p\leq q\in \mZ_+,~~pq=k \,,
\fe
where $Y_{p\times q}$ denotes the rectangular Young diagram with $p$ rows and $q$ columns. See Appendix~\ref{app:rectd} and Theorem~\ref{thmvan} for the complete proof. In the following we present a brief summary of the proof.

Given a Young diagram $Y=[\lambda_1,\dots,\lambda_l]$, the {\it partial transposition} at horizontal position $\A$, denoted by PT$_\A$, is a local operation on $Y$ that replaces the rightmost block (subdiagram) $P=[\lambda_\A,\lambda_{\A+1},\dots,\lambda_l]$ by its transpose, while leaving the rest of the  Young diagram unchanged, provided that the new diagram is still a Young diagram.  Otherwise, the particular partial transposition PT$_\A$ is defined to act trivially on $Y$ (for a more detailed discussion, see Appendix~\ref{app:rectd}).  In particular PT$_1$ generates the usual transposition of Young diagrams.
In general $\vec Y$ can have many different partners from {\it partial transpositions}.  For instance, for $k=4$, we have two cases of vectors $\vec{Y}$ of the type \eqref{rectangular}: 
\ie \label{k4YT}
\vec Y=\{ \dots,
\yng(4),\dots
\},\qquad
\vec Y =\{ \dots,
\yng(2,2),\dots
\}\,.
\fe
The other Young diagram vectors $\vec Y$ that contribute are either of the form
\ie \label{k4YT1}
\vec Y=\{ \dots,
\yng(1,1,1,1),\dots
\}\, ,
\fe
and related by a single transposition to the first case in \eqref{k4YT}, or are of the form
\ie \label{k4YT2}
\ytableausetup
{boxsize=1.25em}
\ytableausetup
{aligntableaux=bottom}
\vec Y=\{ \dots,
\ydiagram{1,1,2}*[*(orange)]{0+1,0+1}
,\dots
\},\qquad
\vec Y=\{ \dots,
\ydiagram{1,1}*[*(orange)]{1+0,1+2}
,\dots
\}\, .
\fe
The two $\vec{Y}$'s in \eqref{k4YT2} are related to each other by a transposition, and the second one is related to the second diagram in \eqref{k4YT} via a single partial transposition involving the orange block. 

Note that if we set $\epsilon_1 = \epsilon_2 =1$ from the beginning (i.e.~starting with (\ref{kInstFull})), the $\vec{Y}$'s and their partners related by transpositions and partial transpositions degenerate in the sense that they give exactly the same poles in the integrand of the contour integral (see Lemma~\ref{lemmainvset}).

It is useful to introduce the arm-length and leg-length of a box at position $(\A,\B)$  (see Figure~\ref{fig:YTcoords}):  
\ie
h(\A,\B) \equiv \lambda_\B^T - \A\,, \qquad v(\A,\B) \equiv \lambda_\A -\B \,,
\fe
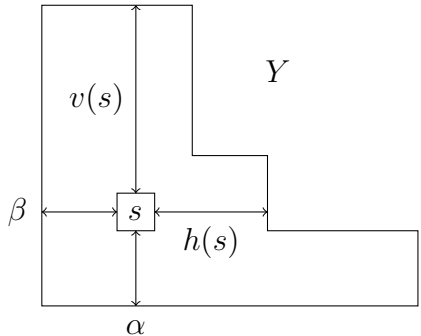
\begin{figure}[htb!]
	\centering
	\begin{tikzpicture}
	\draw (0,0) -- ++(5,0) --++(0,1) -- ++(-2,0) --++(0,1) --++(-1,0)--++(0,2)--++(-2,0) -- cycle
	node[above right=4cm]{$Y$};
	\draw (1,1) -- ++(.5,0) --++(0,.5) -- ++(-.5,0)  -- cycle
	node[above right]{$s$};
	\draw [<->](1.25,1.5) -- ++(0,2.5) node[midway, left=.01cm] {$v(s)$} ;
	\draw [<->](1.25,1.0) -- ++(0,-1) node[near end, below=.3cm] {$\A $} ;
	\draw [<->](1.5,1.25) -- ++(1.5,0) node[midway, below=.05cm] {$h(s)$} ;
	\draw [<->](1.0,1.25) -- ++(-1,0) 
	node[near end, left=.3cm] {$\B $};
	\end{tikzpicture}
	\caption{An example of Young diagram $Y$, the coordinates $(\A,\B)$ for box $s$ and the corresponding arm-length $h(s)$ and leg-length $v(s)$.}
	\label{fig:YTcoords}
\end{figure}
where $\lambda_\beta^T$ is as defined below \eqref{ZkSum}.   In other words, $h(\alpha,  \beta)$ and $v(\alpha, \beta)$ measure the number of boxes from the given box $(\alpha, \beta)$ to the right edge and the top edge, respectively, of the Young diagram.

As we show in Appendix~\ref{app:rectd}, in general the contribution of a given $\vec{Y}$ to the second mass-derivative of the instanton partition function behaves as
\ie \label{ni-counting}
I_{k, \vec{Y}}  \equiv  \partial_m^2 Z_{k, \vec{Y}}(m,a_{ij}) \big|_{m=0} \sim  (\epsilon_-)^{-2+\sum_{i=1}^N \mu(Y_i) } \, ,
\fe
in the limit $\epsilon_- \rightarrow 0$, where
\ie
\mu(Y)=2n_0(Y)-n_{-1}(Y)-n_1(Y)
\fe
with $n_d$ representing the number of the boxes with $h-v =d$.

It is straightforward to see that all rectangular Young diagrams $ {Y}_{p\times q}$ have 
\ie \label{ni-rec}
\mu(Y_{p\times q})=\begin{cases}
 1 \, , \quad {\rm if} \quad  p\neq q \, , 
 \\
 2 \, , \quad {\rm if} \quad  p= q \, ,
\end{cases}
\fe
and, importantly, the operation of partial transposition does not change the value  of  $\mu(Y)$ (see Lemma~\ref{lemmainv}). Therefore, from (\ref{ni-counting}), we see that each $\vec{Y}$ of the type \eqref{rectangular} leads to a finite contribution to the instanton partition function in the small-mass expansion.\footnote{One may worry about the divergence arising from $I_{k, (\dots,Y_{p\times q},\dots)} \sim (\epsilon_-)^{-1}$ for $p\neq q$, but the divergence is canceled out in the sum  $I_{k, (\dots,Y_{p\times q},\dots)}+I_{k, (\dots,Y_{q\times p},\dots)}$ and leads to a finite result.  }

Furthermore, all other types of Young diagram $N$-tuples either have $\sum_{i=1}^N\m(Y_i)>2$, and therefore their contributions manifestly vanish, or $\sum_{i=1}^N\m(Y_i)=2$, but for such a $\vec{Y}$ vector (where every non-empty $Y_i$ is not related to a rectangular diagram by partial transpositions) there is always a cancellation between $\vec{Y}$ and a partner related by a certain involution  that arises from a sequence of partial positions (see Appendix~\ref{app:rectd}).\footnote{Such a cancellation does not happen for \eqref{rectangular} with a square Young diagram  $Y_{p\times p}$, which also have $\mu=2$ as shown in (\ref{ni-rec}).} For instance, consider the following two $\vec{Y}$'s that are related by partial transposition to each another (transposing the orange block):
\ie 
\ytableausetup
{boxsize=1.25em}
\ytableausetup
{aligntableaux=bottom}
\vec Y=\{ \dots,
\ydiagram{1,1,1,1,1}*[*(orange)]{0,0,0,1+1,1+1},\dots
\}\,, \qquad
\vec Y=\{ \dots,
\ydiagram{1,1,1,1,1}*[*(orange)]{0,0,0,0,1+2},\dots
\}\,.
\fe
The above diagrams in $\vec{Y}$'s are not related to any rectangular Young diagrams by partial transpositions, and each has $\mu=2$ (therefore naively would lead to a finite contribution to the instanton partition function). But the finite contributions from these two $\vec{Y}$'s in fact cancel out.  This is a very general phenomenon. Again, we refer a general proof of all these statements to the Appendix~\ref{app:rectd}.

\subsubsection{Instanton partition function at order $m^2$}

After showing the dominance of the rectangular Young diagrams for the leading $m^2$ order of the instanton partition function, let us now compute the residues necessary for evaluating \eqref{Ik-gen} or \eqref{kInstFull}.  
As mentioned above, while in general the contribution of any given Young diagram to the $k$-instanton partition function is rather complicated, the quadratic term in the small mass expansion will end up being quite simple.   

For instance, the instanton partition function for $k=2$ can be computed either from (\ref{Ik-gen}), in which case we have non-trivial contributions only from
\ie \label{k=2-YD}
\vec Y=\{ \dots,
\yng(1,1),\dots
\}\, , \qquad
\vec Y=\{ \dots,
\yng(2),\dots
\}\,.
\fe
Alternatively, we can also use (\ref{kInstFull}), in which case the above two $\vec{Y}$'s give the same set of poles.     For the case of $k=2$, for each Young diagram, there are two ways of distributing $\phi_I$ to the boxes of the Young diagram, which lead to two contributions to the partition function. For instance, using the formula (\ref{Ik-gen}), the two residues that we should evaluate for the first Young diagram in (\ref{k=2-YD}) are given by
\ie
R_1 = {\rm Res}_{\phi_1=a_j+ i \epsilon_+/2} \, {\rm Res}_{\phi_2=\phi_1+ i\epsilon_1} \, , \qquad 
R_2 = {\rm Res}_{\phi_2=a_j+i \epsilon_+/2 } \, {\rm Res}_{\phi_1=\phi_2+ i\epsilon_1} \, .
\fe
For the second Young diagram (which is the conjugate of the first one), the residues are computed in the same way, but with $\epsilon_1 \leftrightarrow \epsilon_2$.  Furthermore, $R_1$ and $R_2$ give the same contributions since they simply exchange $\phi_1 \leftrightarrow \phi_2$.  It is thus convenient to only evaluate the residue ${\rm Res}_{\phi_2=\phi_1+i \epsilon_1}$ in $R_1$, and ${\rm Res}_{\phi_1=\phi_2+ i \epsilon_1}$ in $R_2$, and leave the remaining variable (namely $\phi_1$ in the first case, and $\phi_2$ in the second case) unintegrated.   (The variable that remains is the one corresponding to the box of the Young diagram that sits in the bottom left corner.) Computing the residues explicitly and summing up the contributions from both Young diagrams, we obtain
\ie \label{eq:12-tab}
I_{{1 \times 2}} &=  \oint  {d z \over 2\pi}
\prod_{k_a}
\prod_{j=1}^N {(z -a_j + k_a i )^2\over (z -a_j + k_a i)^2+1 } \\
& ~~~ \times
\left[ 5  + \sum_{j=1}^N
{3 i \over (z-a_j+2 i) (z-a_j+ i) (z-a_j)}
\right] \,, 
\fe
where $k_a =\{0, 1\}$, and we have set $\epsilon_1=\epsilon_2=1$ at the end of the computation and denoted the unintegrated variable as $z$ in (\ref{eq:12-tab}). The contour for the remaining  $z$ integral is around poles at $z=a_j +i$, with $j=1,2, \ldots, N$. Note the same result (\ref{eq:12-tab}) can be obtained by using (\ref{kInstFull}) (i.e. setting $\epsilon_1=\epsilon_2=1$ from the beginning). 

In general, for the case of $k$ instantons, there are $k!$ ways of assigning $\phi_I$'s to a given Young diagram,\footnote{If we use (\ref{kInstFull}) for the computation, one should also divide the symmetry factors due to the degeneracy of the poles (i.e.~according to (\ref{YTpoles2}), the $\phi_I$'s corresponding to the boxes at positions $(\alpha, \beta)$ of a Young diagram with the same values of $\alpha+\beta$  surround  the same poles.).} and we will integrate out all the $(k-1)$ $\phi_I$'s, but again leave the one that is assigned to the bottom left corner box unintegrated (just as the $k=2$ case), and denote it by $z$. The contour for the remaining $z$-integration is then a counter-clockwise contour surrounding the poles at $z=a_j +i$, with $j=1,2, \ldots, N$. 

Let us consider another example before presenting a general $k$-instanton formula. For instance, the $4$-instanton partition function, for which there are two types of  Young diagrams that contribute. The first type is given by
\ie \label{14YT}
\vec Y=\{ \dots,
\yng(1,1,1,1),\dots
\}\, , \qquad
\vec Y=\{ \dots,
\yng(4),\dots
\}\, ,
\fe
and the second kind is
\ie \label{22YT}
\vec Y =\{ \dots,
\yng(2,2),\dots\}\, ,\quad \vec Y=\{ \dots,
\yng(1,1,2),\dots
\},\quad
\vec Y=\{ \dots,
\yng(1,3),\dots
\}\,.
\fe
Computing each contribution explicitly, we again find very compact results with similar structures as those of (\ref{eq:12-tab}) from the two-instanton case. For the Young diagrams in (\ref{14YT}), we find
\begin{align} \label{eq:14-tab}
I_{{1 \times 4}} &=  \oint  {d z \over 2\pi}
\prod_{k_a}
\prod_{j=1}^N {(z -a_j + k_a i )^2\over (z -a_j + k_a i)^2+1 } \\
& ~~~ \times
\left[ {17\over 4}  + \sum_{j=1}^N
{45 i \over 2(z-a_j+4 i) (z-a_j+ 3i) (z-a_j)}
\right] \,, \nonumber
\end{align}
with $k_a =\{0, 1, 2,3\}$, while for (\ref{22YT}) we obtain
\begin{align} \label{eq:22-tab}
I_{{2 \times 2}} &=  \oint  {dz \over 2\pi}
\prod_{k_a}
\prod_{j=1}^N {(z -a_j + k_a i )^2\over (z-a_j + k_a i)^2+1 }  \,,
\end{align}
with $k_a =\{0, 1, 1, 2\}$. 

The above simple structures present in the examples we have studied generalize. Indeed, we  find that the contribution to the $k$-instanton partition function coming from Young diagram vectors of the form \eqref{rectangular}  as well as its partial transpositions is given by 
 \es{eq:pq-tab}{
I_{p \times q} &=  \oint  {dz \over 2\pi}
\prod_{k_a}
\prod_{j=1}^N {(z-a_j + k_a i )^2\over (z-a_j + k_a i)^2+1 } \times
\left[
{4 \over 1 + \delta_{pq}} \left({1\over p^2}+{1\over q^2} \right) \right.
 \\
& \left. + \sum_{j=1}^N
{i f(p, q) \over (z-a_j+(p+q -1) i) (z-a_j+ (q-1) i) (z-a_j+(p-1)i)}
\right] \,,
}
where the integration contour of the left-over $z$ is a counter-clockwise contour surrounding the poles at $z= a_j + i$ (with $j=1,2, \ldots, N$). The $k_a$'s ($k$ of them) are read off from the vector of Young diagrams $\vec Y$ as in \eqref{rectangular}, and they are given by
\begin{align} \label{kas}
k_a =\{0, 1, \cdots, p-1; 1, 2, \cdots, p; \cdots; q-1, q, \cdots, p+q-2 \} \, .
\end{align}
Finally, the function $f(p, q)$ is 
\ie
f(p, q) = {2 (q+p)(q-p)^2 \over pq} \, .
\fe 
This function is symmetric in $p\leftrightarrow q$ and vanishes at $p=q$.  
The formula (\ref{eq:pq-tab}), which is one of our main results, was obtained by studying the pattern of many non-trivial examples.  We will study its large-$N$ expansion in the next section.  For the special case  where the non-trivial rectangular Young diagram in $\vec Y$ is $Y_{1 \times k}$, we provide a proof in Appendix \ref{sec:recursion} using a  recursion relation satisfied by the instanton partition function \cite{Kanno:2013aha, Nakamura:2014nha}.  However, the same recursion relation proof for general $ Y_{p \times q}$ becomes a bit cumbersome. Nevertheless, we have verified the formula (\ref{eq:pq-tab})  explicitly up to $k=20$ instantons. Furthermore,  the result \eqref{eq:pq-tab} clearly has a  structure that is consistent with 
 (\ref{kInstFull}): the constant part ${4 \over 1 + \delta_{pq}} \left({1\over p^2}+{1\over q^2} \right)$ arises only from the second line of (\ref{kInstFull}), whereas the term depending on $z$ and $a_j$ involves expanding the first line of (\ref{kInstFull}) while taking the residues around higher order poles. 

Finally, we remark that given the structure of the $k$-instanton contribution to the non-holomorphic Eisenstein series, especially the appearance of the divisor sum (see (\ref{EisensteinExpansion})), it is not surprising that the relevant $\vec{Y}$'s are only the rectangular ones. As we will show, each  Young diagram  $Y_{p\times q}$ contributes a term  in the divisor sum for a non-holomorphic Eisenstein series (proportional to ${p^{1-2r}}+{q^{1-2r}}$ for $E(r, \tau, \bar \tau)$).\footnote{In particular, the term ${4 \over 1 + \delta_{pq}} \left({1\over p^2}+{1\over q^2} \right)$, which dominates in the large $N$ limit, gives directly the divisor sum $\sigma_{-2}(|k|)$ in $E(\threeh, \tau, \bar \tau)$.} 

\subsubsection{Large-$N$ expansion}

We will now compute the expectation value of $I_{ p \times q}$ in the Hermitian matrix model (\ref{Zfree}), in the large-$N$ expansion. The computation is similar to that in the one-instanton case presented in Section~\ref{1-instanton}, so we will therefore be brief here. In the large-$N$ limit, we have 
\ie \label{eq:pq-tab-2}
\langle I_{p \times q} \rangle & \approx {\sqrt{N} g_{\rm YM} \over 2\pi} \int_{-\infty}^{\infty}  {dx \over 2\pi}
\left( \exp\left[ -N \int^1_{-1} db \rho(b) \sum_{k_a} \log \left( 1+ {4\pi^2 \over N g^2_{\rm YM}  (x - b +{2\pi i \over \sqrt{N} g_{\rm YM} } k_a)^2 }  \right) \right] \right. \cr
& \times 
\left. \left(
{4 \over 1 + \delta_{pq}} {p^2 + q^2\over p^2 q^2}+ i  {({2\pi \over g_{\rm YM}})^3 \over \sqrt{N}} \int_{-1}^1 db \rho(b)
{ f(p, q)  \over g(x, b) } \right) - {4 \over 1 + \delta_{pq}} {p^2 + q^2\over p^2 q^2} \right) \,,
\fe
with $k_a$ given in (\ref{kas}). In the above, we have approximated the sums as integrals, and we have deformed the contour by subtracting an appropriate constant from the integrand.  Again, the density measure $\rho(b)$ obeys Wigner's semi-circle law, given by $\rho(b) = {2\over \pi} \sqrt{1-b^2}$, and 
\ie
g(x, b) = \left[x- b+ {2\pi i \over \sqrt{N} g_{\rm YM} } (p +q- 1) \right] \left[x- b + {2\pi i \over \sqrt{N} g_{\rm YM} } (q-1) \right] \left[ x - b + {2\pi i \over \sqrt{N} g_{\rm YM} } (p-1) \right] \,. 
\fe

Just as the one-instanton case, we expand the integrand in $1/N$, and the integration over $x$ can be separated into different regions: $x \in \{-1,1\}$, $x \in \{-\infty,-1\}$, and $x \in \{1, +\infty\}$.  We find that the leading term in the $1/N$ expansion is given by
\ie
\langle I_{p \times q}\rangle \big{|}_{\sqrt{N}} & ={\sqrt{N} g_{\rm YM} \over 2\pi^2} 
\left(  - {4 \over 1 + \delta_{pq}} {p^2 + q^2\over p^2 q^2} \right) + {\sqrt{N} g_{\rm YM} \over \pi}  \exp\left[ k {8 \pi^2  \over g_{\rm YM}^2 } \right]  
\left( {4 \over 1 + \delta_{pq}} {p^2 + q^2\over p^2 q^2} \right)
 \cr
& \quad \times \int_1^{\infty} {dx \over 2\pi}
\left( \exp\left[ -  {8 k \pi^2  x \over g_{\rm YM}^2 (x^2-1)^{1\over2} }  \right] -   \exp\left[ -k {8 \pi^2  \over g_{\rm YM}^2 } \right] \right) \,.
\fe
After a change of integration variable identical to (\ref{tDef}), it is straightforward to show that $\langle I_{p \times q}\rangle \big{|}_{\sqrt{N}}$ can be expressed in terms of a Bessel $K_1$ function:  
\begin{align}
\langle I_{p \times q}\rangle \big{|}_{\sqrt{N}} = -   {k \over 1 + \delta_{pq}} \left( \frac{1}{p^2} + \frac{1}{q^2} \right)   {16 \sqrt{N}  \over g_{\rm YM} }  \exp\left[ k {8 \pi^2  \over g_{\rm YM}^2 } \right]  
K_1(k {8 \pi^2  \over g_{\rm YM}^2 })  \,.
\end{align}
Summing over all possible rectangular $\vec{Y}$'s for a given $k$-instanton sector (namely all the divisors of $k$), we have
\ie \label{leading-term}
\sum_{pq=k, \, 0< p \leq q}  \langle I_{p \times q}\rangle \big{|}_{\sqrt{N}} = - {16 \sqrt{N}  \over g_{\rm YM} }  k\, \sigma_{-2}(k)  \exp\left[ k {8 \pi^2  \over g_{\rm YM}^2 } \right]  
K_1(k {8 \pi^2  \over g_{\rm YM}^2 }) \,, 
\fe
where we have used
\ie
\sum_{pq=k, \, 0< p\leq q} {1 \over 1 + \delta_{pq}} \left( \frac{1}{p^2} + \frac{1}{q^2} \right) = \sigma_{-2}(k) \,.
\fe

The $N^0$ order term vanishes due to the fact that integrand is odd in $x$. Then, at the next order we have a $1/\sqrt{N}$ term that takes the following form
\begin{align} \label{eq:pq-tab-4}
\langle I_{p \times q} \rangle \big{|}_{1/\sqrt{N}} &= {1\over 1+\delta_{pq}} {16\pi^2 \over \sqrt{N} g^5_{\rm YM}  k^2 }  \exp\left[ k {8 \pi^2  \over g_{\rm YM}^2 } \right]  \int_1^{\infty} {dx }
\left( \exp\left[ -  {8 k \pi^2 x  \over g_{\rm YM}^2 (x^2-1)^{1\over2} }  \right] \right. \cr
& \qquad \times 
\left.  \left[ {c_1 g_{\rm YM}^2 x  \over  (x^2 -1)^{5\over 2} }- {c_2 \pi^2 \over (x^2 -1)^{3}  }   \right]   \right) \,,
\end{align}
where $c_1, c_2$ are given by
\begin{align}
c_1 &= k (p - q)^2 (p + q) (2 p + 2 q-3)  + (k+6 \sum_a k_a^2 ) (p^2 + q^2) \, , \cr
c_2 &= 8 k  (p - q)^2 (p + q) \sum k_a + 16  (p^2 + q^2) (\sum_a k_a)^2 \, ,
\end{align}
with $k_a$ given in (\ref{kas}). 
Again, by a change of integration variable, the integral of $\langle I_{p \times q} \rangle \big{|}_{1/\sqrt{N}}$ reduces to a standard Bessel $K_2$ function, 
\begin{align} \label{eq:pq-tab-5}
\langle I_{p \times q} \rangle \big{|}_{1/\sqrt{N}} &={1\over 1+\delta_{pq}}  {16\pi^2 \over \sqrt{N} g^5_{\rm YM}  k^2 }  \exp\left[ k {8 \pi^2  \over g_{\rm YM}^2 } \right]  
  \left[ {c_1 g_{\rm YM}^2 \over   (k {8 \pi^2  \over g_{\rm YM}^2 })  }-  {3c_2 \pi^2 \over (k {8 \pi^2  \over g_{\rm YM}^2 })^2  }   \right]  K_2(k {8 \pi^2  \over g_{\rm YM}^2 }) \cr
  & ={k^2\over 1+\delta_{pq}}  \left( \frac{1}{p^4} + \frac{1}{q^4} \right)   \,  {2\over \sqrt{N} g_{\rm YM}} \exp\left[ k {8 \pi^2  \over g_{\rm YM}^2 } \right]   K_2(k {8 \pi^2  \over g_{\rm YM}^2 })\, .
\end{align}
Again, taking into account all the relevant contributions from rectangular $\vec{Y}$'s with $k$ boxes, the prefactor in the above formula becomes the divisor sum $\sigma_{-4}(k)$, namely, 
\ie
\sum_{pq=k, \, 0< p\leq q} {1 \over 1 + \delta_{pq}} \left( \frac{1}{p^4} + \frac{1}{q^4} \right) = \sigma_{-4}(k) \,.
\fe
Combining \eqref{eq:pq-tab-5} with the result of the leading large-$N$ term in \eqref{leading-term},  we obtain 
\ie
- {16 \sqrt{N}  \over g_{\rm YM} }  k\, \sigma_{-2}(k)  \exp\left[ k {8 \pi^2  \over g_{\rm YM}^2 } \right]  
K_1(k {8 \pi^2  \over g_{\rm YM}^2 })  +{2\over \sqrt{N} g_{\rm YM}}   k^2 \, \sigma_{-4}(k) \exp\left[ k {8 \pi^2  \over g_{\rm YM}^2 } \right]   K_2(k {8 \pi^2  \over g_{\rm YM}^2 }) \,.
\fe
Therefore we have proven (\ref{Expectation}). In the next section, we will study the higher order terms in the $1/N$ expansion, and show that in fact they are also given by non-holomorphic Eisenstein series.

\section{Eisenstein series at higher orders in $1/N$}
\label{MANYEISENSTEINSERIES}

In this section we will provide additional evidence that the coefficients in the large-$N$ expansion of $\partial_m^2 \log Z \big\vert_{m=0}$, which was derived in the previous sections  to the first couple of orders in $1/N$ in terms of the Eisenstein series shown in \eqref{dlogZdm2}, takes the form of Eisenstein series to all orders in $1/N$. In particular, we propose that, through order $N^{-7/2}$, $\partial_m^2 \log Z \big\vert_{m=0}$ is 
 \es{dlogZdm2More}{
   \partial_m^2 \log Z\big\vert_{m=0} =&2 N^2\log g_\text{YM}-\frac{\sqrt{N}}{\pi^{\frac32}} E( {\scriptstyle {3 \over 2}}, \tau,\bar\tau)+\frac{3}{16\sqrt{N}\pi^{\frac52}}E( {\scriptstyle {5 \over 2}},\tau,\bar\tau)\\
 &+\frac{1}{{N}^{\frac32}}\left[-\frac{13}{2^9 \pi^{\frac32}} E( {\scriptstyle {3 \over 2}},\tau,\bar\tau)+\frac{135}{2^{11}\pi^{\frac72}} E( {\scriptstyle {7 \over 2}},\tau,\bar\tau)\right]\\
 &+\frac{1}{{N}^{\frac52}}\left[-\frac{75}{2^{12} \pi^{\frac52}}E( {\scriptstyle {5 \over 2}},\tau,\bar\tau)+\frac{1575}{2^{14} \pi^{\frac92}} E( {\scriptstyle {9 \over 2}},\tau,\bar\tau)\right]\\
 &+\frac{1}{{N}^{\frac72}}\left[\frac{1533}{2^{18} \pi^{\frac32}}E( {\scriptstyle {3 \over 2}},\tau,\bar\tau)-\frac{80325}{2^{21} \pi^{\frac72}}E( {\scriptstyle {7 \over 2}},\tau,\bar\tau)+\frac{2480625}{2^{23} \pi^{\frac{11}{2}}}E( {\scriptstyle {11 \over 2}},\tau,\bar\tau)\right]\\
 &+O(N^{-\frac92})+\text{(anti)holomorphic ambiguity}\,.
 }
 We can then take derivatives in $\tau$ and $\bar{\tau}$ to obtain the $SL(2,\mathbb{Z})$ invariant quantity   
    \es{dlogZdm2More2}{
 \tau_2^2 \partial_\tau\partial_{\bar\tau} \partial_m^2 \log Z\big\vert_{m=0} =&\frac{ N^2}{4} -\frac{3\sqrt{N}}{2^4 \,\pi^{\frac32}} E( {\scriptstyle {3 \over 2}},\tau,\bar\tau)+\frac{45}{2^8 \sqrt{N}\pi^{\frac52}}E( {\scriptstyle {5 \over 2}},\tau,\bar\tau)\\
 &+\frac{1}{{N}^{\frac32}}\left[-\frac{39}{2^{13} \pi^{\frac32}}E( {\scriptstyle {3 \over 2}},\tau,\bar\tau)+\frac{4725}{2^{15} \pi^{\frac72}}E( {\scriptstyle {7 \over 2}},\tau,\bar\tau)\right]\\
 &+\frac{1}{{N}^{\frac52}}\left[-\frac{1125}{2^{16} \pi^{\frac52}}E( {\scriptstyle {5 \over 2}},\tau,\bar\tau)+\frac{99225}{2^{18} \pi^{\frac92}}E( {\scriptstyle {9 \over 2}},\tau,\bar\tau)\right]\\
 &+\frac{1}{{N}^{\frac72}}\left[\frac{4599}{2^{22} \pi^{\frac32}}E( {\scriptstyle {3 \over 2}},\tau,\bar\tau)-\frac{2811375}{2^{25} \pi^{\frac72}}E( {\scriptstyle {7 \over 2}},\tau,\bar\tau)+\frac{245581875}{2^{27} \pi^{\frac{11}{2}}}E( {\scriptstyle {11 \over 2}},\tau,\bar\tau)\right]\\
 &{}+O(N^{-\frac92})\,.
 }

The first piece of evidence for \eqref{dlogZdm2More} comes from considering the terms that are perturbative in $1/\lambda=1/(g_\text{YM}^2N)$, which as discussed above were computed in \cite{Chester:2019pvm} and take the form\footnote{The expression here includes a further order in $1/N$ and several more orders in $1/\lambda=1/(g_\text{YM}^2N)$ relative to Eqs.~(3.1) and~(3.20) of \cite{Chester:2019pvm}, which can be easily computed using the same methods explained in that work.}
\es{c2}{
 \partial_m^2\log Z\big\vert_{m=0}^\text{pert}=&2 N^2\log g_\text{YM}+\sqrt{N}\left[-\frac{16\zeta(3)}{g_\text{YM}^3}-\frac{g_\text{YM}}{3}\right]+\frac{1}{\sqrt{N}}\left[\frac{12\zeta(5)}{g_\text{YM}^5}+\frac{g_\text{YM}^3}{1440}\right]\\
 &+\frac{1}{N^{\frac32}}\left[\frac{135 \zeta (7)}{8 g_\text{YM}^7}+\frac{g_\text{YM}^5}{215040}-\frac{13 \zeta (3)}{32 g_\text{YM}^3}-\frac{13
   g_\text{YM}}{1536}\right]\\
   &+\frac{1}{N^{\frac52}}\left[\frac{1575 \zeta (9)}{16 g_\text{YM}^9}+\frac{g_\text{YM}^7}{6881280}-\frac{75 \zeta (5)}{64 g_\text{YM}^5}-\frac{5
   g_\text{YM}^3}{73728}\right]\\
   &+\frac{1}{N^{\frac72}}\biggl[\frac{2480625 \zeta (11)}{2048 g_\text{YM}^{11}}+\frac{25 g_\text{YM}^9}{2491416576}-\frac{80325 \zeta (7)}{8192 g_\text{YM}^7}-\frac{17 g_\text{YM}^5}{6291456}  \\
   &{}+\frac{1533
   \zeta (3)}{16384 g_\text{YM}^3}+\frac{511 g_\text{YM}}{262144}\biggr]
    +O(N^{-\frac92})\,.
}
These terms match the perturbative part of the Eisenstein series in \eqref{dlogZdm2More}, as defined in \eqref{EisensteinExpansion}. This match motivates the conjecture that the finite $g_\text{YM}$ expression for $\partial_m^2 \log Z \big\vert_{m=0}$ can be derived to any order in $1/N$ by computing the perturbative  terms as described in \cite{Chester:2019pvm}, and then simply replacing those by their Eisenstein completions using \eqref{EisensteinExpansion}.

Further evidence for \eqref{dlogZdm2More} comes from considering the instanton terms $ \partial_m^2 \log Z\big\vert_{m=0} ^\text{inst}$, which are written as expectation values of sums and products of eigenvalues. In the previous sections, these quantities were computed using the saddle-point expansion, which is valid to leading order in $1/N^2$ (including the subleading in $1/N$ term). Subleading corrections in $1/N^2$ can be computed using topological recursion \cite{Eynard:2004mh,Eynard:2008we}. This method naturally applies to the resolvent $W(y_1,\dots, y_n)$, which is defined as the connected expectation value
\es{W}{
W^n(y_1,\dots, y_n)\equiv N^{n-2} \left \langle \sum_{i_1} \frac{1}{y_1 -a_{i_1} }\cdots \sum_{i_n} \frac{1}{y_n -a_{i_n} } \right \rangle_\text{conn.}\,,
}
with the $1/N^2$ expansion 
\es{W2}{
W^n(y_1,\dots, y_n)\equiv\sum_{m=0}^\infty\frac{1}{N^{2m}} W^n_m(y_1,\dots, y_n)\,.
}
The coefficients $W^n_m$ can be computed for finite $\lambda$ for any $n,m$ in a Gaussian matrix model using a recursion formula in $n,m$, starting with the base case $W^1_0$, as reviewed for the Gaussian $U(N)$ SYM matrix model in \cite{Chester:2019pvm}. (See Footnote~\ref{UNFootnote}.) Topological recursion can then be applied to any expectation value that can be written in terms of the resolvents $W^n(y_1,\dots, y_n)$, for instance by taking derivatives or integrals in terms of $y_i$. Unfortunately, the operators that appear in the instanton terms \eqref{eq:pq-tab} are written as products over a restricted set of eigenvalues $a_i$, which cannot be easily related to $W^n(y_1,\dots, y_n)$. However, by expanding these products for small $a_i$, which is equivalent to a small $g_\text{YM}$ expansion, they can be expressed as an infinite sum of polynomials in $a_i$, whose expectation values can then be easily related to $W^n(y_1,\dots, y_n)$.

Let us begin by discussing the one-instanton case. By explicitly performing the sums and products in $I_1$  (Eq.~\eqref{oneInst}) for many small values of $N$, we find that $I_1$ can be expanded for small $a_i$ as
 \es{expansion}{
  I_1 &= - \frac{4 \Gamma(N + \frac 12)}{\sqrt{\pi} \Gamma(N)} - \frac{3 \Gamma(N - \frac 12)}{2\sqrt{\pi} \Gamma(N + 2)}C_2
   + \frac{315 \Gamma(N - \frac 32)}{32 \sqrt{\pi} \Gamma(N+4)} C_2^2 -\frac{15 (3 - N + 4 N^2) \Gamma(N - \frac 32)}{16 \sqrt{\pi} \Gamma(N+4)} C_4  + \cdots 
 }
where we defined the invariants
 \es{invars}{
  C_p = \sum_{j, k} (a_j - a_k)^p  \,.
   }
The expectation values of these $C_p$ can be related to coefficients of the large $y_i$ expansion of $n$-body resolvents $W^n(y_1,\dots, y_n)$ with $n\leq p$. Since the $C_p$ are degree $p$ polynomials in $a_i$, they must be proportional to $\lambda^{p/2}$ and their $1/N^2$ expansion truncates. For instance, for the $C_p$ that are shown in \eqref{expansion}, using the explicit expressions for $W_m^n$ in Appendix B of \cite{Chester:2019pvm} we find
 \es{Expectations}{
  \langle C_2 \rangle &= \lambda\left[\frac{N^2 }{8 \pi^2}-\frac{1}{8\pi^2}\right] \,,\quad  \langle C_2^2 \rangle =  \lambda^2\left[\frac{N^4 }{64 \pi^4}-\frac{1}{64\pi^4}\right]\,,\quad  \langle C_4 \rangle =  5\lambda^2\left[\frac{N^2 }{128 \pi^4}-\frac{1}{128\pi^4}\right] \,. \\
 }
We can then insert these expressions into the expectation value \eqref{expansion}, set $\lambda=g_\text{YM}^2N$, and expand in $1/N$ to get
\es{ZRatio22}{
\langle I_1\rangle&=\sqrt{N} \left[-\frac{4}{\sqrt{\pi }}-\frac{3 g_\text{YM}^2}{16 \pi ^{5/2}}+\frac{15 g_\text{YM}^4}{2048 \pi
   ^{9/2}}+O\left(g_\text{YM}^6\right)\right]\\
   &+\frac{1}{\sqrt{N}} \left[\frac{1}{2 \sqrt{\pi }}+\frac{15 g_\text{YM}^2}{128
   \pi ^{5/2}}+\frac{105 g_\text{YM}^4}{16384 \pi ^{9/2}}+O\left(g_\text{YM}^6\right)\right]\\
   &+\frac{1}{N^{\frac32}}
   \left[-\frac{1}{32 \sqrt{\pi }}+\frac{69 g_\text{YM}^2}{2048 \pi ^{5/2}}+\frac{2175 g_\text{YM}^4}{262144 \pi
   ^{9/2}}+O\left(g_\text{YM}^6\right)\right]\\
   &+\frac{1}{N^{\frac52}} \left[-\frac{5}{256 \sqrt{\pi
   }}+\frac{285 g_\text{YM}^2}{16384 \pi ^{5/2}}+\frac{24675 g_\text{YM}^4}{2097152 \pi
   ^{9/2}}+O\left(g_\text{YM}^6\right)\right]\\
   &+\frac{1}{N^{\frac72}} \left[\frac{21}{8192 \sqrt{\pi
   }}+\frac{5103 g_\text{YM}^2}{524288 \pi ^{5/2}}+\frac{1158885 g_\text{YM}^4}{67108864 \pi
   ^{9/2}}+O\left(g_\text{YM}^6\right)\right]+O(N^{-\frac92})\,.
 }
 This is consistent according to \eqref{oneInst} with the small $g_\text{YM}$ expansion of
\es{Expectation1}{
  &\left \langle  \partial_m^2 Z_\text{inst}^{(1)} (m, a_{ij}) \right \rangle \bigg|_{m=0} =e^{ \frac{8 \pi^2 }{g_\text{YM}^2} }\Big[ - \sqrt{N} \frac{16 K_1 (8 \pi^2 / g_\text{YM}^2)}{g_\text{YM}}  +  \frac{2 K_2 (8 \pi^2 / g_\text{YM}^2)}{\sqrt{N}g_\text{YM}} \\
 &\quad+\frac{1}{32g_\text{YM}N^{\frac32}}\left[  -13 K_1 (8 \pi^2 / g_\text{YM}^2)+9 K_3 (8 \pi^2 / g_\text{YM}^2) \right] \\
 &\quad+\frac{1}{128g_\text{YM}N^{\frac52}}\left[  -25 K_2 (8 \pi^2 / g_\text{YM}^2)+15 K_4 (8 \pi^2 / g_\text{YM}^2) \right] \\
 &\quad +\frac{1}{g_\text{YM}N^{\frac72}}\left[\frac{1533 K_1\left(\frac{8 \pi ^2}{g_\text{YM}^2}\right)}{16384 }-\frac{5355 K_3\left(\frac{8 \pi ^2}{g_\text{YM}^2}\right)}{32768 }+\frac{2625
   K_5\left(\frac{8 \pi ^2}{g_\text{YM}^2}\right)}{32768 }\right]+ O(N^{-\frac92})\Big] \,,\\
 } 
 which generalizes \eqref{ZRatioFinal} to higher orders in $1/N$, and describes the one-instanton contribution to the Eisenstein series in \eqref{c2}. In Appendix~\ref{highInst}, we similarly match all instantons up to $k=12$ to order $O(g_\text{YM}^4)$.
 
 We should end by pointing out that the $SL(2, \Z)$-invariant expression \eqref{dlogZdm2More2} has the property that the coefficients multiplying $\pi^{-r} E(r, \tau, \bar \tau)$ are all rational numbers.  It would be interesting to understand the significance of this fact in relation to the set of modular functions that appear in the expression for the $\cN = 4$ SYM correlator and superstring scattering amplitudes.

\section{Conclusion}
\label{DISCUSSION}
 
In this paper, we studied the four-point correlator $\langle SSSS \rangle$ of the superconformal primary operator $S$ transforming in the ${\bf 20}'$ of $SO(6)_R$ in the $\cN=4$ SYM theory, in the ``very strong coupling'' limit in which $N$ is sent to infinity at fixed $g_\text{YM}$. In this limit, the action of $SL(2, \Z)$ modular transformations on the $\langle SSSS \rangle$ correlator is manifest.  In particular, we studied the constraints on $\langle SSSS \rangle$ coming from the flat space limit of the IIB string theory amplitudes, and those coming from the integrated four-point function $\tau_2^2\partial_\tau\partial_{\bar\tau}\partial_m^2\log Z\big\vert_{m=0}$.  The latter can be computed using supersymmetric localization.   Starting from Pestun's localization expression \cite{Pestun:2007rz} for the partition function $Z$, we argued that when $\tau_2^2\partial_\tau\partial_{\bar\tau}\partial_m^2\log Z\big\vert_{m=0}$ is expanded in $1/N$, the first two sub-leading terms (of orders $N^\frac12$ and $N^{-\frac12}$, respectively) can be written as the Eisenstein series $E( \threeh,\tau,\bar\tau)$ and $E( \fiveh,\tau,\bar\tau)$, respectively.  Our argument is not completely rigorous because it relies on studying the $k$-instanton contribution for many values of $k$ and deducing the general pattern, but we hope that it should be possible to provide a more rigorous argument in future work.   Using solely the relation between the integrated $\langle SSSS\rangle$ correlator and $\tau_2^2\partial_\tau\partial_{\bar\tau}\partial_m^2\log Z\big\vert_{m=0}$ from \cite{Binder:2019jwn}, we completely determined the $N^{\frac12}$ term in the large $N$ expansion of $\langle SSSS \rangle$.  This term corresponds to an effective $\ell_s^{-2} R^4$ coupling in $AdS_5$, which, in the flat space limit, matches the $\ell_s^{-2} R^4$ contribution to the Type IIB graviton S-matrix as computed at finite string coupling $g_s$ in \cite{Green:1997tv,Green:1997as,Green:1998by}. This is a precision test of AdS/CFT at finite $g_s$!  We then used the $\ell_s^2 D^4R^4$ term in the Type IIB S-matrix, which is also known at finite $g_s$, as well as the $N^{-\frac12}$ term in $\tau_2^2\partial_\tau\partial_{\bar\tau}\partial_m^2\log Z\big\vert_{m=0}$ to completely determine $\langle SSSS\rangle$ at order $N^{-\frac12}$\@.  In Mellin space, this expression contains two polynomial terms, both proportional to $E( \fiveh,\tau,\bar\tau)$, one corresponding to an $\ell_s^2 D^4 R^4$ contact term in $AdS_5$ and one to an $\frac{\ell_s^2}{L^4}  R^4$ term. Finally, using a small $g_\text{YM}$ expansion, we gave non-trivial evidence that each of the terms in the $1/N$ expansion of $\tau_2^2\partial_\tau\partial_{\bar\tau}\partial_m^2\log Z\big\vert_{m=0}$ is a finite linear combination of non-holomorphic Eisenstein series.

The fact that we can derive the full CFT correlator at order $N^\half$ generalizes the analysis in   \cite{Dorey:1998xe,Dorey:1999pd}  of the $k$-instanton measure in  the large-$N$ limit of $SU(N)$ $\cN=4$ SYM theory at lowest order in $g_{\text{YM}}$.  That analysis, which was based on a large-$N$ saddle point solution of the ADHM constraints, demonstrated a number of general features of the holographic relationship between Yang-Mills instantons and D-instantons.
 In particular, in the large-$N$ saddle point approximation the $k$-instanton moduli space (where $k\ll N$)   is dominated by the region in which the positions of the instantons and their scale sizes coincide, so they are represented by a single point in $AdS_5\times S^5$,  with all the instantons in commuting  $SU(2)$ subgroups of $SU(N)$.
This is interpreted holographically as a single D-instanton of charge $k$ in the dual type IIB theory. Furthermore, the divisor sum $\sigma_{-2}(|k|)$ arises as the partition function of the D-instanton matrix model, i.e.~the partition function of 10d $\cN=1$ $SU(k)$ Yang--Mills theory reduced to zero space-time dimensions.   
 
 The result of this large-$N$ ADHM analysis is reproduced in our procedure by the first term in the small-$g_{\text{YM}}$ expansion of the  Bessel function in the $k$th Fourier mode of the Eisenstein series $E(\threeh,\tau, \bar \tau)$ (the function $\cF_k$ defined in \eqref{nonzeroeisen}).  The fact that the dominant contribution to the  Nekrasov partition function in the $m\to 0$ limit has a single cluster of boxes should correspond to properties of the large-$N$ analysis of the ADHM construction.  However, this correspondence is difficult to make precise since our analysis is based on taking a limit of the non-conformal $\cN=2^*$ whereas conformal invariance is explicit in the large-$N$ ADHM construction.   The connection of the D-instanton measure with the $SU(k)$ D-instanton matrix model partition function is also not obvious in our procedure.  Nevertheless, the fact that our procedure packages an infinite number of perturbative corrections to the $k$-instanton contribution into a $K$-Bessel function is a most significant generalization of \cite{Dorey:1998xe,Dorey:1999pd}  and an essential requirement of $SL(2,\Z)$ covariance.

As shown in Section~\ref{MANYEISENSTEINSERIES} our integrated constraint $\tau_2^2 \partial_\tau\partial_{\bar  \tau} \partial_m^2 \log Z\big\vert_{m=0}$ has an expansion in half-integer powers of $1/N$ (apart from the first term).   However, it is well known that the low energy expansion of the string amplitude does also contain even powers of $\ell_s^2$, which lead to integer powers of $1/N$, the most relevant one being the 1/8-BPS interaction $D^6 R^4$. 
  A more complete analysis of the holographic correspondence should therefore also include terms with integer powers of $1/N$.  We expect that such terms will appear in other quantities that can be computed using supersymmetric localization, such as $\partial_m^4 \log Z\big\vert_{m=0}$ or $\partial_b^2 \partial_m^2 \log Z\big\vert_{m=0, b=1}$, where  $b=\epsilon_1/\epsilon_2$ is a parameter that  defines the squashing deformation of $S^4$ that appear in \eqref{Ik-gen} (recall that up to now $\epsilon_1=\epsilon_2=1$).   We expect that determining these three distinct integrated four-point correlation  functions should eliminate the ambiguities in determining the expansion of the $AdS_5\times S^5$ type IIB string theory  amplitudes up to order $D^6R^4$.  In other words, this procedure should uniquely  determine the BPS protected interactions without the need to input known results from flat-space type IIB superstring theory.

We have so far only considered four-point correlators.  It was argued in \cite{Intriligator:1998ig,Intriligator:1999ff} that all four-point functions of short operators are invariant under both the bonus $U(1)$ symmetry as well as $SL(2, \Z)$. As we have seen,  the four-point integrated correlators are linear combinations of non-holomorphic Eisenstein series, which are manifestly $SL(2, \Z)$ invariant. Such a statement would not hold for $(n \geq 5)$-point functions, which transform as modular forms with non-trivial modular weights.\footnote{
A  modular form with  holomorphic and anti-holomorphic weights $(w,w')$ transforms as 
$f^{(w,w')} (\tau' , \bar\tau')\to (c\tau+d)^w(c\bar\tau+d)^{w}    f^{(w,w')}(\tau,\bar\tau) $
under the action of $SL(2,\Z)$. } These correlation functions should correspond to higher-point superstring amplitudes that violate the $U(1)$ R-symmetry of type IIB supergravity. Such  $U(1)$-violating superstring amplitudes (especially those that violate $U(1)$ maximally) are identified in \cite{Boels:2012zr, Green:2019rhz}, and more importantly the F-terms (terms up to the same number of derivatives as $D^6R^4$) have also been determined using maximal supersymmetry and $SL(2, \Z)$ symmetry \cite{Green:2019rhz}.  The coefficients of these $U(1)$-violating interactions are modular forms with non-zero modular weights,  and it would be of interest to understand how they arise from the supersymmetric localization computation. 

Lastly, it is interesting to compare the calculation presented in this paper to calculations done in the 3d ABJM theory \cite{Aharony:2008ug} with gauge group $U(N)_k \times U(N)_{-k}$ and ${\cal N} = 6$ supersymmetry.  An analogous computation that includes non-perturbative contributions can also be performed in that case \cite{Binder:2019mpb}, in the very strong coupling limit in which $N$ is taken to infinity while $N / k^5$ is held fixed.  In this limit, the ABJM theory is dual to type IIA string theory on $AdS_4 \times \CP^3$ at finite string coupling $g_s$.   However, in this case, all the non-perturbative contributions to the type IIA scattering amplitudes of the lowest closed string states vanish.  Thus, in order to obtain non-trivial non-perturbative contributions to CFT correlators that can be matched to string scattering amplitudes, one is led to consider the case of the 4d ${\cal N} = 4$ theory that was studied in the present paper.  Nevertheless, it is worth pointing out that the closest 3d analog of the formulas presented in Section~\ref{MANYEISENSTEINSERIES}  that include resummed instanton contributions would be the mass-deformed partition function of ABJM theory that can be computed \cite{Nosaka:2015iiw} to all orders in the $1/N$ expansion using the Fermi gas method developed in \cite{Marino:2011eh}.

\section*{Acknowledgments}

We thank Ofer Aharony, Massimo Bianchi, Nick Dorey, Francesco Fucito, Francisco Morales, Rodolfo Russo, Yuji Tachikawa,  and  Xi Yin for useful discussions.  SMC is supported in part by a Zuckerman STEM Leadership Fellowship.   MBG has been partially supported by STFC consolidated grant ST/L000385/1 and by a Leverhulme Emeritus Fellowship.   The work of SSP was supported in part by the US NSF under Grant No.~PHY-1820651 and by the Simons Foundation Grant No.~488653.  The work of YW was supported in part by the US NSF under Grant No.\ PHY-1620059 and by the Simons Foundation Grant No.\ 488653. CW is supported by a Royal Society University Research Fellowship No. UF160350.  We would also like to thank the organizers of ``Scattering amplitudes and the conformal bootstrap'' workshop and the Aspen Center for Physics (ACP) for hospitality during the initial stages of this work.   The ACP is supported by National Science Foundation Grant No.~PHY-1607611.   

\appendix

\section{Non-holomorphic Eisenstein series}
\label{app:Einstein}

The first four terms in the low energy expansion of the four-graviton amplitude in the type IIB superstring theory correspond to BPS protected effective interactions that take the form 
\es{fournew}{
 {\cal A} (\Mands,\Mandt)  = & \frac{R^4}{\ell_s^8 g_s^2} \, \left(\frac{1}{\Mands \Mandt \Mandu} + \frac{\ell_s^6 g_s^{\frac 32}}{2^6} \, E(\threeh,\tau,\bar\tau)+  \frac{\ell_s^{10}  g_s^{\frac 52}}{2^{11} }\, (\Mands^2+\Mandt^2+\Mandu^2) E(\fiveh,\tau,\bar\tau) \right.\\
&\left.\qquad+ \frac{\ell_s^{12} g_s^3}{2^{12}}  (\Mands^3+\Mandt^3+\Mandu^3) \cE(\tau,\bar\tau) + \dots \right)\,,}
where $R$ signifies the linearised Weyl curvature tensor, which has the form $R_{\mu\nu\rho\sigma}= k_{[\mu}\epsilon_{\nu]} k_{[\rho} {\epsilon}_{\sigma]}$ (where $[\cdots]$ denotes anti-symmetrization of the indices),  where $k_\mu$ is a null ten-dimensional momentum and  $\epsilon_{\nu}  {\epsilon}_{\sigma}$ is a graviton polarization. The symbol  $R^4$ denotes the particular contraction of four curvature tensors that is implied by ten-dimensional  $\cN=2$  supersymmetry.

The function $E(r,\tau,\bar\tau)$ is a non-holomorphic Eisenstein series, which is a modular function of $(\tau, \bar \tau)$ that satisfies the Laplace eigenvalue equation
\es{lapdef}{\left(\Delta_\tau - r(r-1)\right) E(r,\tau,\bar\tau)=0\,,
} 
where the hyperbolic laplacian is defined by  $\Delta_\tau= 4\tau_2^2\partial_\tau\partial_{\bar\tau} = \tau_2^2\, (\partial_{\tau_1}^2+\partial_{\tau_2}^2)$.
Assuming moderate growth as $\tau_2\to \infty$ (in other words, assuming $E(r,\tau,\bar\tau)$ grows no faster than $\tau_2^a$ for any $a$,  so it is no more singular than perturbative string theory in the $g_s\to 0$ limit) the unique $SL(2,\Z)$-invariant solution to this equation is\footnote{It is often convenient to define the non-holomorphic Eisenstein series in terms of a Poincar\'e series  $\hat E(r,\tau,\bar\tau) =\half \pi^{-r}\,\Gamma(r) E(r,\tau,\bar\tau)$, where  
$$\hat E(r,\tau,\bar\tau) =\pi^{-r}\Gamma(r) \zeta(2r)\frac{1}{2} \sum_{\gamma\in \Gamma_\infty\backslash SL(2,\Z)} \Im(\gamma(\tau))^r,$$
and
$ \Gamma_\infty=\left\{\begin{pmatrix} 1&n \cr 0&1\cr  \end{pmatrix}  \Big|  \, n  \in \Z \right\}.$ The expression $\hat E(r,\tau,\bar\tau)$ is manifestly invariant under $SL(2,\Z)$ and satisfies the important functional relation $\hat  E(r,\tau,\bar\tau)=\hat E(1-r,\tau,\bar\tau)$. }
\es{eisensol}{
E(r,\tau,\bar\tau) = \sum_{(m,n)\,\neq\,(0,0)}\frac{\tau_2^r}{|m\tau+n|^{2r}}    \,.}
It is straightforward to show that $E(r,\tau,\bar\tau)$ is invariant under a $SL(2,\Z)$ transformation, 
\ie
E(r,\tau,\bar\tau) \to E(r,\tau',\bar\tau') \, , \quad {\rm with} \quad \tau \rightarrow \tau' = \frac{a \tau+b }{c \tau+d} \, ,
\fe
for $a, b, c,d \in \Z$ and $ad-bc=1$. 

A non-holomorphic Eisenstein series has an expansion in Fourier modes of the form
\es{eisenfourier}{
E(r,\tau,\bar\tau)    =   \sum_{k\in\Z} \cF_k(r,\tau_2) \, e^{2\pi i k \tau_1}\,,}
where the zero mode consists of two power behaved terms,
\es{eisenzero}{
\cF_0 (r,\tau_2)  = 2\zeta(2r)\,  \tau_2^r \  +  \ \frac{2\sqrt \pi \,\Gamma(r-\frac{1}{2}) \zeta(2r-1)}{\Gamma(r)}\, \tau_2^{1-r} \,,}
and the non-zero modes  are proportional to $K$-Bessel functions,
\es{nonzeroeisen}{
\cF_k(r,\tau_2)  =   \frac{2\,\pi^r}{\Gamma(r)}\,  |k|^{r-\half} \, \sigma_{1-2r}(|k|)
\sqrt{\tau_2}\,K_{r-\half}(2\pi |k|\tau_2) \,, \qquad  k\neq 0\,,}
where the divisor sum is defined by  
\es{divisorsum}{
\sigma_p(k)=\sum_{d>0,{d|k}}  d^p} 
for $k>0$, and $\sigma_{-p}(k) = k^{-p}\, \sigma_p(k)$. 

The two power-behaved terms  in $\cF_0(r,\tau_2)$  in \eqref{eisenzero}  correspond to tree-level and $(r-\half)$-loop contributions in string perturbation theory.  Using the asymptotic behavior of the $K$-Bessel function
\es{Kasymptotic}{
K_\nu(z) \sim \sqrt{\frac{\pi}{2z}}   \, e^{-z} \left(1+  O\left( \frac{1}{z} \right)    \right) \,, }
we see that the non-zero mode $\cF_k(r,\tau_2)$ behaves as $e^{-2\pi |k| \tau_2}$ and has the form of a $k$ D-instanton contribution.

The terms proportional to $E(\threeh,\tau,\bar\tau)$ and $E(\fiveh,\tau,\bar\tau)$ in \eqref{fournew}  are coefficients of $R^4$ and $D^4R^4$ interactions in the type IIB low energy effective action.  These are, respectively, $1/2$-BPS and $1/4$-BPS interactions.  The last term in  \eqref{fournew} corresponds to  a term proportional to the $1/8$-BPS interaction $D^6R^4$,  with a coefficient $\cE(\tau,\bar\tau)$ that satisfies the inhomogeneous Laplace eigenvalue equation \cite{Green:2005ba,Wang:2015jna}
\es{d6r4}{(\Delta_\tau-12)\, \cE(\tau,\bar\tau)=-E^2(\threeh,\tau,\bar\tau)\,.}
The solution to this equation  \cite{Green:2014yxa}  is qualitatively different from an Eisenstein series.  The zero mode is of the form in the large-$\tau_2$ limit,
\es{d6r4sol}{\cF_{0,\cE}(\tau_2):= \int_{-\half}^\half d\tau_1 \cE(\tau,\bar\tau) =   {2\,\zeta(3)^2\,\over3}\,\tau_2^3 \ + \ {4\,\zeta(2)\,\zeta(3)\over3}\,\tau_2 \
  + \ {4\,\zeta(4)\over \tau_2} +O(e^{-4\pi\tau_2})\,.}
The  power-behaved contributions correspond to  string perturbation theory up to genus three.  The symbol $O(e^{-4\pi |k|\tau_2})$ denotes a specific infinite sum of D-instanton--anti D-instanton contributions with zero total instanton number (details of which are in   \cite{Green:2014yxa}).   Similarly,  each mode of non-zero mode number $k$  has the form of a sum of D-instanton--anti D-instanton contributions with instanton numbers $k_1$ and $k_2$ satisfying $k_1+k_2=k\ne0$.    We see in the main text that such a  $(\Mands^3+\Mandt^3+\Mandu^3) \, R^4$   contribution  does not arise from our analysis of the flat-space limit of  $\tau_2^2 \partial_\tau  \partial_{\bar \tau}   \partial^2_{m}\, \log Z$ since its contribution to the integrated correlation  function vanishes.

The four terms in the low energy expansion of the four-graviton amplitude explicitly shown in \eqref{fournew} correspond to local BPS interactions that are fully determined by supersymmetry, while higher derivative terms are not expected to be protected and have not been fully determined.

\section{Rectangular dominance}
\label{app:rectd} 

As discussed in the main text, the full Nekrasov partition function for the mass-deformed $\cN=2^*$ $SU(N)$ SYM with squashing  parameters $\epsilon_{1,2}$ at instanton number $k$ is given by a sum over $N$-tuples of Young diagrams $\vec Y=(Y_1, Y_2, \ldots, Y_N)$ with $k$ boxes in total,
\es{genNek}{
	& Z^{(k)}_{\rm inst}(m,a_{ij},\epsilon_{1,2})= 
	\sum_{|\vec Y|=k} Z_{\vec Y}
	\\ 
	&	Z_{\vec{Y}}  \equiv 
	\frac{ \prod_{i,j=1}^N	 
		\prod_{s\in Y_i} (E(a_{ij},Y_i,Y_j,s)-i m -\epsilon_+/2)\prod_{t\in Y_j} (-E(a_{ji},Y_j,Y_i,t)-i m +\epsilon_+/2)
	}{
	\prod_{s\in Y_i}  E(a_{ij},Y_i,Y_j,s) \prod_{t\in Y_j} (\epsilon_+-E(a_{ji},Y_j,Y_i,t)) } \,,
}
where $\epsilon_\pm \equiv \epsilon_1\pm \epsilon_2$  as in the main text, and
\ie
E(a_{ij},Y_i,Y_j,s) \equiv i a_{ji}-\epsilon_1 h_j(s)+\epsilon_2 (v_i(s)+1) \,.
\fe
Here, $s$ labels a box $(\A,\B)$ ($\A$-th column and $\B$-th row) in a given Young diagram as Figure~\ref{fig:YTcoords}, and $h_i(s)$ and $v_i(s)$ denote the arm-length and leg-length, respectively, of the box $s$ in the diagram $Y_i$.  (Each individual Young diagram $Y$ consists of columns of non-increasing heights $\lambda_1\geq \lambda_2 \geq \dots \geq \lambda_l$ with $\lambda_l\geq 1$. The transpose (conjugate) diagram $Y^T$ has columns of heights $\lambda^T_1\geq \lambda^T_2\geq \dots\geq \lambda^T_m$ with $\lambda^T_m\geq 1$.
Then the arm-length $h$ and leg-length $v$  of the box $s$ in $Y$ are given by (see Figure~\ref{fig:YTcoords})
\ie
h(s)=\lambda^T_\B-\A \,, \qquad v(s)=\lambda_\A-\B.
\fe
Note that the definitions of $h$ and $v$ extend  beyond boxes in $Y$ to the entire quadrant $(\A,\B)\in \mZ_+^2$ in the obvious way. In particular they can be negative (e.g.~when $Y$ is empty).)

In the rest of this appendix, we prove the following theorem:
\begin{theorem}
	For the second mass derivative of the instanton partition function at $m=0$
	\ie
	I_{k } (a_{ij}) \equiv  \partial_m^2 \left( \lim_{\ep_{1,2}\to 1} Z^{(k)}_{\rm inst}(m,a_{ij},\epsilon_{1,2})\right)   \bigg|_{m=0}\,  
	\label{m2nek}
	\fe
	only the summands in \eqref{genNek} with $\vec Y$ of the following type
	\ie
	Y_i=&\varnothing~{\rm if}~i\neq \hat i 
	\\
	Y_{\hat i}=&\underbrace{[p,p,\dots,p]}_{q}~{\rm with~}pq=k~{\rm for~}p,q\in \mZ_+,~
	\fe
	and those with $Y_{\hat i}$ replaced by its {\it partial transpositions} (which we define next) contribute. 
	\label{thmvan}
\end{theorem}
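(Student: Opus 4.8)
The plan is to prove the theorem termwise in the sum \eqref{genNek}: for each $N$-tuple $\vec Y$ I would determine the behavior of $\partial_m^2 Z_{\vec Y}\big|_{m=0}$ as $\epsilon_1,\epsilon_2\to 1$, show it vanishes whenever $\sum_i\mu(Y_i)>2$, and pair up the remaining terms into orbits under partial transposition so that only the advertised $\vec Y$'s survive. The first and main analytic input is the scaling \eqref{ni-counting}. For generic (distinct) $a_i$, every factor of $Z_{\vec Y}$ coming from a pair $i\neq j$ is regular and nonzero at $\epsilon_-\equiv\epsilon_1-\epsilon_2=0$, so all singular and vanishing behavior is carried by the diagonal blocks $i=j$ (where $a_{ij}=0$). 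Inspecting $E(0,Y_i,Y_i,s)$ in a diagonal block, a numerator factor is $O(\epsilon_-)$ exactly when its box satisfies $h_i(s)=v_i(s)$, while a denominator factor is $O(\epsilon_-)$ exactly when its box satisfies $h_i(s)-v_i(s)=\pm1$; since every box appears once through the arm-product and once through the leg-product, counting gives $Z_{\vec Y}\big|_{m=0}\sim\epsilon_-^{\sum_i\mu(Y_i)}$. Writing $Z_{\vec Y}(m)=Z_{\vec Y}(0)\prod_\alpha(1-im/c_\alpha)$ over the numerator roots $c_\alpha$, one gets $\partial_m^2 Z_{\vec Y}\big|_{m=0}=\big(\sum_\alpha c_\alpha^{-2}-(\sum_\alpha c_\alpha^{-1})^2\big)Z_{\vec Y}(0)$, and the crucial observation is that the $O(\epsilon_-^{-1})$ part of $\sum_\alpha c_\alpha^{-1}$ cancels identically, because the numerator factor from the arm-product and the one from the leg-product associated with any $h=v$ box are exact negatives at $m=0$; hence $\sum_\alpha c_\alpha^{-2}\sim\epsilon_-^{-2}$ dominates and $I_{k,\vec Y}\sim\epsilon_-^{-2+\sum_i\mu(Y_i)}$.

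Second, I would feed in two combinatorial facts about $\mu$. Evaluating $h(\alpha,\beta)-v(\alpha,\beta)=(q-p)+(\beta-\alpha)$ on $Y_{p\times q}$ gives \eqref{ni-rec}: $\mu=1$ when $p\neq q$ and $\mu=2$ when $p=q$. And Lemma~\ref{lemmainv} states that $\mathrm{PT}_\alpha$ preserves $\mu$; I would prove this by tracking its action on boxes — the transposed sub-block has its $h-v$ values negated (so $n_0$ is fixed and $n_1\leftrightarrow n_{-1}$ there), while the remaining boxes keep their $h-v$ values, so $n_0$ and $n_1+n_{-1}$, hence $\mu$, are unchanged. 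Using also that $\mu(Y_i)\geq 1$ for any nonempty $Y_i$, the scaling immediately kills every $\vec Y$ with $\sum_i\mu(Y_i)>2$, and leaves only $\vec Y$'s with at most two nonempty diagrams and $\sum_i\mu(Y_i)\in\{1,2\}$.

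Third, and this is the delicate step, I would handle the boundary $\sum_i\mu(Y_i)\le 2$. A single non-square rectangle $Y_{p\times q}$ contributes an $O(\epsilon_-^{-1})$ term; since the $\epsilon_1=\epsilon_2=1$ value of $Z^{(k)}_{\rm inst}$ is finite, these divergences must cancel, which they do pairwise between $Y_{p\times q}$ and $Y_{q\times p}=Y_{p\times q}^{\,T}$, leaving the finite contribution attached to that divisor of $k$. For every remaining configuration the theorem discards — a single nonempty $Y_{\hat i}$ with $\mu(Y_{\hat i})=2$ that is not a square rectangle up to partial transposition, and any configuration with two nonempty non-square-rectangular diagrams — the term is already finite, and I would build an involution $\iota$ out of a suitable sequence of partial transpositions that maps each such $\vec Y$ to a \emph{distinct} $\vec Y'$ with $I_{k,\vec Y'}=-I_{k,\vec Y}$, so that the sum over each $\iota$-orbit vanishes, while checking that $\iota$ fixes exactly the square-rectangle configurations (and their partial transposes), which therefore survive.

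The main obstacle is precisely this last construction. The scaling law and the $\mathrm{PT}$-invariance of $\mu$ are essentially bookkeeping, but exhibiting an involution that is both fixed-point-free off the rectangular configurations \emph{and} sign-reversing on the limit value $I_{k,\vec Y}$ — a statement about the leading $\epsilon_-$-coefficient of $Z_{\vec Y}(0)$ together with the residues, not merely about pole orders — is the real content of Appendix~\ref{app:rectd}. The closed form \eqref{eq:pq-tab} for the rectangular contributions is consistent with this picture; it can be proved directly for $Y_{1\times k}$ via the instanton-partition-function recursion, with the general case verified through $k=20$, but it is the cancellation mechanism that carries the proof of the theorem.
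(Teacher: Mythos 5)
Your opening analytic step is correct and in fact slightly cleaner than the paper's: since the $m$-dependence of $Z_{\vec Y}$ in \eqref{genNek} sits entirely in numerator factors that pair up as $\pm c_\alpha - im$ (one from the arm-product, one from the leg-product, for each box), $\sum_\alpha c_\alpha^{-1}$ vanishes identically and $\partial_m^2 Z_{\vec Y}|_{m=0}=Z_{\vec Y}(0)\sum_\alpha c_\alpha^{-2}$; combined with the pole/zero count this reproduces \eqref{ni-counting}, which the paper instead obtains by expanding $F_1^0,F_1^\pm$ separately. However, two of your supporting claims are not right as stated. First, in your proof of Lemma~\ref{lemmainv} you assert that the boxes outside the transposed sub-block ``keep their $h-v$ values.'' They do not: transposing $P$ changes the row lengths $\lambda_\beta^T$, hence the arm-lengths of every box of $Y\backslash P$ sitting to the left of $P$. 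What is true is only that the multiset of $h-v$ values is permuted within each column, and establishing that is precisely the content of the inductive Lemma~\ref{lemmadiaginv}; your version would make the lemma trivial, but the trivial version is false. Second, you invoke $\mu(Y)\geq 1$ for any nonempty $Y$, and implicitly the sharper statement that $\mu(Y)=1$ exactly for diagrams PT-equivalent to non-square rectangles, without proof; this is Lemma~\ref{lemmabd} (the bound $\mu(Y)\geq N_{\rm b}(Y_{\rm min})$), and it is needed both to kill $\sum_i\mu(Y_i)>2$ and to enumerate the surviving boundary cases correctly.

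The more serious issue is that the third step --- the one you yourself flag as ``the real content'' --- is not carried out, and it is the heart of the theorem. Without an explicit sign-reversing, fixed-point-free involution on the discarded configurations, nothing rules out a single nonempty $Y$ with $\mu(Y)=2$ that is not PT-equivalent to a square, nor the two-nonempty-diagram configurations with $\mu(Y_1)=\mu(Y_2)=1$ (which are individually finite and nonzero). The paper's construction requires several nontrivial ingredients that a complete proof must supply: showing that $Y_{\rm min}$ of such a $\mu=2$ diagram has the three-block staircase shape of Figure~\ref{fig:Ymin}, so that a distinguished sub-block $W^Y$ and the involution $\iota_Y$ can be defined; proving that $F_2^{(0)}$ is invariant under arbitrary partial transpositions while $F_1^{(0)}$ is \emph{odd} under $\iota_Y$ (which in turn needs the structural fact, proved by induction in Figure~\ref{fig:Yform}, that all boxes with $h-v\in\{0,\pm1\}$ outside $W^Y$ lie in blocks untouched by the involution); and, for the two-diagram case, the identity $Z_{Y_1,Y_2}=Z_{Y_1,Y_2^T}$ of Lemma~\ref{lemmaGt}, so that the cancellation is carried entirely by the oddness of the diagonal factor $Z_{Y_2}$ under $Y_2\to Y_2^T$. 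As written, your proposal is a correct road map of the paper's strategy with a nice shortcut to \eqref{ni-counting}, but the cancellation mechanism that actually proves rectangular dominance is postulated rather than established.
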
  

Given a Young diagram $Y=[\lambda_1,\lambda_2,\dots,\lambda_l]$ (with $\lambda_l\geq 1$), we define its \textit{partial transposition} at position $\A$, with $1\leq \A\leq l$, to be 
\ie
{\rm PT}_\A (Y)=[\lambda_1,\lambda_2,\dots,\lambda_{\A-1},\lambda'_1,\dots,\lambda'_{\lambda_\A}] \,,
\fe
where $\lambda'_\B$ are column lengths of the transpose of the subdiagram $P=[\lambda_\A,\lambda_{\A+1},\dots,\lambda_l ]$, if $\lambda_{\A-1}\geq \lambda_1'$. Equivalently the partial transposition at $\A$-th column is given by the disjoint union
\ie
{\rm PT}_\A (Y)=(Y\backslash P) \sqcup P^T \,,
\fe
where $P$ denotes the block (Young subdiagram) to the right of the $(\A-1)$-st column, and $Y\backslash P$ is the complement Young subdiagram (see examples in Figure~\ref{fig:pt}).
In particular, the usual transposition is a partial transposition at $\A=1$. For notational simplicity, we will often suppress the subscript $\A$ when the context is clear.

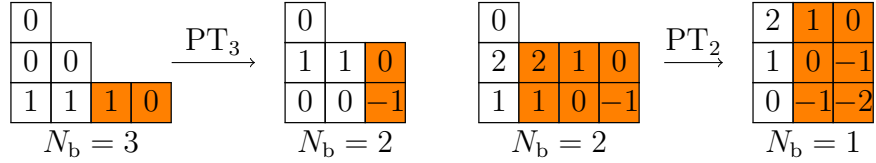
\begin{figure}[htb!]
	\centering
	\begin{tikzpicture}[inner sep=0in,outer sep=0in]
	\node (1a)  {
		\begin{ytableau}
		0    \\
		0 &  0 \\
		1 &  1  & *(orange) 1 & *(orange) 0\\
		\end{ytableau}
	};
	\node (2a) [right=1.5 of 1a]  {
		\begin{ytableau}
		0    \\
		1 &  1  & *(orange) 0 \\
		0 &  0  & *(orange) -1  \\
		\end{ytableau}
	};
	\draw[->] (1a.east) -- ($(2a)+(-1.2,0)$) node[midway, above=.1] {PT$_3$};
	\node (1b)  [below=.1 of 1a]{$N_{\rm b}=3$};
	\node (2b)  [below=.1 of 2a]{$N_{\rm b}=2$};
	\end{tikzpicture}
	\quad~~
	\begin{tikzpicture}[inner sep=0in,outer sep=0in]
	\node (1a)  {
		\begin{ytableau}
		0    \\
		2 &  *(orange)2 &  *(orange)1&  *(orange)0\\
		1 & *(orange) 1  & *(orange) 0 & *(orange) -1\\
		\end{ytableau}
	};
	\node (2a) [right=1.5 of 1a]  {
		\begin{ytableau}
		2  &  *(orange)1&  *(orange)0  \\
		1 &  *(orange)0 &  *(orange)-1\\
		0 & *(orange) -1  & *(orange) -2 \\
		\end{ytableau}
	};
	\draw[->] ($(1a)+(1.4,0)$) -- ($(2a)+(-1.2,0)$) node[midway, above=.1] {PT$_2$};
	\node (1b)  [below=.1 of 1a]{$N_{\rm b}=2$};
	\node (2b)  [below=.1 of 2a]{$N_{\rm b}=1$};
	\end{tikzpicture}
	\caption{
		Example of partial transpositions and the corresponding changes in $N_{\rm b}(Y)$. The subdiagram $P$ is colored in orange. 
		The numbers in the individual boxes $s$ is the value $d(s)$. 
	}
	\label{fig:pt}
\end{figure}

We will need the following useful properties of the map $\rm PT_\A(\cdot)$. 
  We start with the obvious lemma:
\begin{lemma}
	For a Young diagram $Y$, the multiset $\Delta(Y)=\{\A+\B\,|\,(\A,\B)\in Y\}$
	is invariant under partial transpositions. 
	\label{lemmainvset}
\end{lemma}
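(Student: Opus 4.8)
The plan is to prove the invariance box by box: I will exhibit a bijection between the boxes of $Y$ and the boxes of a partial transpose $\mathrm{PT}_{\A_0}(Y)$ under which each box stays on its own anti-diagonal, i.e.\ the coordinate sum $\A+\B$ is preserved; the equality $\Delta(Y)=\Delta(\mathrm{PT}_{\A_0}(Y))$ of multisets then follows immediately. First I would dispose of the degenerate case built into the definition: if the candidate diagram fails to be a Young diagram (i.e.\ $\lambda_{\A_0-1}<\lambda_1'$), then $\mathrm{PT}_{\A_0}$ acts as the identity by convention and there is nothing to prove, so from now on assume $\mathrm{PT}_{\A_0}$ acts nontrivially.

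Next I would fix the bookkeeping. Decompose $Y=(Y\setminus P)\sqcup P$, where $P$ is the sub-Young-diagram formed by the columns $\A_0,\A_0+1,\dots,l$ (of heights $\lambda_{\A_0}\ge\lambda_{\A_0+1}\ge\cdots$) and $Y\setminus P$ consists of the columns $1,\dots,\A_0-1$. By the definition of $\mathrm{PT}_{\A_0}$ recalled above, the columns $1,\dots,\A_0-1$ are unchanged, while $P$ is deleted and $P^T$ is inserted so that it occupies the columns starting again at $\A_0$; that is, $\mathrm{PT}_{\A_0}(Y)=(Y\setminus P)\sqcup P^T$ with $P^T$ placed at columns $\A_0,\A_0+1,\dots$.

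The one computation to carry out is then elementary. Take the obvious bijection: fix every box of $Y\setminus P$, and match a box of $P$ at global position $(\A,\B)$ (so $\A\ge\A_0$), whose $P$-local coordinates are $(\A-\A_0+1,\B)$, with the transposed box at $P^T$-local coordinates $(\B,\A-\A_0+1)$, i.e.\ at global position $(\B+\A_0-1,\,\A-\A_0+1)$ inside $\mathrm{PT}_{\A_0}(Y)$. For a box of $Y\setminus P$ the coordinate sum is obviously unchanged, and for a box of $P$ it becomes $(\B+\A_0-1)+(\A-\A_0+1)=\A+\B$, again unchanged. Hence this is a bijection between the boxes of $Y$ and of $\mathrm{PT}_{\A_0}(Y)$ preserving $\A+\B$, so $\Delta(Y)=\Delta(\mathrm{PT}_{\A_0}(Y))$. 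Since a general partial transposition in the sense of the main text is of this form (or trivial), and invariance under each single $\mathrm{PT}_{\A_0}$ immediately propagates to arbitrary compositions, the lemma follows.

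There is no genuine obstacle here; the only thing requiring a bit of care is the translation bookkeeping --- keeping track of the shift by $\A_0-1$ when identifying $P$ and $P^T$ with subdiagrams of the ambient diagram (and the French, bottom-left convention for box coordinates) --- but once this is pinned down the anti-diagonal invariance $(\A,\B)\mapsto(\B+\A_0-1,\,\A-\A_0+1)$ of the map is manifest.
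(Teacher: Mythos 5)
Your proof is correct. The paper does not actually supply an argument here --- it introduces the statement as ``the obvious lemma'' and moves on --- and your explicit bijection $(\A,\B)\mapsto(\B+\A_0-1,\,\A-\A_0+1)$ on the boxes of the transposed block $P$, together with the observation that transposition of a subdiagram anchored at column $\A_0$ and row $1$ shifts local coordinates by a fixed offset and hence preserves the global sum $\A+\B$, is exactly the intended justification; the bookkeeping, including the trivial case where $\mathrm{PT}_{\A_0}$ acts as the identity, is handled correctly.
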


Consequently the partial transpositions preserve the set of poles \eqref{YTpoles2} in the contour integral for the instanton partition function. A useful corollary that follows  is the following:
\begin{corollary}
	Among all the Young diagrams related by (a sequence of) partial transpositions to a rectangular diagram $Y_{p\times q}$ with $p$ columns of height $q$ each, the maximum width (or height) is  $p+q-1$.
	\label{coro:Ysize}
\end{corollary}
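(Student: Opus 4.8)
The plan is to reduce the statement to an upper bound valid for \emph{every} diagram $PT$‑related to $Y_{p\times q}$ together with a one‑step construction realizing it, and to deduce the height statement from the width statement. First I would note that the set of Young diagrams obtained from $Y_{p\times q}$ by sequences of partial transpositions is closed under the full transposition $\mathrm{PT}_1$, which interchanges width and height; hence the maximal width and the maximal height over this set coincide, and it is enough to bound widths.

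For the upper bound I would invoke Lemma~\ref{lemmainvset}. In the column convention of Theorem~\ref{thmvan} the box set of $Y_{p\times q}$ consists of all pairs $(\A,\B)$ with $1\le\A\le q$ and $1\le\B\le p$, so the multiset $\Delta(Y_{p\times q})=\{\A+\B:(\A,\B)\in Y_{p\times q}\}$ has largest element $p+q$, attained only at the corner box $(q,p)$. By Lemma~\ref{lemmainvset}, any diagram $Y'$ related to $Y_{p\times q}$ by partial transpositions has $\Delta(Y')=\Delta(Y_{p\times q})$, so every box $(\A,\B)\in Y'$ satisfies $\A+\B\le p+q$. If $Y'$ has width $w$, its last column contains the box $(w,1)$, forcing $w+1\le p+q$, i.e. $w\le p+q-1$; the same argument applied to $(1,h)$ in the first column re‑proves $h\le p+q-1$ directly.

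For attainment I would apply the single partial transposition $\mathrm{PT}_q$ at the last column of $Y_{p\times q}$: the rightmost block is one column of height $p$, whose transpose is a string of $p$ unit‑height columns, and the admissibility condition $\lambda_{q-1}=p\ge 1$ holds (for $q=1$ this is just the ordinary transpose), so the result is the legitimate Young diagram consisting of $q-1$ columns of height $p$ followed by $p$ columns of height $1$, of width $(q-1)+p=p+q-1$. Combined with the upper bound, and with the transposition symmetry noted above, this shows the maximal width, and likewise the maximal height, over all diagrams $PT$‑related to $Y_{p\times q}$ equals $p+q-1$.

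I expect no real obstacle here: the whole content is the $\Delta$‑invariance of Lemma~\ref{lemmainvset} plus a trivial construction. The only points requiring care are keeping the French/column bookkeeping for $Y_{p\times q}$ straight (the bound is symmetric in $p\leftrightarrow q$, so the convention is immaterial) and checking that $\mathrm{PT}_q$ is admissible so that it genuinely enlarges the width rather than acting trivially.
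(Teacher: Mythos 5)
Your proof is correct and follows essentially the same route as the paper: attainment via a single partial transposition at the rightmost column, and the upper bound from the invariance of $\Delta$ under partial transpositions (Lemma~\ref{lemmainvset}) together with $\max\Delta(Y_{p\times q})=p+q$. You simply spell out the details (the box $(w,1)$ argument and the admissibility of $\mathrm{PT}_q$) that the paper's two-sentence proof leaves implicit.
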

\begin{proof}
	A single partial transposition at the right-most column gives a Young diagram with width $p+q-1$. The fact that this is the maximal value that can be achieved by any sequence of partial transpositions follows from
	the previous lemma by noting that 
	\ie
	\max [ \Delta(Y_{p\times q})]=p+q.
	\fe
\end{proof}

\begin{lemma}
	For Young diagram $Y=[\lambda_1,\dots,\lambda_l]$ with $\lambda_l\geq 1$, the  multiset
	\ie
	\Delta_B(Y)\equiv \{\A+\lambda_\A\,|\,1\leq \A \leq M\} 
	\fe
	is invariant under  partial transpositions, where $M$ is any integer equal to or larger than the maximum width of Young diagrams related to $Y$ by partial transpositions. Note that $\lambda_\A$ for $\A>l$ is defined to be 0.
	\label{lemmadiaginv}
\end{lemma}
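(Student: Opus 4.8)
If ${\rm PT}_\A$ acts trivially there is nothing to prove, so assume ${\rm PT}_\A(Y)$ is again a Young diagram. The plan is to localize the claim to the transposed block and then reduce it to a single identity relating a partition to its conjugate, which I would prove by a lattice-path ``upward crossings equal downward crossings'' count. Since ${\rm PT}_\A$ leaves the columns $\lambda_1,\dots,\lambda_{\A-1}$ unchanged, these contribute identically to $\Delta_B(Y)$ and $\Delta_B({\rm PT}_\A(Y))$. Writing $\nu$ for the block $P=[\lambda_\A,\dots,\lambda_l]$ regarded as a partition ($\nu_i\equiv\lambda_{\A-1+i}$, with $\nu_i=0$ for $i>l-\A+1$), the columns of index $\geq\A$ have heights $\nu_1,\nu_2,\dots$ in $Y$ and $\nu^T_1,\nu^T_2,\dots$ in ${\rm PT}_\A(Y)$, so after the shift $\A\mapsto\A-1+i$ the lemma reduces to
\begin{equation}\label{blockclaim}
 \{i+\nu_i\,:\,1\leq i\leq M'\}=\{j+\nu^T_j\,:\,1\leq j\leq M'\}\,,\qquad M'\equiv M-\A+1\,,
\end{equation}
with parts beyond the respective widths set to zero. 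The widths of $\nu$ and $\nu^T$ are $l-\A+1$ and $\lambda_\A$, both $\leq M'$: by hypothesis $M$ is at least the width $l$ of $Y$ and the width $\A-1+\lambda_\A$ of ${\rm PT}_\A(Y)$, which are Young diagrams related to $Y$ by partial transpositions. Hence $\nu_i=\nu^T_i=0$ for $i>M'$, so cutting either of the infinite multisets $\{i+\nu_i\,:\,i\geq1\}$, $\{j+\nu^T_j\,:\,j\geq1\}$ off at index $M'$ merely deletes one copy of each of $M'+1,M'+2,\dots$; it therefore suffices to prove the untruncated identity $\{i+\nu_i\,:\,i\geq1\}=\{j+\nu^T_j\,:\,j\geq1\}$.

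To prove this, draw $\nu$ in the first quadrant in the French convention and traverse its boundary from far out along the horizontal axis toward the origin and then up the vertical axis, by unit West steps (decreasing the first coordinate) and North steps (increasing the second). For each column index $i\geq1$ there is exactly one West step and it leaves the point $(i,\nu_i)$, and for each row index $j\geq1$ there is exactly one North step and it reaches the point $(\nu^T_j,j)$; hence $\{i+\nu_i\,:\,i\geq1\}$ is the multiset of coordinate sums of the West-step departure points and $\{j+\nu^T_j\,:\,j\geq1\}$ that of the North-step arrival points. Following the coordinate sum $\ell$ along the path, a West step lowers $\ell$ by $1$, a North step raises it by $1$, and $\ell\to+\infty$ at both ends; so for each integer $n$ the path crosses the level between $n-1$ and $n$ upward exactly as often as downward (telescope $\mathbf 1[\ell\geq n]$ along the path, which equals $1$ at both infinite ends). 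An upward crossing is a North step arriving at coordinate sum $n$ and a downward crossing a West step departing from coordinate sum $n$, whence $\#\{i:i+\nu_i=n\}=\#\{j:j+\nu^T_j=n\}$ for all $n$, which is the untruncated form of \eqref{blockclaim}. (Alternatively one may induct on $|\nu|$: removing a removable corner of $\nu$ of coordinate sum $a$ moves one token of the multiset $\{i+\nu_i\,:\,i\geq1\}$ from $a$ to $a-1$, and the reflected corner of $\nu^T$ has the same coordinate sum $a$, so the identity propagates up from $\nu=\varnothing$.)

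I expect the only genuinely delicate point to be the truncation bookkeeping---checking that $M'$ exceeds the widths of \emph{both} $\nu$ and $\nu^T$---which is exactly why the hypothesis on $M$ is phrased in terms of the widths of all diagrams reachable from $Y$ by partial transpositions, and not just of $Y$ itself: a single ${\rm PT}_\A$ uses only the widths of $Y$ and of ${\rm PT}_\A(Y)$, but iterating the lemma (as when comparing $\Delta_B$ along an arbitrary sequence of partial transpositions) needs the full hypothesis, together with the bound of Corollary~\ref{coro:Ysize}. The remaining ingredients---the identification of $\Delta_B$ with West-step departure data and the up-equals-down crossing count---are routine.
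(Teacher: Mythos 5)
Your proof is correct. Structurally it follows the same outline as the paper's: both arguments localize the claim to the transposed block (the columns to the left of position $\A$ contribute identically), and both then rest on the single identity that $\Delta_B$ of a partition equals $\Delta_B$ of its conjugate. The difference is in how that core identity is established. The paper proves $\Delta_B(Y)=\Delta_B(Y^T)$ by induction on the number of boxes: a box added at an inward corner $(\A,\B)$ replaces one occurrence of $\A+\B-1$ by $\A+\B$ in both multisets simultaneously. Your main argument is instead the boundary-path crossing count, which obtains the identity in one stroke by equating, for each level $n$ of the coordinate sum, the number of West steps departing from level $n$ with the number of North steps arriving at level $n$; your parenthetical induction on $|\nu|$ by removing corners is essentially the paper's argument run in reverse. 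The crossing-count route is arguably cleaner and makes the "North-East boundary" intuition the paper only gestures at into an actual proof. You are also more careful on two points the paper leaves implicit: the index shift when restricting to the block $P$, and the verification that the truncation parameter $M'$ dominates the widths of both $\nu$ and $\nu^T$, which is precisely where the hypothesis that $M$ bounds the widths of \emph{all} diagrams reachable by partial transpositions (cf.\ Corollary~\ref{coro:Ysize}) is consumed.
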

\begin{proof}
	It suffices to prove that $\Delta_B(Y)=\Delta_B(Y^T)$, because for any partial transposition with respect to a subdiagram $P\subset Y$, $\Delta_B(Y\backslash P)$ is clearly invariant.  The entries in $\Delta_B(Y)$ consists of $\A+\B$ for boxes $(\A,\B)$ located on the North-East  boundary of the Young diagram.  
	
		We proceed by induction. If $Y$ consists of a single box, $\Delta_B(Y)=\Delta_B(Y^T)$ trivially for any $M\geq 1$. Suppose $\Delta_B(Y)=\Delta_B(Y^T)$ for some large enough $M$ as specified in the statement of the lemma. We show it continues to hold upon adding a box to $Y$. First of all, in order to generate a Young diagram, the additional box must be added to an \textit{inward} corner of the North-East border of $Y$, i.e.~a coordinate $(\A,\B)$ that satisfies
	\ie
	(\A,\B)=(\A,\lambda_{\A}+1)=(\lambda_{\B}^T+1,\B).
	\label{indh}
	\fe
We denote the new Young diagram obtained in this fashion by $Y'$.  Then $\Delta_B(Y')$  is given by $\Delta_B(Y)$ with the element $\A+\lambda_\A$ (an orange entry in Figure~\ref{fig:slidediag}) replaced by $\A+\lambda_\A+1$. Similarly $\Delta_B(Y'^T)$  is given by $\Delta_B(Y^T)=\Delta_B(Y)$ with the element $\lambda_{\B}^T+\B$ replaced by $\lambda_{\B}^T+\B+1$.  Due to \eqref{indh}, we have $\A+\lambda_\A = \lambda_{\B}^T+\B = \A + \B  -1$, as well as $\A+\lambda_\A+1 = \lambda_{\B}^T+\B+1 = \A + \B$.  Thus, $\Delta_B(Y')$ and $\Delta_B(Y'^T)$ are each obtained from  $\Delta_B(Y)$ and $\Delta_B(Y^T)$, respectively, by replacing an occurrence of $\A + \B - 1$ with $\A + \B$.  Since by assumption $\Delta_B(Y) = \Delta_B(Y^T)$, we conclude that $\Delta_B(Y')=\Delta_B(Y'^T)$.  See Figure~\ref{fig:slidediag} for an example.   
\end{proof}
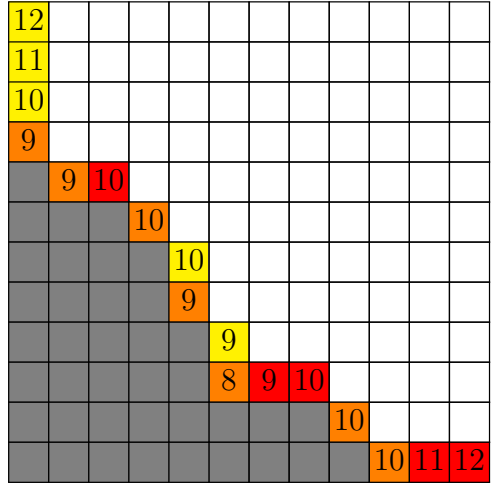
\begin{figure}[htb!]
	\centering
	\begin{tikzpicture}[inner sep=0in,outer sep=0in]
	\node (1a)  {
		\begin{ytableau}
		 *(yellow) 12 &   & &   & &  & &   & & & &  \\
		 *(yellow) 11 &   & &   & &  & &   &  & & &   \\
		 *(yellow) 10 &   & &   & &  & &   &  & & &   \\
		  *(orange) 9 &   & &   & &  & &   &  & & &   \\
	*(gray)	&  *(orange) 9  &  *(red) 10 &   & &  & &   &  & & &   \\
		*(gray)  & *(gray) & *(gray) & *(orange) 10  & &  & &   &  & & &   \\ 
		*(gray) & *(gray) & *(gray) & *(gray)  & *(yellow) 10 &  & &   &  & & &   \\
		*(gray) & *(gray) & *(gray) & *(gray)    & *(orange) 9 &  & &   &  & & &   \\
		*(gray) & *(gray) & *(gray) &  *(gray) &  *(gray) & *(yellow) 9 &   &  &  & & &   \\
		*(gray) & *(gray) & *(gray) &  *(gray)  &*(gray)   & *(orange) 8  & *(red) 9  & *(red) 10  &  & & &   \\
		*(gray) & *(gray) & *(gray) &  *(gray)   & *(gray)   &  *(gray)  & *(gray) & *(gray)   & 
		*(orange) 10   & & &   \\
		*(gray) & *(gray) & *(gray) &  *(gray)   & 	*(gray) & *(gray) & *(gray) &  *(gray)   &  *(gray) & *(orange) 10  &  *(red) 11 &  *(red) 12 
		\end{ytableau}
	};
	\end{tikzpicture}
	\caption{
		An example with $Y=[8,7,7,6,4,2,2,2,1]$ (gray boxes) in a square of size $M=12$.  The boxes at $(\lambda_\A^T + 1,\A)$ are colored in yellow; the boxes at $(\A,\lambda_\A + 1)$ are colored in red; the boxes common to both are colored in orange.  The numbers in the yellow and orange boxes are part of $\Delta_B(Y^T)$ while those in the red and orange boxes are part of $\Delta_B(Y)$.  Thus, in this example, $\Delta_B(Y) = \{9, 9, 10, 10, 9, 8, 9, 10, 10, 10, 11, 12 \}$ and $\Delta_B(Y^T) =\{ 10, 10, 8, 9, 9, 10, 10, 9, 9, 10, 11, 12 \}$.
	}
	\label{fig:slidediag}
\end{figure}

For a box with coordinates $s = (\A, \B)$ in a Young diagram $Y$, we define $d(s)=h(s)-v(s)$. Then we also 
define $n_a(Y)$ to be number of coordinates $s$ with $d(s)=a$. 

\begin{lemma}
	For a Young diagram $Y$, 
	\es{muDef}{
		\mu(Y)=2n_0(Y)- (n_1(Y)+n_{-1}(Y))
	}  is invariant under partial transpositions. In fact both $n_0(Y)$ and $n_1(Y)+n_{-1}(Y)$ are separately invariant.
	\label{lemmainv}
\end{lemma}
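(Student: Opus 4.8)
The plan is to reduce to a single elementary partial transposition and then turn the claim into a counting problem along one anti-diagonal of $Y$, where the invariant established in Lemma~\ref{lemmadiaginv} can be fed in directly. Since every partial transposition factors as a sequence of the elementary moves ${\rm PT}_\alpha$, it is enough to treat a single ${\rm PT}_\alpha$ (and there is nothing to check when it acts trivially). First I would record the identities that make the combinatorics transparent: writing $f(\alpha)=\alpha+\lambda_\alpha$ and $g(\beta)=\beta+\lambda^T_\beta$ (extended by $0$ past the diagram), one has $(\alpha,\beta)\in Y$ iff $\alpha+\beta\le f(\alpha)$ iff $\alpha+\beta\le g(\beta)$, and $d(\alpha,\beta)=h-v=(\lambda^T_\beta-\alpha)-(\lambda_\alpha-\beta)=g(\beta)-f(\alpha)$. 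Then Lemma~\ref{lemmadiaginv} says exactly that the multiset $\{f(\alpha):1\le\alpha\le M\}$ is invariant under partial transpositions and coincides with $\{g(\beta):1\le\beta\le M\}$; so, setting $A_c=\{\alpha\le M:f(\alpha)=c\}$ and $B_c=\{\beta\le M:g(\beta)=c\}$, the common cardinality $n_c:=|A_c|=|B_c|$ is a partial-transposition invariant.

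For $n_0$ I would group the boxes with $d=0$ by the common value $c=f(\alpha)=g(\beta)$, so that $n_0(Y)=\sum_c\#\{(\alpha,\beta)\in A_c\times B_c:\alpha+\beta\le c\}$, and then prove that each summand equals $\binom{n_c}{2}$; invariance of $n_0$ then follows from invariance of $n_c$. Listing $A_c=\{\alpha_1<\dots<\alpha_{n_c}\}$ and $B_c=\{\beta_1<\dots<\beta_{n_c}\}$ (so that $\lambda_{\alpha_i}=c-\alpha_i$ and $\lambda^T_{\beta_j}=c-\beta_j$ are strictly decreasing in their indices), the crux is to show that $(\alpha_i,\beta_j)\in Y$ — equivalently $\lambda_{\alpha_i}\ge\beta_j$, equivalently $\alpha_i+\beta_j\le c$ — holds if and only if $i+j\le n_c$; summing $\#\{i:i+j\le n_c\}=n_c-j$ over $j=1,\dots,n_c$ then gives $\binom{n_c}{2}$.

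For $n_1+n_{-1}$ I would run the same bookkeeping with a shifted diagonal: a box with $d=+1$ is a pair $(\alpha,\beta)\in A_c\times B_{c+1}$ with $\alpha+\beta\le c$, and a box with $d=-1$ is a pair $(\alpha,\beta)\in A_c\times B_{c-1}$ with $\alpha+\beta\le c-1$. Carrying out the analogous ``staircase'' count diagonal by diagonal should express $n_1(Y)+n_{-1}(Y)$ again purely through the invariant data $\{n_c\}$ (one can also phrase this as: both $n_0(Y)$ and $\#\{s\in Y:|d(s)|\le 1\}$ are determined by $\{n_c\}$, and $n_1+n_{-1}$ is their difference). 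Once this is in hand, invariance of $\mu(Y)=2n_0(Y)-\bigl(n_1(Y)+n_{-1}(Y)\bigr)$ is immediate as the stated combination.

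The hard part will be the combinatorial identity at the centre of the second step — ``$(\alpha_i,\beta_j)\in Y\iff i+j\le n_c$'' and its off-diagonal analogue. This is elementary but not automatic: it amounts to a local analysis of how the north-east boundary of $Y$ crosses the anti-diagonal $\alpha+\beta=c$, i.e.\ of how the ``real'' elements of $A_c$ and $B_c$ (topmost boxes of columns, and rightmost boxes of rows, lying on that anti-diagonal) interleave with the ``virtual'' ones ($\alpha=c$ or $\beta=c$, which account for the boundary segments running along the axes). A routine but necessary preliminary is to fix an integer $M$ large enough to contain the support of $Y$ and of all diagrams obtained from it by partial transpositions, matching the setup of Lemma~\ref{lemmadiaginv}.
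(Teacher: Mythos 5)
Your setup is correct and well organized: the identities $d(\alpha,\beta)=g(\beta)-f(\alpha)$ with $f(\alpha)=\alpha+\lambda_\alpha$, $g(\beta)=\beta+\lambda^T_\beta$, the membership criterion $(\alpha,\beta)\in Y\iff \alpha+\beta\le f(\alpha)\iff\alpha+\beta\le g(\beta)$, and the reading of Lemma~\ref{lemmadiaginv} as invariance of the multiset $\{f(\alpha)\}=\{g(\beta)\}$ are all accurate, and the reduction of $n_0$ and $n_1+n_{-1}$ to counts over $A_c\times B_{c'}$ is a legitimate strategy. But the proposal does not prove the lemma: the two combinatorial identities that carry all the content are asserted, not established, and you say so yourself (``the hard part will be\dots''). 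For $n_0$ the interleaving claim $(\alpha_i,\beta_j)\in Y\iff i+j\le n_c$ is in fact true (one can prove it, e.g., by applying Lemma~\ref{lemmadiaginv} to the subdiagram of $Y$ obtained by deleting the first $\beta$ rows, which yields $\#\{\alpha\le\lambda^T_\beta: f(\alpha)=c\}=\#\{\beta'>\beta: g(\beta')=c\}$ for $\beta\in B_c$), so that half of the plan can be completed. The genuinely open step is the off-diagonal one: you give neither a formula nor an argument for why $\#\{s\in Y:|d(s)|\le1\}$ should be a function of the multiset $\{n_c\}$ alone. This is not a routine rerun of the diagonal case, because the relevant count per pair of adjacent values involves the relative positions of $A_c$ and $B_{c\pm1}$, and it does not reduce to any simple product of $n_c$ and $n_{c+1}$ (e.g., for $Y=[4,4,1,1]$ one finds $(n_5,n_6)=(3,2)$ with combined adjacent count $3$, whereas $Y_{3\times3}$ has $(2,2)$ giving $2$ and $[4,4,1,1]$ also has $(1,3)$ giving $2$ at $c=4$; no naive closed form in $(n_c,n_{c+1})$ fits all of these). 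So the step you flag as hard is exactly the step that is missing, and it is the harder of the two.

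For comparison, the paper's proof avoids all of this global interleaving combinatorics by working locally. It splits $Y=(Y\backslash P)\sqcup P$ for the transposed block $P$: on $P$ the arm and leg are intrinsic to $P$, so transposition sends $d\to-d$ there, which preserves $n_0|_P$ and $n_1|_P+n_{-1}|_P$ for free; on $Y\backslash P$ only $h$ changes, through the row lengths of $P$, and Lemma~\ref{lemmadiaginv} applied to $P$ shows the multiset of $d$-values in each column of $Y\backslash P$ is unchanged, so every $n_d|_{Y\backslash P}$ is separately preserved. If you want to salvage your global approach you would need to actually prove both counting identities; otherwise the local decomposition closes the argument in a few lines.
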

\begin{proof}
	Under a partial transposition that involves transposing the subdiagram $P$ of $Y$, the box at $(\A,\B) \in P$ gets mapped to the box at $(\B,\A) \in P^T$. Thus, focusing on boxes in $P$,  the boxes with $d(s)=0$ are preserved, whereas the boxes with $d(s)=\pm1$ are mapped to boxes with $d(s)=\mp 1$---See Figure~\ref{fig:pt}. Consequently, the contribution to $\mu(Y)$ from $P$ does not change under the transposition. Under the partial transposition, the values of $d(s)$ for the boxes in $s \in Y \backslash P$ get permuted within each column. To see this let's consider for example a partial transposition PT$_{\hat \A}$ with respect to the sub-diagram  $P=[\rho_1,\rho_2,\dots]$   (namely $\rho_a=\lambda_{\hat\A+a-1}$) with $\lambda_{\hat \A-1}\geq \rho^T_1$ (so that it's a nontrivial operation). Focusing on the boxes in the $\A$-th column with $1\leq \A \leq \hat{\A}-1$, their $d=h-v$ values before the partial transposition are given by
	\ie
	\{\rho^T_\B+\B+(\hat \A-\lambda_\A-\A)| 1\leq \B \leq \lambda_\A \}
	\fe 
 whereas	after the partial transposition, we have
		\ie
		\{\rho_\B+\B+(\hat \A-\lambda_\A-\A)| 1\leq \B \leq \lambda_\A \}.
		\fe
From Lemma~\ref{lemmadiaginv}, we see the above two multisets are the same.  Hence the contribution to $n_d$ coming from $Y\backslash P$ does not change under the transposition of $P$ (see examples in Figure~\ref{fig:pt}).   Therefore we conclude $\mu (Y)$ is a partial transposition invariant, and clearly so is $n_0(Y)$.
\end{proof}

In particular, a rectangular Young diagram $Y_{p\times q}$ with $p$ columns and $q$ rows has
 \es{nRect}{
  n_0(Y) = \min (p, q)\,, \qquad
   n_1(Y) = \min(p-1, q) \,, \qquad
    n_{-1}(Y) = \min(p, q-1) \,,
 }
which implies that 
\ie
\mu(Y_{p\times q}) =\begin{cases}
	2 & {\rm if}~p=q \,, 
	\\
	1 & {\rm if}~p\neq q \,.
\end{cases}
\fe

To proceed, let us make two more definitions.  We define $N_{\rm b}(Y) \geq 1$ to be the minimal number of rectangular blocks (in a horizontal decomposition) in  a Young diagram $Y$ (see Figure~\ref{fig:pt} for examples). Note that while $N_{\rm b}(Y)$ is invariant under (full) transposition, it typically changes under partial transpositions (see Figure~\ref{fig:pt} for example). We give the set of all Young diagram $Y$ with given size $|Y|=k$ a lexicographic (total) ordering as follows. For two Young diagrams $Y=[\lambda_1,\lambda_2,\dots, \lambda_l]$ and $Y'=[\rho_1,\rho_2,\dots,\rho_m]$, we have $Y \geq Y'$ when $\lambda_1>\rho_1$ or when $\lambda_1 = \rho_1$ and $\lambda_2>\rho_2$ or when $\lambda_1=\rho_1$, $\lambda_2 = \rho_2$, and $\lambda_3 > \rho_3$, etc., or when $\lambda_i = \rho_i$ for all $i$. 

Given a Young diagram $Y$, we consider all possible partial transpositions, and among the ones that minimize $N_{\rm b}(Y)$, we take the smallest Young diagram according to the lexicographic ordering, and call it $Y_{\rm min}$. For example if $Y$ is related by (possibly several successive) partial transpositions to a rectangular Young diagram $Y_{p\times q}$ for some $p\leq q$, then clearly
\ie
Y_{\rm min}=Y_{p\times q} \,,
\fe
because $N_\text{b}(Y_{p \times q}) = 1$.

\begin{lemma}
	If $Y_{\rm min}$ has $N_{\rm b}(Y_{\rm min})= c$ rectangular blocks, then we must have
	\es{muY}{
		\m(Y)\geq c \,.
	}
	In particular $\m(Y)=1$ if and only if $Y$ is related by partial transpositions to a diagram of the type $Y_{p\times q}$, for some $p$ and $q$.
	\label{lemmabd}
\end{lemma}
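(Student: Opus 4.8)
The plan is to pass to a distinguished representative of the partial‑transposition orbit of $Y$ and then induct on the number of blocks. By Lemma~\ref{lemmainv}, $\mu$ is constant on the orbit of $Y$, while $N_{\rm b}(Y_{\rm min})$ is an invariant of that orbit by construction; hence it suffices to prove $\mu(Z)\geq N_{\rm b}(Z)$ when $Z=Y_{\rm min}$, i.e.\ when $Z$ is \emph{block‑minimal} (no partial transposition lowers its block count). Granting this, the ``only if'' half of the last sentence is immediate: $\mu(Y)=1$ forces $N_{\rm b}(Y_{\rm min})\leq 1$, so $Y_{\rm min}=Y_{p\times q}$ is a single rectangle. (By the rectangle values \eqref{nRect} the converse direction holds with the proviso $p\neq q$, since a square has $\mu=2$.)

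For the inequality I would induct on $c=N_{\rm b}(Z)$. The base case $c=1$ is \eqref{nRect}: a rectangle has $\mu\in\{1,2\}\geq 1$. For the step, write $Z=B_1\cup\dots\cup B_c$ as a bottom‑to‑top stack of rectangular blocks of strictly decreasing widths $w_1>\dots>w_c$. The key computational tool is that deleting the bottom row of any Young diagram $Y=[\lambda_1,\dots,\lambda_l]$ of width $W$ leaves the arm‑ and leg‑lengths of all surviving boxes unchanged (legs above are unaffected by removing a row below, and the surviving rows keep their widths), so
\begin{equation}
\mu(Y)=\mu(Y^-)+\sum_{\alpha=1}^{W} g\big(W+1-\alpha-\lambda_\alpha\big)\,,\qquad g(0)=2,\ \ g(\pm 1)=-1,\ \ g(n)=0\ \text{otherwise}\,,
\end{equation}
where $W+1-\alpha-\lambda_\alpha=d(\alpha,1)$ is the value of $d$ on the deleted box. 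Iterating this $h_1$ times peels off $B_1$ and expresses $\mu(Z)-\mu(B_2\cup\dots\cup B_c)$ as a finite sum of $g$‑values over the boxes of $B_1$; one then wants this ``block contribution'' to be at least $1$, by separating the columns of $B_1$ beyond $w_2$ (which reassemble into $\mu$ of a genuine rectangle, hence $\geq 1$) from the columns $\leq w_2$ (whose $g$‑values telescope in the height direction into sums of $g$ over windows of consecutive integers, which are easily evaluated).

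The main obstacle is that this ``$+1$ per block'' estimate is \emph{false} for an arbitrary diagram---adding a width‑$3$, height‑$1$ block under $[1]$ gives $[2,1,1]$, and $\mu([2,1,1])=2=\mu([1])$---and it breaks precisely when the diagram left after peeling is no longer block‑minimal (here $[2,1,1]$ is related by partial transpositions to the square $[2,2]$). The induction must therefore be arranged so that one only ever strips a block off a \emph{block‑minimal} diagram: concretely, one must show that if $Z$ is block‑minimal then so is $Z\setminus B_1$ (then it has $c-1$ blocks and the hypothesis applies), or else replace $Z\setminus B_1$ by its own minimal representative and use the extra slack in $\mu(Z)-\mu(Z\setminus B_1)$. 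Proving that block‑minimality survives block‑stripping is the technical heart: given a partial transposition of $Z\setminus B_1$ that lowers its block count, one lifts it to a block‑lowering partial transposition of $Z$; this is transparent when the transposed sub‑block lies strictly to the right of the removed block (the operation is then literally unchanged and $B_1$ is carried along inertly), while the overlapping case needs a separate argument---e.g.\ passing to a block‑lowering partial transposition of $Z\setminus B_1$ whose column index is as large as possible---where the invariance of the multisets $\Delta$ and $\Delta_B$ (Lemmas~\ref{lemmainvset} and~\ref{lemmadiaginv}), Corollary~\ref{coro:Ysize}, and the lexicographic choice built into $Y_{\rm min}$ enter. Once block‑minimality is maintained along the recursion, summing the per‑block contributions yields $\mu(Z)\geq\mu(B_2\cup\dots\cup B_c)+1\geq(c-1)+1=c$ by the inductive hypothesis.
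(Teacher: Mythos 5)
Your reduction to $Y_{\rm min}$ and the derivation of the ``only if'' clause from the inequality are fine, but the inequality itself is not proved: what you have written is a plan whose two load-bearing steps are left open. First, the assertion that each peeled block contributes at least $+1$ to $\mu$ is never established --- you write that ``one then wants this block contribution to be at least $1$'' and then immediately exhibit $[2,1,1]$ showing the unconditional version is false; the conditional version (valid only for block-minimal diagrams) is exactly what needs proving, and no argument for it is given. Second, the induction requires that $Z\setminus B_1$ is again block-minimal (or a quantitative substitute when it is not), which you yourself label ``the technical heart'' and address only with a gesture toward Lemmas~\ref{lemmainvset} and~\ref{lemmadiaginv} and Corollary~\ref{coro:Ysize}; the overlapping case is precisely where this could fail, and it is not handled. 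As it stands the induction does not close.

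For contrast, the paper avoids the induction entirely. Working with $Y=Y_{\rm min}$, it cuts the diagram into $c(c+1)/2$ rectangles by dropping perpendiculars from the $c$ corners. The $c$ corner rectangles have the property that the arm- and leg-lengths of their boxes, computed in $Y$, coincide with those of standalone rectangles, so by \eqref{nRect} each contributes $1$ or $2$ to $\mu(Y)$. Each interior rectangle contributes at least $-1$, and contributes exactly $-1$ only under a size coincidence ($p_j+p_{j+1}=q_{i+1}$ or $q_i+q_{i+1}=p_{j+1}$) that would allow a sequence of partial transpositions to lower $N_{\rm b}$, contradicting minimality of $Y_{\rm min}$; hence every interior rectangle contributes $\geq 0$ and $\mu(Y)\geq c$ follows in one step, with no need to control how minimality behaves under deletion. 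If you want to salvage your route, the statements you must actually prove are the conditional per-block bound and the stability of block-minimality under stripping; the paper's local contradiction argument for interior blocks is the cleaner mechanism for importing minimality into the estimate.
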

\begin{proof}
	Since $\mu$ is invariant under partial transposition, we can take $Y=Y_{\rm min}$.
	By assumption, the Young diagram $Y$ has a minimum of $c$ rectangular blocks in a horizontal decomposition. Correspondingly, $Y$ has $c$ outward-pointing corners along the North-East boundary.
	We obtain a finer decomposition of $Y$ into $1 + 2 + \cdots + c = \frac{c(c+1)}{2}$ smaller rectangular blocks by drawing perpendicular lines from the $c$ corners in an obvious fashion (see Figure~\ref{fig:recdec}).
	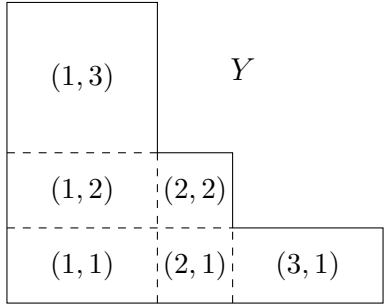
\begin{figure}[htb!]
		\centering
		\begin{tikzpicture}
		\draw (0,0) -- ++(5,0) --++(0,1) -- ++(-2,0) --++(0,1) --++(-1,0)--++(0,2)--++(-2,0) -- cycle
		node[above right=4cm]{$Y$};
		\draw [dashed] (0,1) -- ++(3,0);
		\draw [dashed] (0,2) -- ++(2,0);
		\draw [dashed] (2,0) -- ++(0,2);
		\draw [dashed] (3,0) -- ++(0,1);
		\node  at (1,.5) (a) {\small $ (1,1)$};
		\node  at (1,1.5) (a) {\small $(1,2)$};
		\node  at (2.5,1.5) (a) {\small $(2,2)$};
		\node  at (1,3) (a) {\small $(1,3)$};
		\node  at (2.5,.5) (a) {\small $ (2,1)$};
		\node  at (4,.5) (a) {\small $ (3,1)$};
		\end{tikzpicture}
		\caption{Decomposition of the Young diagram $Y$ into rectangular blocks by perpendicular lines from the $N_{\rm b}=3$ corners. The individual blocks are labelled as $(i,j)$ with $i+j\leq N_{\rm b}+1=4$ here.}
		\label{fig:recdec}
	\end{figure}
	Each of the $c$ blocks at the corners contributes $2$ or $1$ to $\mu$ depending on whether it has equal or non-equal sides. Therefore, it suffices to show that the $c(c-1)\over 2$ interior blocks each contribute non-negatively to $\mu(Y)$. 
	
	We label these rectangular blocks as $(i,j)$ for $1\leq i,j \leq c$ and $i+j \leq c+1$, and let their sizes be $p_j \times q_i$. The corner blocks are given by those with $i+j=c+1$. Each interior rectangular block labeled by $(i,j)$  contributes at least $-1$ to $\mu(Y)$, and this happens precisely when either $p_j+p_{j+1}=q_{i+1}$, or $q_i+q_{i+1}=p_{j+1}$ (see Figure~\ref{fig:reducingNb}). Let's assume this is the case for the interior rectangular block $(i,j)$. But this means $N_b$ can be reduced by $1$ after a transposition of the subdiagram involving the rectangular blocks $(k,l)$ with $k>i,l\geq j$  if $p_j+p_{j+1}=q_{i+1}$, and similarly from transposing the subdiagram involving the rectangular blocks $(k,l)$ with $k\geq i,l> j$ if  $q_i+q_{i+1}=p_{j+1}$. Since such operations are all achievable by a sequence of partial transpositions, this contradicts with the fact that $Y=Y_{\rm min}$ minimizes $N_\text{b}$. Thus each interior rectangular block can only contribute non-negatively to $\mu(Y)$.
	Therefore $\mu(Y)$ is bounded from below by the contributions of the $N_{\rm b}=c$ corner blocks and the lemma follows.
\end{proof}
\begin{figure}[htb!]
	\centering
	\begin{tikzpicture}
	\node at (0,0) (a) {};
	\draw [fill=orange,orange] ($(a)+(1,0)$) rectangle ++(3,2);
	\draw (0,0) -- ++(0,3) --++(1,0)   --++(0,-1)--++(3,0)--++(0,-2) -- cycle
	node[above right=4cm]{};
	\draw [dashed] (1,0) -- ++(0,3);
	\draw [dashed] (0,2) -- ++(2,0);
	\draw[<->] ($(a)+(1.1,0)$) -- ++(0,2) node[midway, right] {\small $p_j$};
	\draw[<->] ($(a)+(1.1,2)$) -- ++(0,1) node[midway, right] {\small $p_{j+1}$};
	\draw[<->] ($(a)+(0,-.1)$) -- ++(1,0) node[midway, below] {\small $q_{i}$};
	\draw[<->] ($(a)+(1,-.1)$) -- ++(3,0) node[midway, below] {\small $q_{i+1}$};
	\draw (0.5,0) rectangle ++(.5,.5) node[pos=.5] {\tiny $1$};
	\node at ($(a)+(5.5,0)$) (b) {};
	\draw [fill=orange,orange] ($(b)+(1,0)$) rectangle ++(2,3);
	\draw ($(b)$) rectangle ++(3,3);
	node[above right=4cm]{};
	\draw [dashed] ($(b)+(1,0)$) -- ++(0,3); 		
	\draw[->] ($(a)+(4.2,2)$) -- ($(b)+(-.4,2)$) node[midway, above=.1] {\small PT};
	\end{tikzpicture}
	~~
	\begin{tikzpicture}
	\node at (0,0) (a) {};
	\draw [fill=orange,orange] ($(a)+(0,1)$) rectangle (2,4);
	\draw (0,0) -- ++(3,0) --++(0,1)   --++(-1,0)--++(0,3)--++(-2,0) -- cycle
	node[above right=4cm]{};
	\draw [dashed] (0,1) -- ++(3,0);
	\draw [dashed] (2,0) -- ++(0,2);
	\draw (0,.5) rectangle (.5,1) node[pos=.5] {\tiny $-1$};
	\node at ($(a)+(3.5,0)$) (b) {};
	\draw [fill=orange,orange] ($(b)+(0,1)$) rectangle ++(3,2);
	\draw ($(b)$) rectangle ++(3,3);
	node[above right=4cm]{};
	\draw [dashed] ($(b)+(0,1)$) -- ++(3,0); 		
	\draw[->] ($(a)+(2.4,2)$) -- ($(b)+(-.5,2)$) node[midway, above=.1] {\small PT};
	\draw[<->] ($(a)+(1.9,0)$) -- ++(0,1) node[midway, left] {\small $p_j$};
	\draw[<->] ($(a)+(1.9,1)$) -- ++(0,3) node[midway, left] {\small $p_{j+1}$};
	\draw[<->] ($(a)+(0, 1.1)$) -- ++(2,0) node[midway, above] {\small $q_{i}$};
	\draw[<->] ($(a)+(2, 1.1)$) -- ++(1,0) node[midway, above] {\small $q_{i+1}$};
	\end{tikzpicture}
	\caption{Reduction of $N_b$ (or corners) by partial transpositions. Note that the $\rm PT$ in the right diagram involves a sequence of three partial transpositions. 
	}
	\label{fig:reducingNb}
\end{figure}
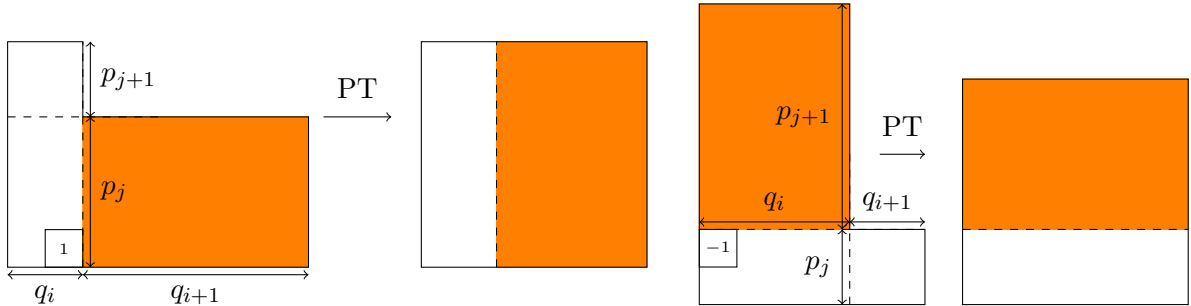

We will also need the following lemma concerning the general properties of terms that appear in the summand of \eqref{genNek}.
\begin{lemma}
	\label{lemmaGt}
	For a pair of Young diagrams $Y_1$ and $Y_2$, the following function 
	\ie
	G_{Y_1,Y_2}(a)  \equiv 
	\prod_{s\in Y_1 } {
		(a-(h_{Y_1}(s)-v_{Y_2}(s)))
	}
	\prod_{t\in Y_2 } {
		(-a-(h_{Y_2}(t)-v_{Y_1}(t))
	}
	\fe
	satisfies
	\ie
	G_{Y_1,Y_2}(a) =	G_{Y_1,Y_2^T}(a) 
	\fe
\end{lemma}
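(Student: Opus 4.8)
\emph{Overall strategy.} The assertion is an identity between two polynomials in $a$, each of degree $|Y_1|+|Y_2|$ with leading coefficient $(-1)^{|Y_2|}$, so it is enough to establish it ``one box at a time.'' The plan is an induction on $|Y_2|$. If $Y_2=\varnothing$ there is nothing to prove. Otherwise write $Y_1=[\lambda_1,\lambda_2,\dots]$, $Y_2=[\mu_1,\mu_2,\dots]$, pick a removable (outer--corner) box $\square=(\alpha_0,\beta_0)$ of $Y_2$, so that $\mu_{\alpha_0}=\beta_0$ and $\mu^T_{\beta_0}=\alpha_0$, and set $Y_2^-=Y_2\setminus\{\square\}$. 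The key structural observation is that corner removal commutes with transposition: $\square^T=(\beta_0,\alpha_0)$ is a removable box of $Y_2^T$ and $(Y_2^-)^T=(Y_2^T)\setminus\{\square^T\}=:(Y_2^T)^-$. Hence by the inductive hypothesis $G_{Y_1,Y_2^-}(a)=G_{Y_1,(Y_2^T)^-}(a)$, and the whole statement reduces to the single-box identity
\[
  \frac{G_{Y_1,Y_2}(a)}{G_{Y_1,Y_2^-}(a)}=\frac{G_{Y_1,Y_2^T}(a)}{G_{Y_1,(Y_2^T)^-}(a)}\,.
\]

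\emph{Computing the single-box ratio.} Deleting $\square$ increases $v_{Y_2}(\alpha,\beta)=\mu_\alpha-\beta$ by $1$ exactly in column $\alpha=\alpha_0$ and increases $h_{Y_2}(\alpha,\beta)=\mu^T_\beta-\alpha$ by $1$ exactly in row $\beta=\beta_0$, while $Y_1$ is untouched. Propagating this through the product defining $G_{Y_1,Y_2}$, and using $h_{Y_1}(\alpha_0,\beta)=\lambda^T_\beta-\alpha_0$ and $v_{Y_1}(\alpha,\beta_0)=\lambda_\alpha-\beta_0$, one obtains
\[
  \frac{G_{Y_1,Y_2}(a)}{G_{Y_1,Y_2^-}(a)}=(-a+\lambda_{\alpha_0}-\beta_0)\prod_{\beta=1}^{\lambda_{\alpha_0}}\frac{C(\beta)}{C(\beta)-1}\prod_{\alpha=1}^{\alpha_0-1}\frac{D(\alpha)}{D(\alpha)+1}\,,
\]
with $C(\beta)=a+\alpha_0+\beta_0-\lambda^T_\beta-\beta$ and $D(\alpha)=-a-\alpha_0-\beta_0+\lambda_\alpha+\alpha$. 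The ratio on the right of the single-box identity is the \emph{same} formula with $\alpha_0$ and $\beta_0$ interchanged (here one uses that $\square^T$ is a corner of $Y_2^T$), and crucially $C(\beta)$ and $D(\alpha)$ are themselves symmetric under $\alpha_0\leftrightarrow\beta_0$. Assuming without loss of generality $\alpha_0\le\beta_0$, cancelling the factors common to the two sides and substituting $b=a+\alpha_0+\beta_0$, the single-box identity collapses to
\[
  (\alpha_0+\lambda_{\alpha_0}-b)\prod_{\beta=\lambda_{\beta_0}+1}^{\lambda_{\alpha_0}}\frac{b-\lambda^T_\beta-\beta}{b-\lambda^T_\beta-\beta-1}=(\beta_0+\lambda_{\beta_0}-b)\prod_{\alpha=\alpha_0}^{\beta_0-1}\frac{b-\lambda_\alpha-\alpha}{b-\lambda_\alpha-\alpha-1}\,.
\]

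\emph{Reduction to a monomial identity.} Both sides of the last equation are rational functions of $b$ whose numerator degree exceeds the denominator degree by $1$, with the same leading coefficient ($-1$); hence they are equal if and only if their finite zero-- and pole--multisets coincide. Cross-multiplying these and using the telescoping $x^{\lambda_{\alpha-1}}-x^{\lambda_\alpha}=(x-1)\sum_{j=\lambda_\alpha}^{\lambda_{\alpha-1}-1}x^{j}$, this becomes the single monomial identity
\[
  \sum_{\alpha=\alpha_0+1}^{\beta_0}\ \sum_{j=\lambda_\alpha}^{\lambda_{\alpha-1}-1}x^{\alpha+j}=\sum_{\beta=\lambda_{\beta_0}+1}^{\lambda_{\alpha_0}}x^{\lambda^T_\beta+\beta}\,,
\]
which is proved directly: for a pair $(\alpha,j)$ on the left one has $\lambda^T_{j+1}=\alpha-1$, so with $\beta:=j+1$ the monomial equals $x^{\lambda^T_\beta+\beta}$ and the condition $\alpha_0<\alpha\le\beta_0$ reads $\alpha_0\le\lambda^T_\beta\le\beta_0-1$; by the duality $\lambda^T_\beta\ge k\iff\lambda_k\ge\beta$ this is precisely $\lambda_{\beta_0}+1\le\beta\le\lambda_{\alpha_0}$, yielding the right-hand side.

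\emph{Where the difficulty lies.} The step I expect to be the main obstacle is the explicit computation of the single-box ratio --- specifically, correctly isolating the finitely many non-telescoping ``corner'' contributions (the prefactor $-a+\lambda_{\alpha_0}-\beta_0$, and the fact that the $\beta$--range is $1\le\beta\le\lambda_{\alpha_0}$ on one side but $1\le\beta\le\lambda_{\beta_0}$ on the other, and similarly for the $\alpha$--range). This is the only place where the shape of $Y_1$ genuinely enters, and it is exactly the content packaged by the closing monomial identity; once it is in hand, the induction, the cancellation of common factors, and the zeros/poles comparison are all routine. (An equivalent route tracks instead the multiset of roots of $G$ and how it changes when $\square$ is deleted from $Y_2$; the bookkeeping is identical.)
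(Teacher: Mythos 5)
Your proof is correct, but it follows a genuinely different route from the paper's. The paper proves the lemma in two lines by importing an identity from Appendix A.1 of Kanno--Matsuo--Zhang (Eq.~\eqref{idforG}), which rewrites $G_{Y_1,Y_2}(a)$ so that all $Y_2$-dependence is manifestly transposition-invariant except for a single factor $\prod_{\A,\B=1}^{M}\bigl(a-h_{Y_1}((\A,\B))-v_{Y_2}((\A,\B))\bigr)$ over an $M\times M$ square, whose invariance is then read off from the multiset identity $\prod_\A(\A+\rho_\A)=\prod_\A(\A+\rho_\A^T)$ of Lemma~\ref{lemmadiaginv}. You instead build the identity from scratch by induction on $|Y_2|$, peeling off a removable corner, writing the single-box ratio $G_{Y_1,Y_2}/G_{Y_1,Y_2^-}$ explicitly, and matching the two sides by comparing divisors; I checked the ratio computation (the prefactor $-a+\lambda_{\alpha_0}-\beta_0$ from the deleted box, the $C$-product over column $\alpha_0$ of $Y_1$, the $D$-product over row $\beta_0$ of $Y_2^-$, and their $\alpha_0\leftrightarrow\beta_0$ counterparts), the reduction of the divisor comparison to your monomial identity, and the bijective proof of that identity via $\lambda^T_\beta\ge k\iff\lambda_k\ge\beta$; all steps are sound. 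What your approach buys is self-containedness --- no appeal to the external identity \eqref{idforG} is needed --- at the cost of more bookkeeping; what is lost is brevity. It is worth noting that your closing monomial identity $\sum_{\alpha,j}x^{\alpha+j}=\sum_\beta x^{\lambda^T_\beta+\beta}$ is essentially a localized form of the same combinatorial kernel the paper uses (the invariance of the boundary multiset $\Delta_B(Y)$ under transposition, Lemma~\ref{lemmadiaginv} and Eq.~\eqref{YTid}), so the two arguments ultimately rest on the same fact about Young-diagram boundaries, reached by different scaffolding.
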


\begin{proof}
	We use the following identity from Appendix A.1 of \cite{Kanno:2013aha},
	\ie
	G_{Y_1,Y_2}(a) &=
	{\prod_{(\A,\B)\in Y_1} (a+1+M-\A -\B)
		\prod_{(\A,\B)\in Y_2} (-a+1+M-\A -\B)
		\over 
		\prod_{\A=\B=1}^{M} (a+\A-\B)
	}
	\\
	&\times 
	\prod_{\A=\B=1}^{M} (a-(h_{Y_1}((\A,\B)))-v_{Y_2}((\A,\B)))
	\label{idforG}
	\fe
	which holds when $M$ is any positive integer larger than the widths and heights of $Y_1$ and $Y_2$. Clearly, the term in the first line of \eqref{idforG} is invariant under $Y_2 \to Y_2^T$, so it suffices to show that the last factor is also invariant. This in turn follows from the simple identity (see Lemma~\ref{lemmadiaginv} and Figure~\ref{fig:slidediag})
	\ie
	\prod_{\A=1}^M (\A + \rho_\A)
	=
	\prod_{\A=1}^M (\A + \rho_\A^T)
	\label{YTid}
	\fe
	where $\rho_\A$ are column lengths for $Y_2=[\rho_1,\dots, \rho_m]$ (and zero for $\A$ larger than the width of $Y_2$).
	
\end{proof}

The strategy of the proof for the main theorem \ref{thmvan} is as follows. We first focus on the case where $\vec Y$ contains a single nonempty Young diagram $Y_{\hat i}=Y$. By studying the structure of the summand $Z_{\vec Y}$ in \eqref{genNek} in the limit $\epsilon_{1,2}\to 1$, we argue that the contribution  to \eqref{m2nek} vanishes if $\mu(Y)>2$. Furthermore, if $\mu(Y)=2$ and $Y$ is not related by partial transpositions to a rectangular Young diagram, its contribution to \eqref{m2nek} cancels that of its partner from an involution with respect to a particular subdiagram which we will define. Consequently, only the cases with $Y$ related to a rectangular Young diagram by partial transpositions (either with $\mu(Y) = 2$ or $\mu(Y) = 1$, since in the latter case all Young diagrams are related by partial transpositions to a rectangular one) matter for \eqref{m2nek}. Next, we extend the analysis to general $\vec Y$. Once again, the structure of the summand in \eqref{genNek} implies that the only other relevant case corresponds to $\vec Y$ having two non-empty Young diagrams.  Up to a reordering of the $Y_i$, let us consider $\vec{Y} = (Y_1, Y_2, \varnothing,\dots,\varnothing)$. Moreover, one needs $\mu(Y_1)=\mu(Y_2)=1$ for a nonvanishing contribution to \eqref{m2nek}. We then show that the contribution from $\vec Y=(Y_1,Y_2,\varnothing,\dots,\varnothing)$ to \eqref{m2nek} (though non-vanishing) cancels with that from $\vec Y=(Y_1,Y_2^T,\varnothing,\dots,\varnothing)$. This completes the proof of Theorem~\ref{thmvan}. Below we explain each step in the above order in more detail.

We first take $\vec Y$ to be such that the only nonempty Young diagram is $Y_{\hat i}=Y$.  The quantity $Z_{\vec{Y}}$ defined in \eqref{genNek} takes the form
\ie
Z_{\vec Y}
=
F_1(Y) F_2(Y)
\label{ZYdec1}
\fe
where
\ie 
F_1(Y)  &=
\prod_{s\in Y} {
	(E(0,Y,Y,s)-i m -\epsilon_+/2) (-E(0,Y,Y,s)-i m +\epsilon_+/2)
	\over  
	E(0,Y,Y,s)  (\epsilon_+-E(0,Y,Y,s))
} \,,
\\
F_2(Y)  &= \prod_{ j=1,j\neq \hat i}^N	 
\prod_{s\in Y } {
	(E(a_{\hat ij},Y,\varnothing,s)-i m -\epsilon_+/2) 
	\over  
	E(a_{\hat ij},Y,\varnothing,s) 
}
{(E(a_{\hat ij},Y,\varnothing,s)+i m -\epsilon_+/2) 
	\over  
	E(a_{\hat ij},Y,\varnothing,s) -\ep_+
}\,.
\label{ZYdec2}
\fe
Our goal here is to understand properties of $Z_{\vec Y}$ in the limit $\epsilon_-\to 0$ (and $\epsilon_+\to 2$) in relation to the shape of the Young diagram $Y$.

The quantity $F_2(Y)$ is manifestly finite and nonzero (for generic $a_i$) for all $m, \ep_\pm$, while $F_1(Y)$ has a subtler behavior in the limit $\ep_-\to 0$.  To understand how $F_1(Y)$ behaves in this limit, we write it as
\ie
F_1(Y)=
F_1^0 (Y) F_1^+(Y) F_1^-(Y)F_1^r(Y) \,,
\fe
where we have decomposed the product over $s\in Y$ as $Y=Y_0\sqcup Y_+\sqcup Y_-\sqcup Y_{r}$ defined by 
\es{Y0pmrDefs}{
	Y_0 &= \{s\in Y |h(s)-v(s)=0 \} \,,
	\\
	Y_\pm &= \{s\in Y |h(s)-v(s)=\pm 1 \} \,,
	\\
	Y_r &= Y \backslash (Y_0\sqcup Y_+ \sqcup Y_-) \,,
}
and $F_1^0,F_1^\pm, F_1^r$ correspond to products over these disjoint subsets, respectively.  Note that according to our previous definition,  $n_0=|Y_0|$ and $n_{\pm 1}=|Y_\pm|$. More explicitly, we have
\ie
F_1^0 &= \prod_{s\in Y_0} {(h \ep_-+{\ep_-\over 2} + m)(h \ep_-+{\ep_-\over 2} - m)  \over (h\ep_- -\ep_2) (h \ep_-+\ep_1)} \,,
\\
F_1^+ &= \prod_{s\in Y_+} {(h \ep_-+{\ep_+\over 2} + m)(h \ep_-+{\ep_+\over 2} - m)  \over h\ep_- (h \ep_-+\ep_+)} \,,
\\
F_1^- &= \prod_{s\in Y_-} {(h \ep_-+{\ep_+\over 2} + m -2\ep_2)(h \ep_-+{\ep_+\over 2} - m -2\ep_2)   \over (h\ep_- -2\ep_2) (h+1)\ep_- } \,,
\fe
where, for notational simplicity, we suppressed the $s$ dependence in $h(s)$. For the quantity \eqref{m2nek} of interest, we need to take the limit $\epsilon_-\to 0$. In this limit, $F_1^0$ could potentially vanish, $F_1^\pm$ could potentially blow up, and $F^r_1$ is finite and nonzero. 

Expanding in small $\ep_-$ (and taking $\epsilon_+=2$), we have 
\ie
F_1^0 &=  \prod_{ s\in Y_0} \left(-m^2-(h+1/2)^2\ep_-^2\right) (1+\cO(\ep_-^2)) \,,
\\
F_1^+ &= \prod_{ s\in Y_+} {1\over 2 h \ep_-} \left(1+{3\over 2} h \ep_-+\cO(\ep_-^2)\right ) \,,
\\
F_1^- &= \prod_{ s\in Y_-} {1\over -2 (h+1) \ep_-} \left (1-{3\over 2} (h+1) \ep_-+\cO(\ep_-^2)\right ) \,.
\fe
From lemma~\ref{lemmabd}, $\mu(Y)=2n_0 -n_1-n_{-1}\geq 1$, which implies that there can be at most a simple pole in $\ep_-$ from $F_1$ at order $m^2$. For the purpose of extracting the order $m^2$ term from $Z_Y$,  we can take the truncation (which we denote by $\simeq$)
\ie
F_1^0 \simeq & \prod_{ s\in Y_0} \left(-m^2-(h+1/2)^2\ep_-^2\right) ,
\\
F_1^+\simeq& \prod_{ s\in Y_+} {1\over 2 h \ep_-} \left(1+{3\over 2} h \ep_-\right ),
\\
F_1^- \simeq & \prod_{ s\in Y_-} {1\over -2 (h+1) \ep_-} \left (1-{3\over 2} (h+1) \ep_-\right ),
\\
F_1^r \simeq &\left. F_1^r\right|_{m=0}.
\label{F1form}
\fe
We recollect the relevant contributions to $F_1$ as
\ie
F_1(Y)\simeq m^2\ep_-^{\mu(Y)-2}(F_1^{(0)}+\ep_- F_1^{(1)})F_1^r
\label{F1epm}
\fe
where 
\ie
F_1^{(0)}(Y) = {1 \over 
	\prod_{s\in Y_+} (2h) \prod_{s\in Y_-} (-2v)
} \prod_{s\in Y_0} \left(h+{1\over 2}\right)^2 \sum_{s\in Y_0} {1\over \left(h+{1\over 2}\right)^2}
\label{F10form}
\fe
and
\ie
F_1^{(1)}(Y) = 
{3\over 2} F_1^{(0)}(Y) (\sum_{s\in Y_+}  h-\sum_{s\in Y_-}  v) \,.
\fe
In addition,
\ie
F_1^r=F_1^{r(0)}+\ep_- F_1^{r(1)} \,,
\fe
where
\ie
F_1^{r(0)}(Y) =& 
\prod_{s\in Y } {
	(h-v) ^2
	\over  
	(h-v)^2-1
} \,.
\fe
Similarly, it suffices to set $m=0$ for $F_2$, 
\ie
F_2(Y)\simeq F^{(0)}_2(Y)+ \ep_- F^{(1)}_2(Y)   \,, 
\fe
where we only kept the terms up to first order in $\ep_-$.
Here 
\ie
F^{(0)}_2(Y)
=&\prod_{ j=1,j\neq \hat  i}^N	 
\prod_{s\in Y } {
	(i a_{j\hat i} -(h_{\varnothing}(s)-v_Y(s)) ^2
	\over  
	(i a_{j\hat i} -(h_{\varnothing}(s)-v_Y(s))^2-1
}
\fe
and 
\ie
&F^{(1)}_2(Y)
\\
=&
F^{(0)}_2(Y)
\sum_{ j=1,j\neq \hat i}^N	 
\sum_{s\in Y }
{
	h_{\varnothing}(s)+v_Y(s)+1 
	\over  
	(i a_{j\hat i} - (h_{\varnothing}(s)-v_Y(s) )(i a_{j\hat i} - (h_{\varnothing}(s)-v_Y(s)+1 ))(i a_{j\hat i} - (h_{\varnothing}(s)-v_Y(s)-1 ))
} \,.
\fe

We have thus spelled out which parts of  \eqref{ZYdec1} and \eqref{ZYdec2} matter for evaluating the  limit $\epsilon_{1,2}\to 1$ needed in \eqref{m2nek}, and how they depend on the combinatoric properties of the Young diagram $Y$. We are ready to see which Young diagrams can potentially contribute to \eqref{m2nek}.
From Lemma~\ref{lemmabd}, any Young diagram $Y$ that is not related by partial transposition to a rectangular one has $\mu(Y)\geq 2$. From \eqref{F1epm}, the contribution to \eqref{m2nek} from this particular Young diagram vector $\vec Y$ (with the single non-empty Young diagram $Y$) goes as,
\ie
I_{k,\vec Y} \sim \lim_{\ep_-\to 0} \ep_-^{\mu(Y)-2}
\fe
Thus if  $\mu(Y)>2$, it vanishes identically. We then move onto cases of  $Y$ with $\mu(Y)=2$, and its contribution to \eqref{m2nek} is then deduced from
\ie
F_1(Y)\simeq m^2 F_1^{(0)} (Y)  \,, \qquad 
F_2 (Y)\simeq F_2^{(0)}(Y) 
\fe
to be
\ie
I_{k,\vec Y} =2F_1^{(0)} (Y)F_2^{(0)} (Y) \,.
\fe
It is easy to see that the corresponding $Y_{\rm min}$ takes the form as in Figure~\ref{fig:Ymin} from stacking $3$ rectangular boxes of sizes $p_1\times q_1$, $p_2\times q_1$, and $p_1\times q_2$ respectively where the sides satisfy (see proof of Lemma~\ref{lemmabd})
\ie
q_2\geq p_1+p_2+1\,,\qquad q_1>p_2\,.
\fe
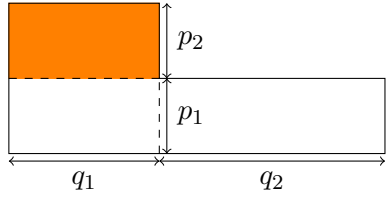
\begin{figure}[htb!]
	\centering
	\begin{tikzpicture}
	\node at (0,0) (a) {};
	\draw [fill=orange,orange] ($(a)+(0,1)$) rectangle ++(2,1);
	\draw (0,0) -- ++(0,2) --++(2,0)   --++(0,-1)--++(3,0)--++(0,-1) -- cycle
	node[above right=4cm]{};
	\draw [dashed] (2,0) -- ++(0,1);
	\draw [dashed] (0,1) -- ++(2,0);
	\draw[<->] ($(a)+(2.1,0)$) -- ++(0,1) node[midway, right] {\small $p_1$};
	\draw[<->] ($(a)+(2.1,1)$) -- ++(0,1) node[midway, right] {\small $p_{2}$};
	\draw[<->] ($(a)+(0,-.1)$) -- ++(2,0) node[midway, below] {\small $q_{1}$};
	\draw[<->] ($(a)+(2,-.1)$) -- ++(3,0) node[midway, below] {\small $q_{2}$};
	\end{tikzpicture}
	\caption{ 
		$Y_{\rm min}$ for a Young diagram $Y$ with $\m(Y)=2$ and not related to a rectangular diagram by partial transpositions.
	}
	\label{fig:Ymin}
\end{figure}
Under partial transpositions, we will show that the $p_2\times q_1$ block of $Y_{\rm min}$ gets mapped to a Young subdiagram of the resulting Young diagram $Y$. We define $W^Y$ to be this particular subdiagram in $Y$ (see Figure~\ref{fig:ptW} for an example).   Then we define the Young diagram  $Y'$ (correspondingly the Young diagram vector $\vec Y'$ with a single non-empty Young diagram $Y'_{\hat i}=Y'$) to be obtained from $ Y$ by a sequence of partial transpositions  that send $W^Y$ to its transpose while keeping the rest of the diagram fixed.(It is always possible to find such a sequence of partial transpositions.)\footnote{The existence of such a sequence of partial transpositions is guaranteed for $Y$ of the form in Figure~\ref{fig:Yform} (with the orange block $W^Y$ possibly replaced by its partial transpositions). In the later part of the section, we will prove that for $Y$ with $\mu(Y)=2$ and not related to rectangular diagrams by partial transpositions, it takes the form as in Figure~\ref{fig:Yform}. Here we assume this is the case. Note that $\mu(Y)=2$ demands the bottom-left $Y_{t\times r}$ sub-diagram (gray block) to contribute $-1$ to $\mu(Y)$  since each of the three exterior colored blocks along the North-East boarder (orange or gray) contribute at least 1 to $\mu(Y)$. This means we have either  $\lambda_1^T-r=\lambda_r$ or $\lambda_t^T=\lambda_1-t$(see proof of Lemma~\ref{lemmabd}). In the former case, $\iota_Y$ is achieved by ${\rm PT}_{r+1}$ followed by ${\rm PT}_{r+t+1}$; in the latter case, $\iota_Y={\rm PT}_1 \cdot  {\rm PT}_{r+t+1} \cdot  {\rm PT}_{t+1}\cdot  {\rm PT}_1 $. 
}
We denote this involution by  $\iota_Y$.

\begin{figure}[htb!]
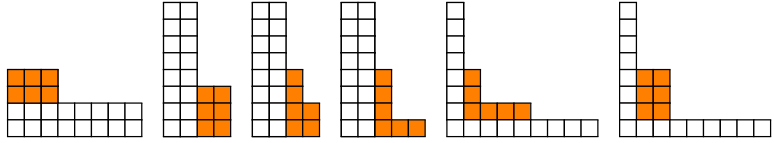

	\centering
 
\ytableausetup
{boxsize=.5em}	\ydiagram[*(orange)  ]
{3,3}
*[*(white)]{3,3,8,8}
~
\ydiagram[*(white)  ]
{2,2,2,2,2,2,2,2}
*[*(orange)]{2,2,2,2,2,4,4,4}
~
\ydiagram[*(white)  ]
{2,2,2,2,2,2,2,2}
*[*(orange)]{2,2,2,2,3,3,4,4}
~															
\ydiagram[*(white)  ]
{2,2,2,2,2,2,2,2}
*[*(orange)]{2,2,2,2,3,3,3,5}
~	\ydiagram[*(white)  ]
{1,1,1,1,1,1,1,9}
*[*(orange)]{1,1,1,1,2,2,5}
~	\ydiagram[*(white)  ]
{1,1,1,1,1,1,1,9}
*[*(orange)]{1,1,1,1,3,3,3}																	

	\caption{
An example with $Y_{\rm min}=[4,4,4,2,2,2,2,2]$ (first diagram above). The other $Y$'s are related by partial transpositions to $Y_{\rm min}$ and the subdiagram $W^Y$ is marked in orange. }
	\label{fig:ptW}
\end{figure}

We show below that
\ie
I_{k,\vec Y}+I_{k,\vec Y'}=0 \,.
\fe
This is a consequence of the properties of $F_1^{(0)} (Y)$ and $F_2^{(0)} (Y)$ under  partial transpositions: the latter is even whereas the former is odd for the particular type of $Y$ and involution $\iota_Y$ we consider here.

We first show that $F_2^{(0)}(Y)$ is invariant under partial transposition with respect to an arbitrary subdiagram $P$.  We have the decomposition
\ie
F^{(0)}_2(Y)=\prod_{ j=1,j\neq \hat i}^N	 
\prod_{s\in Y \backslash P } {
	(i a_{j\hat i} -(h_{\varnothing}(s)-v_Y(s)) ^2
	\over  
	(i a_{j\hat i} -(h_{\varnothing}(s)-v_Y(s))^2-1
}
\prod_{s\in  P } {
	(i a_{j\hat i} -(h_{\varnothing}(s)-v_Y(s)) ^2
	\over  
	(i a_{j\hat i} -(h_{\varnothing}(s)-v_Y(s))^2-1
} \,.
\label{decomp}
\fe
Since for general $s=(\A,\B)\in Y$ and $Y=[\lambda_1,\dots, \lambda_l]$
\ie
h_{\varnothing}(s)-v_Y(s)=-\A-(\lambda_\A-\B)  \,, 
\fe
where $\lambda_\B^T$ does not appear,  the first factor in \eqref{decomp} from $Y \backslash P$ does not change under $P\to P^T$. It is also easy to see that the second factor is also invariant. For $(\A,\B)\in P$ and $(\B,\A)\in P^T $ with $P=[\rho_1,\dots, \rho_m]$,
\ie
h_{\varnothing}(\A,\B)-v_Y(\A,\B) &= -\A-(l-m)-\rho_{\A}+\B \,, 
\\
h_{\varnothing}(\B,\A)-v_{{\rm PT}(Y)}(\B,\A)  &= -\B-(l-m)-\rho^T_{\B}+\A 
\fe
are mapped to each another through
\ie
\A \to 1+\rho_\B^T-\A \,, \qquad \B \to 1+\rho_\A-\B \,.
\fe

Thus we've shown
\ie
F^{(0)}_2(Y)=F^{(0)}_2({\rm PT}(Y)) \,.
\label{F20pf}
\fe

Let's now show that 
that $F_1^{(0)}(Y)$ is odd under the involution $\iota_Y$ if $\mu(Y)=2$. We rewrite $F_1^{(0)}(Y)$ as
\ie
F_1^{(0)}(Y)=&f_1(Y) f_2(Y) f_3(Y)
\\
f_1=&{1 \over 
	\prod_{s\in Y_+} (2h) \prod_{s\in Y_-} (-2v)
} 
\\
f_2=&\prod_{s\in Y_0} \left(h+{1\over 2}\right)^2 \sum_{s\in Y_0} {1\over \left(h+{1\over 2}\right)^2}
\\
f_3=& 
\prod_{s\in Y_r } {
	(h-v) ^2
	\over  
	(h-v)^2-1
} \,.
\fe
Once again, we separate the products above as
\ie
f_i (Y)=\prod_{s\in Y \backslash W^Y} \dots \prod_{s\in W^Y}  \dots \equiv  f_i^1(Y)  f_i^2(Y)
\fe
where $f_i^1$ and $f_i^2$ only receives contributions from entries in $Y\backslash W^Y$ and $W^Y$ respectively. Since $\iota_Y$ simply transposes $W^Y$, which exchanges $h$ and $v$, we have 
\ie
f_1^2(Y) =-f_1^2(\iota_Y(Y)) \,, \qquad f_2^2(Y)=f_2^2(\iota_Y(Y)) \,,
\qquad f_3^2(Y)=f_3^2(\iota_Y(Y)) \,.
\fe
Now if $Y$ takes the form as in Figure~\ref{fig:Yform} with $W^Y$ a rectangular block (or type either $p_2\times q_1$ or its transpose as in Figure~\ref{fig:Ymin}), then clearly $f^1_i$ are invariant under the involution $\iota_Y$ since the values of $h$ (or $v$) do not change for boxes in the gray parts of the Young diagram as in Figure~\ref{fig:Yform}. Furthermore, if $W^Y$ is replaced  its partial transpositions (leaving the rest of $Y$ fixed), the contributions to $f^1_i$ still come from the same gray blocks as in Figure~\ref{fig:Yform}, and consequently invariance of $f^1_i$ under $\iota_Y$ continues to hold thanks to corollary~\ref{coro:Ysize} (see also proof of Lemma~\ref{lemmainv}). Since $Y$ is related to $Y_{\rm min}$ in Figure~\ref{fig:Ymin} by a sequence of partial transpositions, it is sufficient to show that any partial transposition preserves the form of $Y$ up to partial transpositions restricted to the subdiagram $W^Y$ in Figure~\ref{fig:Yform} (in particular $W^Y$  remains a Young subdiagram under any partial transposition of $Y$).

We proceed by induction. First note $Y_{\rm min}$ is of the form in Figure~\ref{fig:Yform} with $r=0,t=p_1$ and $p=p_2,q=q_1$. Now suppose we start with a general $Y$ as in Figure~\ref{fig:Yform}, we show that any partial transposition PT$_\A$ preserves its form, namely only the gray blocks contain boxes with $h-v=0$ or $\pm 1$ among all boxes in $Y\backslash W^Y$. Clearly, the only nontrivial partial transpositions are PT$_\A$ with $\A=1$ which corresponds to the usual transposition, or $1<\A\leq r$ and $\lambda_1^T-\A+1\leq \lambda_{\A-1}$,  or  $\lambda_1^T\geq \A\geq r+q+1$ and $\lambda^T_1-\A+1\leq t$. The first case clearly preserves the form of $Y$ as in Figure~\ref{fig:Yform}. In the second case, the induction hypothesis implies the values $h-v$ for the boxes just below the top left gray block in Figure~\ref{fig:Yform} satisfy
\ie
r-\A +(q+p+t)-\lambda_\A \leq -2
\label{ind1}
\fe
for $1\leq \A \leq r$, and this is sufficient to ensure that the left white block in Figure~\ref{fig:Yform} does not contain entries with $h-v$ equal to $0$ or $\pm 1$. Suppose we perform a partial transposition PT$_x(Y)$ for some $1< x\leq r$, then the resulting Young diagram is again of the form in Figure~\ref{fig:Yform} with the change in the parameters,
\ie
(p,q,t,r)\to (q,p,r-x+1,t+x-1)
\fe
and to ensure that the (new) left white block as in Figure~\ref{fig:Yform} does not contain entries with $h-v$ equal to $0$ or $\pm 1$, we need
\ie
t+x-1-\A +(q+p+t-x+1)-\lambda_\A \leq -2 \,,
\fe
which follows from \eqref{ind1}. A similar argument applies to the third case, focusing on the right bottom white block in Figure~\ref{fig:Yform}.

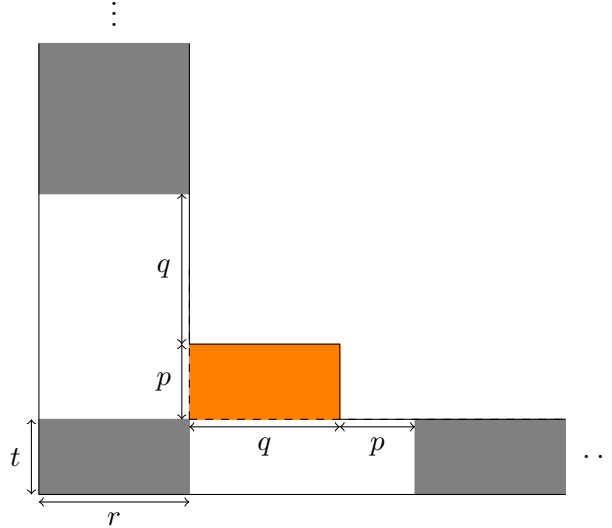
\begin{figure}[htb!]
	\centering
	\begin{tikzpicture}
	\node at (0,0) (a) {};
	\draw [fill,orange] ($(a)+(2,1)$) rectangle ++(2,1);
	\draw [fill,gray] ($(a)+(0,4)$) rectangle ++(2,2);
	\draw [fill,gray] ($(a)+(5,0)$) rectangle ++(2,1);
	\draw [fill,gray] ($(a)$) rectangle ++(2,1);
	\draw (0,0) -- ++(0,6);
	\draw (0,0) -- ++(7,0);
	node[above right=4cm]{};
	\draw [dashed] (2,1) -- ++(0,2);
	\draw [dashed] (2,1) -- ++(5,0);
	\draw  (2,6) -- ++(0,-4)-- ++(2,0)-- ++(0,-1)-- ++(3,0);
	\draw[<->] ($(a)+(2-.1,1)$) -- ++(0,1) node[midway, left] {\small $p$};
	\draw[<->] ($(a)+(2-.1,2)$) -- ++(0,2) node[midway, left] {\small $q$};
	\draw[<->] ($(a)+(4,1-.1)$) -- ++(1,0) node[midway, below] {{\small $p$}};
	\draw[<->] ($(a)+(0,-.1)$) -- ++(2,0) node[midway, below] {\small $r$};
	\draw[<->] ($(a)+(0-.1,0)$) -- ++(0,1) node[midway, left] {\small $t$};
	\draw[<->] ($(a)+(2,1-.1)$) -- ++(2,0) node[midway, below] {\small $q$};
	\node at (7.5,0.5)   {$\dots$};
	\node at (1,6.5)  {$\vdots$};
	\end{tikzpicture}
	\caption{ 
		Young diagram $Y$ related to $Y_{\rm min}$ in Figure~\ref{fig:Ymin} by partial transpositions. The orange block denotes $W^Y$. The gray blocks contain all the boxes with $h-v=0$ or $\pm 1$ in $Y\backslash W^Y$. Note that $r,t$ are non-negative integers. If $t=0$, the only gray block is the one on the top left; if $r=0$, the only gray block is the one on the bottom right. In particular $Y_{\rm min}$ in Figure~\ref{fig:Ymin} corresponds to the special case $r=0,t=p_1$ and $p=p_2,q=q_1$. 
	}
	\label{fig:Yform}
\end{figure}

We have thus proven that
\ie
F_1^{(0)}(Y)=-F_1^{(0)}(\iota_Y(Y)) 
\label{F10pf}
\fe
for the special type of Young diagram $Y$ and involution $\iota_Y$ considered here.

Putting together the facts that $F_1^{(0)}(Y)$ and $F_2^{(0)}(Y)$ are respectively odd and even under the involution $\iota_Y$, it follows that the contributions to \eqref{m2nek} from $Y$ and $\iota_Y(Y)$ cancel pairwise in the $\epsilon_{-}\to 0$ limit as a consequence of \eqref{F10pf} and \eqref{F20pf}.  We have now shown that in the case where $\vec Y$ contains a single non-empty Young diagram, if this Young diagram is not related to a rectangular diagram by partial transpositions, it does not contribute to \eqref{m2nek}.

We now explain  why only $\vec Y$ with a single non-empty Young diagram contributes to \eqref{m2nek}. The form of $F_1$ and the behavior \eqref{F1epm} implies the only other relevant case involves two  non-empty Young-diagram in $\vec Y$ which we can take to be  
\ie
\vec Y=\{Y_1,Y_2,\varnothing,\dots,\varnothing\} \,,
\fe
up to an $S_N$ permutation, and we must have $\mu(Y_1)=\mu(Y_2)=1$ (i.e.~related to non-square rectangular Young diagram by partial transpositions). In this case, we have pair-wise cancellations between such $\vec Y$ and the partner $\vec Y'$ with $Y_2$ (or $Y_1$) replaced by its transpose  
\ie
\vec Y'=\{Y_1,Y_2^T,\varnothing,\dots,\varnothing\} \,.
\fe
To see this, recall that the summand $Z_{\vec Y}$ of \eqref{genNek} in this case decomposes as the following product
\ie
Z_{\vec Y} = Z_{Y_1} Z_{Y_2} Z_{Y_1,Y_2} \,,
\fe
where each factor only depends on the Young diagram(s) in the subscript (e.g. $Z_{Y_1}$ only involves the product in \eqref{genNek} with $i=1$ and $j\neq 2$ or $j=1$ and $i\neq 2$). From the argument we presented for the case $\vec Y=\{Y,\varnothing,\dots,\varnothing\}$, it's easy to see that $Z_{Y_1}$ is invariant under $Y_2\to Y_2^T$ while $Z_{Y_2}$ is odd since $\mu(Y_2)=1$ (see equation \eqref{F1form} and \eqref{F10form}). Thus it suffices to prove that in the limit \eqref{m2nek}
\ie
Z_{Y_1,Y_2}
\simeq 
&\prod_{s\in Y_1 } {
	(i a_{21} -(h_{Y_1}(s)-v_{Y_2}(s)) ^2 
	\over  
	(i a_{21} -(h_{Y_1}(s)-v_{Y_2}(s))^2-1
}
\prod_{t\in Y_2 } {
	(i a_{12} -(h_{Y_2}(t)-v_{Y_1}(t))^2
	\over  
	(i a_{12} -(h_{Y_2}(t)-v_{Y_1}(t))^2-1
}
\fe
satisfies
\ie
Z_{Y_1,Y_2} =Z_{Y_1,Y_2^T} \,,
\label{Y1Y2t}
\fe 
which indeed follows from  Lemma~\ref{lemmaGt}.

We have thus finished the proof of theorem~\ref{thmvan}: the limit \eqref{m2nek} of double mass derivatives of the Nekrasov partition function \eqref{genNek} is dominated by $\vec Y$ with a single non-empty Young diagram of rectangular shape (and its partial transpositions).

\section{Recursion relations} \label{sec:recursion}

In this appendix we will give a proof of (\ref{eq:pq-tab}) for the special cases in which the nontrivial Young diagram in the vector  $\vec Y$ in \eqref{rectangular} is of the form  $ {Y}_{1\times k}$ or  $ {Y}_{k \times 1}$. Recall a Young diagram vector $\vec{Y}$'s contribution to the $k$-instanton partition function is given by
  \es{Ik-gen-app}{
Z_{k, \vec{Y}}(m,a_{ij}, \epsilon_i) &= {1 \over k!} \left( \epsilon_+ (m^2 + \epsilon^2_-/4 ) \over 
\epsilon_1 \epsilon_2 (m^2 + \epsilon^2_+/4) \right)^k
\oint \prod_{I=1}^k {d\phi_I\over 2\pi}
\prod_{j=1}^N {(\phi_I -a_j)^2 - m^2  \over (\phi_I  -a_j)^2 +  \epsilon_+^2/4}
\\
&\times
\prod_{I<J}^k { \phi_{IJ}^2[\phi_{IJ}^2 + \epsilon_+^2 ] [\phi_{IJ}^2+(i m-\epsilon_-/2)^2][\phi_{IJ}^2+(i m+\epsilon_-/2)^2]\over [\phi_{IJ}^2 + \epsilon^2_1][\phi_{IJ}^2 + \epsilon^2_2]
[\phi_{IJ}^2+(i m+\epsilon_+/2)^2][\phi_{IJ}^2+(i m-\epsilon_+/2)^2]} \, .
}
For the round $S^4$, by setting $\epsilon_1=\epsilon_2=1$ we have
\ie \label{kInstFull-apn}
Z_{k, \vec Y}(m,a_{ij})=&\, {1\over k!} \left( 2m^2\over m^2+1\right)^k
\oint \prod_{I=1}^k {d\phi_I\over 2\pi}
\prod_{i=1}^N {(\phi_I-a_i)^2-m^2\over (\phi_I-a_i)^2+1}
\\
&\times
\prod_{I<J}^k { \phi_{IJ}^2(\phi_{IJ}^2+4)(\phi_{IJ}^2-m^2)^2\over (\phi_{IJ}^2+1)^2 [(\phi_{IJ}-m)^2+1][(\phi_{IJ}+m)^2+1 ]} \,. 
\fe 
In this appendix, we will first keep $\epsilon_1, \epsilon_2$ arbitrary, and set them equal to $1$ at the end. That is because with general $\epsilon_1, \epsilon_2$, the instanton partition function only has simple poles and obeys a simple recursion relation \cite{Kanno:2013aha, Nakamura:2014nha}.  Indeed, it is straightforward to see that $Z_{k, \vec{Y}}(m,a_{ij}, \epsilon_i)$ satisfies the following recursion relation
 \begin{align} \label{Ik1-gen}
Z_{k+1, \vec{Y}_+}&(m, a_{ij}, \epsilon_i)=  Z_{k, \vec{Y}}(m,a_{ij}, \epsilon_i) {1\over k+1} { \epsilon_+ (m^2+\epsilon^2_-/4) \over 
\epsilon_1 \epsilon_2 (m^2 + \epsilon^2_+/4 )  }
 \oint   {d\phi \over 2\pi}
\prod_{j=1}^N {(\phi -a_j)^2 - m^2  \over (\phi  -a_j)^2 +  \epsilon_+^2/4}
\\
\!\!\!  \times
\prod_{J=1}^k  & { (\phi -\hat{\phi}_{J})^2  [(\phi -\hat{\phi}_{J})^2 + \epsilon^2_+ ] [(\phi -\hat{\phi}_{J})^2 + (im-\epsilon_-/2)^2][(\phi-\hat{\phi}_{J})^2 + (im+\epsilon_-/2)^2]\over [(\phi-\hat{\phi}_{J})^2 + \epsilon^2_1][(\phi-\hat{\phi}_{J})^2 + \epsilon^2_2]
[(\phi-\hat{\phi}_{J})^2+(im +\epsilon_+/2)^2][(\phi-\hat{\phi}_{J})^2+(im-\epsilon_+/2)^2]} \, , \nonumber
\end{align}
where $\vec{Y}_+$ (with $k+1$ boxes) is a Young diagram vector that is obtained from $\vec{Y}$ by adding one more box. The contour for the integration over $\phi$ is determined by the position of the box that we add to $\vec{Y}$ during the recursion procedure, and $\hat \phi_J$ are the poles for evaluating $Z_{k, \vec{Y}}(m,a_{ij}, \epsilon_i)$, which are determined by $\vec{Y}$ according to (\ref{YTpoles}). 

We will here consider the contributions from Young diagram vector  $\vec Y$ as in \eqref{rectangular} where the nontrivial Young diagram is a single row Young diagram ${Y}_{1\times k}$ or a single column Young diagram ${Y}_{k \times 1}$. These Young diagrams can be constructed recursively by adding one box at a time, and one can solve the recursion relation straightforwardly for the instanton partition function at order $m^2$. 

We define
\ie
I_{1 \times k}(\epsilon_1,\epsilon_2) = \partial^2_m Z_{k, (\dots,{Y}_{1\times k},\dots)}(m,a_{ij}, \epsilon_i) \big{|}_{m=0} \, .
\fe
The recursion relation for $I_{1 \times k}(\epsilon_1,\epsilon_2)$ is solved by beginning with the initial data, which we take to be the contribution from the $\vec Y$ with nontrivial Young diagram $Y_{1\times 2}$.  For simplicity, we will begin by setting $N=1$ since the argument for general $N$ is more complicated, in which case the initial data is 
\ie
I_{1 \times 2}(\epsilon_1,\epsilon_2) =&\, \oint {dz \over 2\pi}  \left[ \frac{3  (z- a_1) (z-a_1 +i  
   \epsilon_1)}{4 \epsilon_1^2  (z-a_1-i \epsilon_1) (z -a_1 +2 i \epsilon_1)} {1\over  \epsilon_1-\epsilon_2} \right. \cr
& + \left. \frac{3 \epsilon_1^4 +  (z - a_1) (z -a_1 +i \epsilon_1) \left(22 \epsilon_1^2 +8
   (z- a_1) (z - a_1 + i \epsilon_1)\right) }{4 \epsilon_1^3 (z - a_1 -i\epsilon_1)^2 (z -a_1+2 i \epsilon_1)^2} \right]  \, ,
\fe
where the integration is around the poles at $z =a_1 +i \epsilon_+/2$, but the integrand is our focus here. 
From this we find that, in general, the $\vec Y$ with nontrivial Young diagram ${Y}_{1\times k}$ gives the following contribution
\begin{align}
&\, I_{1 \times k}(\epsilon_1,\epsilon_2) =\oint {dz  \over 2\pi}   \left[  \frac{ \left( 2k^2 -2\right) (z- a_1) ( (z -a_1)+i (k-1)
   \epsilon_1)}{ \epsilon_1^2 \, k^2  k! ( (z -a_1)-i \epsilon_1) ( ( z-a_1) +i k \epsilon_1)} {1\over  \epsilon_1-\epsilon_2}  \right.  \\
\!\!\! +& \left.  \frac{ (k-1)^2 (k+1) \epsilon_1^4 +  (z- a_1) (z-a_1 +i (k-1) \epsilon_1) \left( (6k^2-2) \epsilon_1^2 +4k
   (z - a_1) (z - a_1 + i (k-1) \epsilon_1)\right) }{k\, k! \, \epsilon_1^3 (z - a_1 -i\epsilon_1)^2 (z-a_1+ i k  \epsilon_1)^2} \right] \, . \nonumber
\end{align}
Here, we only keep the terms of order ${O}((\epsilon_1- \epsilon_2)^0)$. 
The contribution from the $\vec Y$ with nontrivial Young diagram $ {Y}_{k \times 1}$ is obtained by the transposition, which simply exchanges $\epsilon_1$ with $\epsilon_2$, 
\ie
I_{k \times 1}(\epsilon_1,\epsilon_2) = I_{1\times k} (\epsilon_2,\epsilon_1) \, .
\fe
The singular terms cancel in the sum of these two contributions, and the final result is given by 
\ie \label{N=1}
I_{1\times k} & \, = \left[ I_{1\times k}(\epsilon_1,\epsilon_2) + I_{k \times 1}(\epsilon_1,\epsilon_2) \right] \big{|}_{\epsilon_{1}=\epsilon_{2}=1}  \cr
& \,  = \oint {dz \over 2\pi}
\prod_{k_a=0}^{k-1}
 {(z-a_1 + k_a i )^2\over (z-a_1 + k_a i)^2+1 } 
 \times
{1 \over k!} \left[ 
{4\over 1+\delta_{1k}} \left(1+{1\over k^2} \right) \right. \cr
& ~~~ + \left. 
{2 i (k+1)(k-1)^2 \over k(z-a_1+ki) (z-a_1+(k-1)i) (z-a_1)}
\right] \, , 
\fe
where we have set $\epsilon_1= \epsilon_2=1$ at the end of computation. Having determined the $N=1$ case, the generalization of the above integrand to arbitrary $N$ is straightforward. This can be seen by using \eqref{kInstFull-apn} as follows. 

If $I_{1\times k}$ were computed from \eqref{kInstFull-apn}, we would take consecutive residues surrounding the poles at $\phi^2_{IJ}+1=0$. Since they are higher order poles, one needs to expand the integrand when taking the residues. The constant term ${4\over 1+\delta_{1k}} \left(1+{1\over k^2} \right)$ of $I_{1\times k}$ clearly comes from the second line in \eqref{kInstFull-apn}, which is independent of $N$, whereas the term containing $z-a_1$ requires expanding the first line in \eqref{kInstFull-apn}, and its generalization to general $N$ is obvious. Therefore,  for general $N$ we have
\begin{align} \label{N=1Again}
I_{1\times k} & \,  = \oint {dz \over 2\pi}
\prod_{k_a=0}^{k-1}
\prod_{j=1}^N {(z-a_j + k_a i )^2\over (z-a_j + k_a i)^2+1 } 
 \times
{1 \over k!} \left[ 
{4\over 1+\delta_{1k}} \left(1+{1\over k^2} \right) \right. \cr
& ~~~ + \left. \sum_{j=1}^N
{2 i (k+1)(k-1)^2 \over k(z-a_j+ki) (z-a_j+(k-1)i) (z-a_j)}
\right] \, , 
\end{align}
and the expression (after summing over $k!$ identical contributions) agrees with the general formula given in (\ref{eq:pq-tab}).  

\section{Instantons at higher order in $1/N$}
\label{highInst}

In this appendix we compute the instanton contributions to $\partial_m^2 \log Z \big\vert_{m=0}$ to $O(N^{-\frac92})$ in a small $g_\text{YM}$ expansion to subleading order for instantons $k=2,\dots 12$. The result matches the conjecture in \eqref{dlogZdm2More} for $\partial_m^2 \log Z \big\vert_{m=0}$ at finite $g_\text{YM}$ in terms of Eisenstein series, which generalizes the match found for the perturbative terms and the $k=1$ instanton in the main text. By explicitly performing the sums and products in $I_{p\times q}$ \eqref{eq:pq-tab} for many small values of $N$, we find that $I_{p\times q}$ can be expanded for small $a_i$ as
 \es{expansionApp}{
  I_{p\times q}(N,a_{ij}) &=  I_{p\times q} ^{(0)}(N) + I_{p\times q} ^{(2)}(N) C_2(a_{ij})+ \cdots \,,
 }
 where recall that $C_2$ is defined in \eqref{invars}. When $p=q$, which includes the one-instanton case \eqref{expansion}, we found closed form expressions for $I_{p\times q} ^{(0)}(N) $ and $I_{p\times q} ^{(2)}(N) $, but for $p\neq q$ we could only find recursion relations in $N$. In either case, these formulae can be expanded explicitly at large $N$. For instance, for $k=2$ we find the recursion relations:
\es{k2}{
&\left(-1600 N^2-7310 N-8256\right) I_{2\times 1}^{(0)}(N+2)+\left(1440 N^2+5859 N+5958\right) I_{2\times 1}^{(0)}(N+3)\\
&+(N+2) (160 N+491) I_{2\times 1}^{(0)}(N+1)=0\,,\qquad I_{2\times 1}^{(0)}(2)=-\frac{134}{27}\,,\qquad I_{2\times 1}^{(0)}(3)=-\frac{517}{81}, \\
}
and
\es{k222}{
&\left(-320 N^2-734 N-456\right)  I_{2\times 1}^{(2)}(N+2)+\left(288 N^2+1323 N+684\right) I_{2\times 1}^{(2)}(N+3)\\
&+N (32 N+51) I_{2\times 1}^{(2)}(N+1)=0\,,\qquad I_{2\times 1}^{(2)}(2)=-\frac{10}{243}\,,\qquad I_{2\times 1}^{(2)}(3)=-\frac{20}{729},\\
}
which we can solve in a large-$N$ expansion to get
\es{k22}{
&I_{2\times 1}^{(0)}=-5 \sqrt{\frac{2}{\pi }} \sqrt{N}+\frac{17 \sqrt{\frac{1}{N}}}{8 \sqrt{2 \pi }}+\frac{325 \left(\frac{1}{N}\right)^{3/2}}{1024 \sqrt{2 \pi }}+\frac{2155 \left(\frac{1}{N}\right)^{5/2}}{8192 \sqrt{2 \pi }}+\frac{1543605 \left(\frac{1}{N}\right)^{7/2}}{4194304 \sqrt{2 \pi }}+O(N^{-\frac92})\,,\\
&I_{2\times 1}^{(2)}=-\frac{15}{8 \sqrt{2 \pi } N^{5/2}}+\frac{255}{128 \sqrt{2 \pi } N^{7/2}}-\frac{11025}{16384 \sqrt{2 \pi } N^{9/2}}+\frac{478485}{131072 \sqrt{2
   \pi } N^{11/2}}+O(N^{-\frac{13}{2}})\,.
}
The cases $k=3,\dots,12$ are increasingly more complicated so we put them in an attached Mathematica file.

We can then use the expectation value $\langle C_2\rangle$ in \eqref{Expectations} with $\lambda=g_\text{YM}^2N$ to compute $\langle I_{p\times q}\rangle$ to $O(N^{-\frac92})$. The result in each case is consistent with the small $g_\text{YM}$ expansion to $O(g_\text{YM}^2)$ of
\es{ExpectationFinal}{
  &\left \langle  \partial_m^2 Z_\text{inst}^{p\times q} (m, a_{ij}) \right \rangle \bigg|_{m=0} =\frac{e^{ \frac{8pq \pi^2 }{g_\text{YM}^2} }}{1+\delta_{p,q}}\Big[ - \sqrt{N} \frac{16 K_1 (\frac{8pq \pi^2 }{ g_\text{YM}^2})}{g_\text{YM}}\left({\frac{p}{q}+\frac{q}{p}}\right)  +  \frac{2 K_2 (\frac{8pq \pi^2 }{ g_\text{YM}^2})}{g_\text{YM}\sqrt{N}}\left({\frac{p^2}{q^2}+\frac{q^2}{p^2}}\right)\\
 &+\frac{1}{32g_\text{YM}N^{\frac32}}\left[  -13 K_1 \left(\frac{8pq \pi^2 }{ g_\text{YM}^2}\right)\left({\frac{p}{q}+\frac{q}{p}}\right)+9 K_3 \left(\frac{8pq \pi^2 }{ g_\text{YM}^2}\right)\left({\frac{p^3}{q^3}+\frac{q^3}{p^3}}\right) \right] \\
 &+\frac{1}{128g_\text{YM}N^{\frac52}}\left[  -25  K_2 \left(\frac{8pq \pi^2 }{ g_\text{YM}^2}\right)\left({\frac{p^2}{q^2}+\frac{q^2}{p^2}}\right)+15  K_4 \left(\frac{8pq \pi^2 }{ g_\text{YM}^2}\right)\left({\frac{p^4}{q^4}+\frac{q^4}{p^4}}\right) \right] \\
 & +\frac{1}{g_\text{YM}N^{\frac72}}\left[\frac{1533 K_1\left(\frac{8pq \pi ^2}{g_\text{YM}^2}\right)}{16384 }\left({\frac{p}{q}+\frac{q}{p}}\right)-\frac{5355 K_3\left(\frac{8pq \pi ^2}{g_\text{YM}^2}\right)}{32768 }\left({\frac{p^3}{q^3}+\frac{q^3}{p^3}}\right)+\frac{2625
   K_5\left(\frac{8pq \pi ^2}{g_\text{YM}^2}\right)}{32768 }\left({\frac{p^5}{q^5}+\frac{q^5}{p^5}}\right)\right]\\
   &+ O(N^{-\frac92})\Big] \,,\\
 } 
which describes the $p\times q$ instanton terms in \eqref{dlogZdm2More}. This is a very nontrivial check of the conjectured finite $g_\text{YM}$ expression for $\partial_m^2 \log Z \big\vert_{m=0}$.

\bibliographystyle{ssg}
\bibliography{instanton}

\end{document}